\newcommand{\hide}[1]{}
\def\mathbi#1{\textbf{\textit #1}}
\newtheorem{definition}{Definition}
\newtheorem{lemma}{Lemma}
\newtheorem{theorem}{Theorem}
\newtheorem{observation}{Observation}
\begin{document}

\title{DisC Diversity:~Result~Diversification~based~on Dissimilarity and Coverage \titlenote{To appear at the 39th International Conference on Very Large Data Bases (VLDB), August 26-31, 2013, Riva del Garda, Trento, Italy.}}

\numberofauthors{2}

\author{
	\alignauthor
		Marina Drosou\\
    \affaddr{Computer Science Department}\\
    \affaddr{University of Ioannina, Greece}\\
    \email{mdrosou@cs.uoi.gr}
	\alignauthor
		Evaggelia Pitoura\\
    \affaddr{Computer Science Department}\\
    \affaddr{University of Ioannina, Greece}\\
    \email{pitoura@cs.uoi.gr}	
}

\maketitle

\begin{abstract}
Recently, result diversification has attracted a lot of attention as a means to improve the quality of results retrieved by user queries. In this paper, we propose a new, intuitive definition of diversity called DisC diversity. 
A DisC diverse subset of a query result contains objects such that each object in the result is represented by a similar object in the diverse subset and the objects in the diverse subset are dissimilar to each other. We show that locating a minimum DisC diverse subset is an NP-hard problem and provide heuristics for its approximation. We also propose adapting DisC diverse subsets to a different degree of diversification. We call this operation zooming. We present efficient implementations of our algorithms based on the M-tree, a spatial index structure, and experimentally evaluate their performance.
\end{abstract}

\section{Introduction}
\label{Introduction}

Result diversification has attracted considerable attention as a means of enhancing the quality of the query results presented to users (e.g.,\cite{DBLP:conf/icde/VeeSSBA08, DBLP:conf/www/ZieglerMKL05}). Consider, for example, a user who wants to buy a camera and submits a related query. A diverse result, i.e., a result containing various brands and models with different pixel counts and other technical characteristics is intuitively more informative than a homogeneous result containing only cameras with similar features.

There have been various definitions of diversity \cite{DBLP:journals/sigmod/DrosouP10}, based on (i)~\textit{content} (or \textit{similarity}), i.e., objects that are dissimilar to each other (e.g., \cite{DBLP:conf/www/ZieglerMKL05}), (ii)~\textit{novelty}, i.e., objects that contain new information when compared to what was previously presented (e.g., \cite{DBLP:conf/sigir/ClarkeKCVABM08}) and (iii)~\textit{semantic coverage}, i.e., objects that belong to different categories or topics (e.g., \cite{DBLP:conf/wsdm/AgrawalGHI09}). Most previous approaches rely on assigning a diversity score to each object and then selecting either the $k$ objects with the highest score for a given $k$ (e.g., \cite{koudas,sigmod12}) or the objects with score larger than some threshold (e.g., \cite{DBLP:conf/edbt/YuLA09}). 

In this paper, we address diversity through a different perspective. Let ${\cal P}$ be the set of objects in a query result. We consider  two objects $p_1$ and $p_2$ in  $\cal{P}$ to be similar, if $dist(p_1, p_2)$ $\leq$ $r$ for some distance function $dist$ and real number $r$, where $r$ is a tuning parameter that we call \emph{radius}. 
Given  ${\cal P}$, we select a representative subset $S$ $\subseteq$ ${\cal P}$ to be presented to the user such that: (i)~all objects in ${\cal P}$ are similar with at least one object in $S$  and (ii)~no two objects in $S$ are similar with each other. The first condition ensures that all objects in ${\cal P}$ are represented, or {\em covered}, by at least one object in the selected subset. The second condition ensures that the selected objects in $\cal{P}$ are dissimilar, or {\em independent}. We call the set $S$ {\em $r$-Dissimilar and Covering} subset or {\em $r$-DisC} diverse subset. 

In contrary to previous approaches to diversification, we aim at computing subsets of objects that contain objects that are {\em both} dissimilar with each other {\em and} cover the whole result set. Furthermore, instead of specifying a required size $k$ of the diverse set or a threshold, our tuning parameter $r$ explicitly expresses the degree of diversification and determines the size of the diverse set. Increasing $r$ results in smaller, more diverse subsets, while decreasing $r$ results in larger, less diverse subsets. We call these operations, {\em zooming-out} and {\em zooming-in} respectively. One can also zoom-in or zoom-out locally to a specific object in the presented result.  

As an example, consider searching for cities in Greece. Figure~\ref{fig:introgreece} shows the results of this query diversified based on geographical location for an initial radius (a), after zooming-in (b), zooming-out (c) and local zooming-in a specific city (d). As another example of local zooming in the case of categorical attributes, consider looking for cameras, where diversity refers to cameras with different features. Figure~\ref{fig:introcameras} depicts an initial most diverse result and the result of local zooming-in one individual camera in this result.

\begin{figure}[t]
	\vspace{-0.2cm}
	\centering
	\subfloat[Initial set.]{		
		\includegraphics[width=3.8cm]{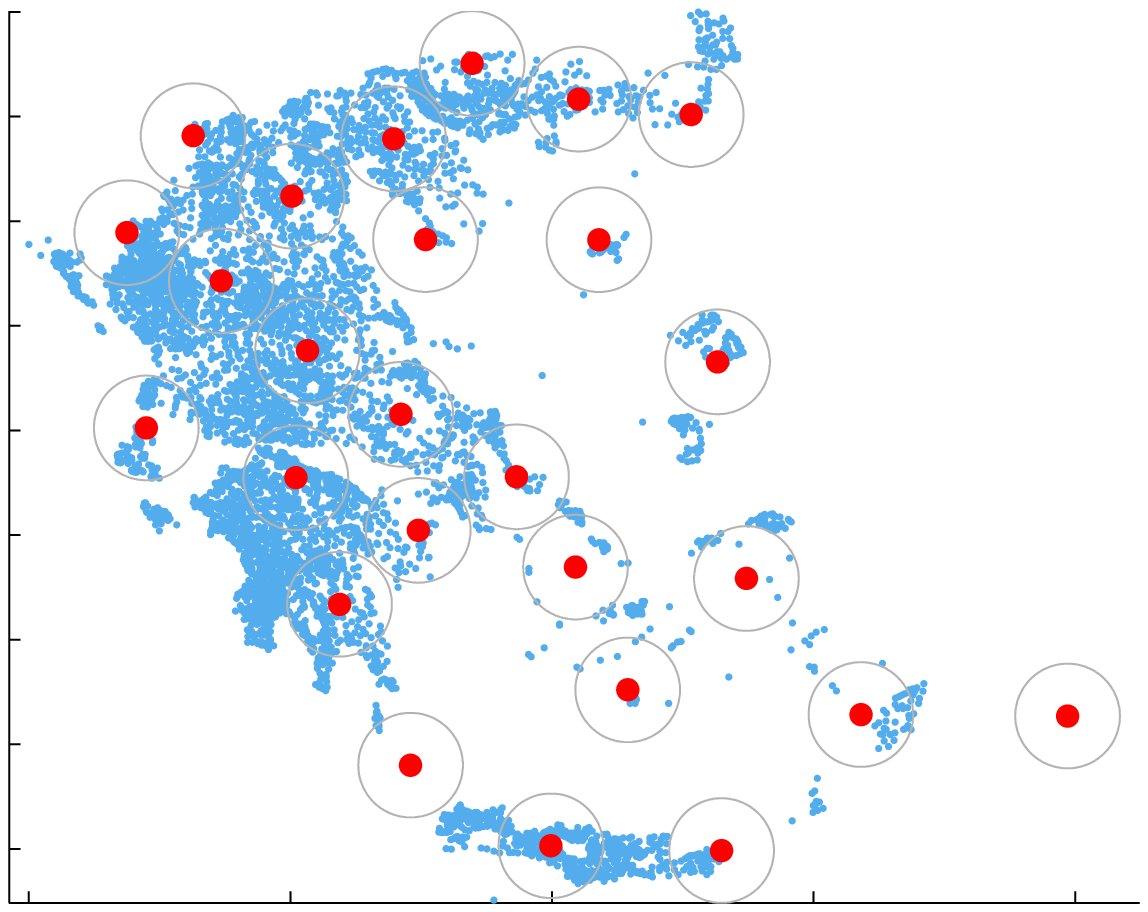}
	\label{fig:introgreece-initial}
	}
	\subfloat[Zooming-in.]{		
		\includegraphics[width=3.8cm]{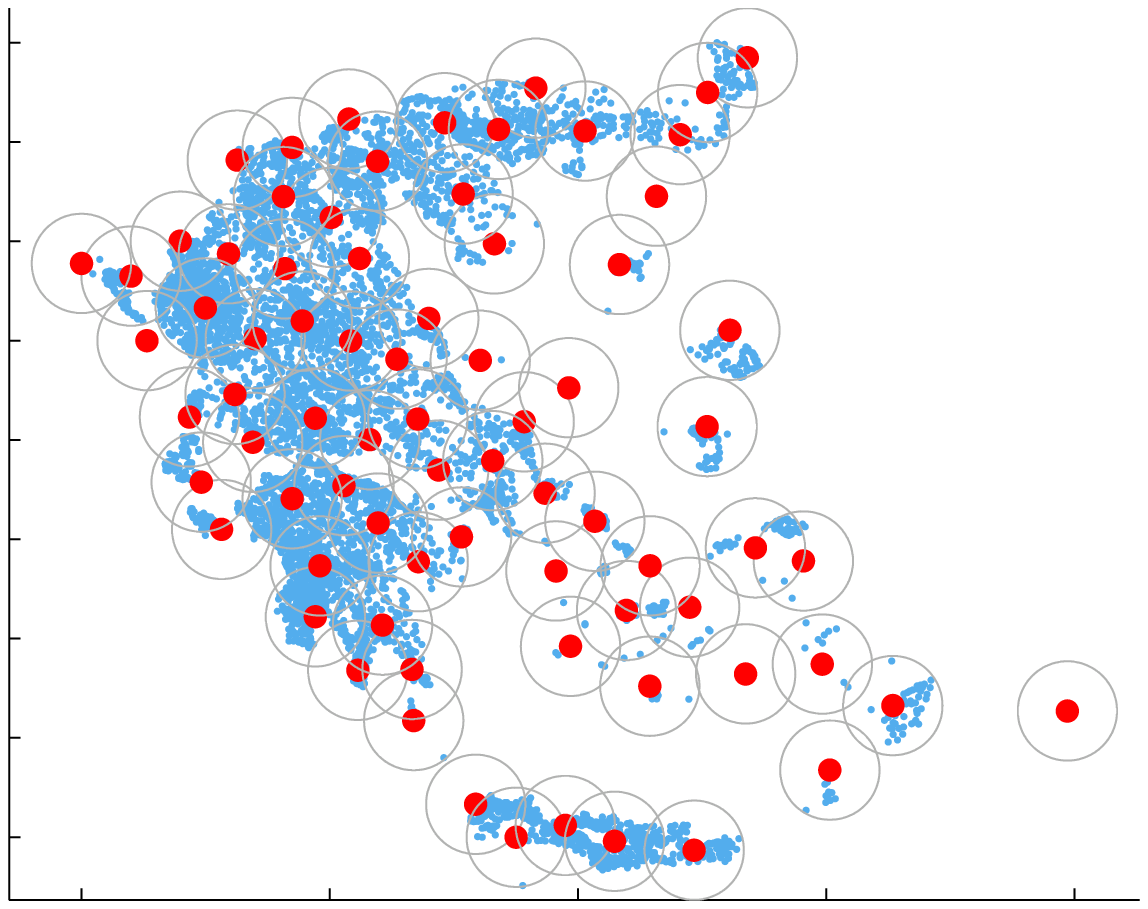}
	\label{fig:introgreece-zoomin}
	}\\
	\subfloat[Zooming-out.]{		
		\includegraphics[width=3.8cm]{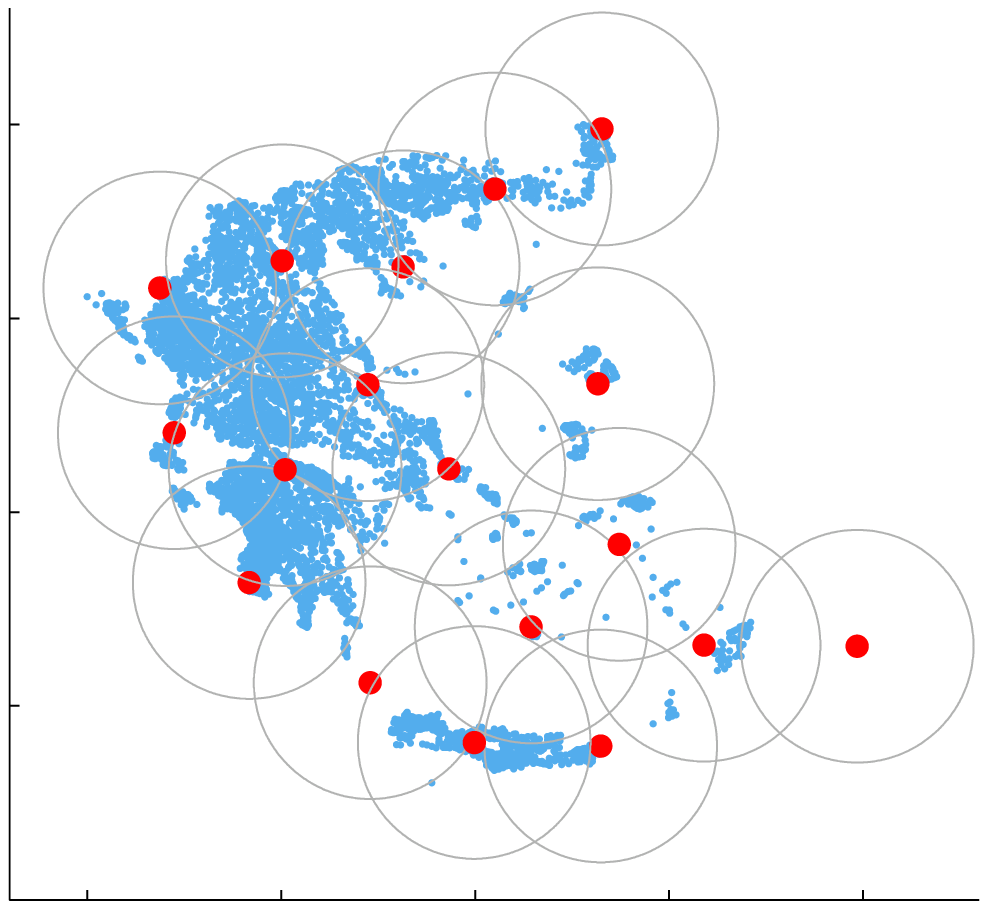}
	\label{fig:introgreece-zoomout}
	}
	\subfloat[Local zooming-in.]{		
		\includegraphics[width=3.8cm]{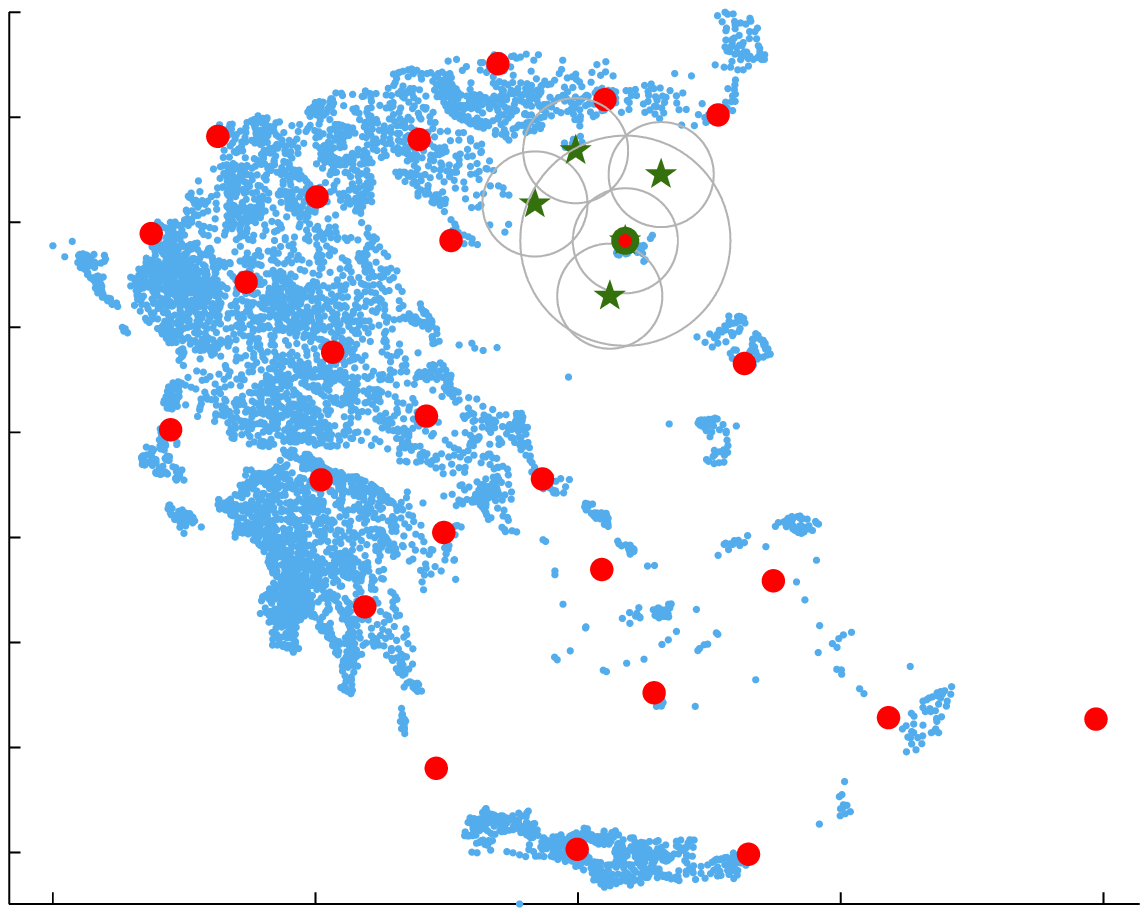}
	\label{fig:introgreece-localzoomin}
	}
	\vspace{-0.2cm}
	\caption{Zooming operations in action. Selected objects are shown in bold. Solid circles denote the radius $r$ of the selected objects.}
	\label{fig:introgreece}
	\vspace{-0.4cm}
\end{figure}

We formalize the problem of locating minimum DisC diverse subsets as an independent dominating set problem on graphs \cite{DBLP:journals/ipl/Halldorsson93a}. We provide a suite of heuristics for computing small DisC diverse subsets. We also consider the problem of adjusting the radius $r$. We explore the relation among DisC diverse subsets of different radii and provide algorithms for incrementally adapting a DisC diverse subset to a new radius. We provide theoretical upper bounds for the size of the diverse subsets produced by our algorithms for computing DisC diverse subsets as well as for their zooming counterparts. Since the crux of the efficiency of the proposed algorithms is locating neighbors, we take advantage of spatial data structures. In particular, we propose efficient algorithms based on the M-tree \cite{similaritysearch2006}.

We compare the quality of our approach to other diversification methods both analytically and qualitatively.  We also evaluate our various heuristics using both real and synthetic datasets. Our performance results show that the basic heuristic for computing dissimilar and covering subsets works faster than its greedy variation but produces larger sets. Relaxing the dissimilarity condition, although in theory could result in smaller sets, in our experiments does not reduce the size of the result considerably. Our incremental algorithms for zooming in or out to a different radius $r^{\prime}$, when compared to computing a DisC diverse subset for $r^{\prime}$ from scratch, produce sets of similar sizes and closer to what the user intuitively expects, while imposing a smaller computational cost.
Finally, we draw various conclusions for the M-tree implementation of these algorithms. 

Most often diversification is modeled as a bi-criteria problem with the dual goal of maximizing both the diversity and the relevance of the selected results. In this paper, we focus solely on diversity. Since we ``cover'' the whole dataset, each user may ``zoom-in'' to the area of the results that seems most relevant to her individual needs. Of course, many other approaches to integrating relevance with DisC diversity are possible; we discuss some of them in Section~\ref{Summary}.

In a nutshell, in this paper, we make the following contributions:
\begin{enumerate} \vspace{-0.2cm}
	\item [--] we propose a new, intuitive definition of diversity, called DisC diversity and compare it with other models,\vspace{-0.2cm}
	\item [--] we show that locating minimum DisC diverse subsets is an NP-hard problem and provide efficient heuristics
along with approximation bounds, \vspace{-0.2cm}
	\item [--] we introduce adaptive diversification through zooming-in and zooming-out and present algorithms for their incremental computation as well as corresponding theoretical bounds,\vspace{-0.2cm}
	\item [--] we provide M-tree tailored algorithms and experimentally evaluate their performance.\vspace{-0.2cm}
\end{enumerate}

The rest of the paper is structured as follows. Section~\ref{sec:DisC Diversity} introduces DisC diversity and  heuristics for computing small diverse subsets, Section~\ref{sec:Changing Focus} introduces adaptive diversification and Section~\ref{sec:Comparison} compares our approach with other diversification methods. In Section~\ref{sec:Indexing}, we employ the M-tree for the efficient implementation of our algorithms, while in Section~\ref{sec:Evaluation}, we present experimental results. Finally, Section~\ref{Related Work} presents related work and Section~\ref{Summary} concludes the paper.

\begin{figure}[t]
	\vspace{-0.1cm}
	\centering
		\includegraphics[width=8cm]{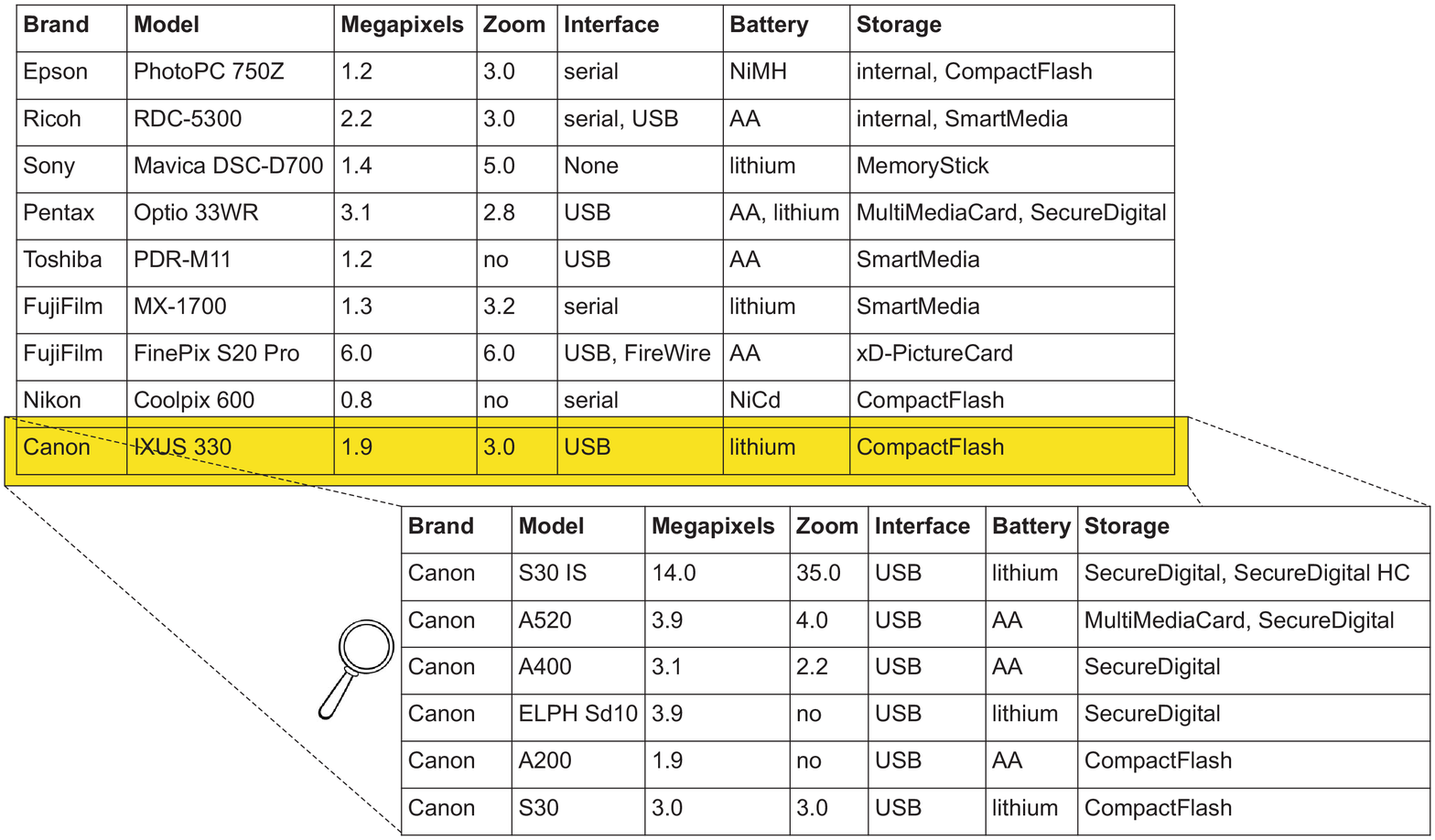}
	\vspace{-0.1cm}
	\caption{Zooming in a specific camera.}
	\label{fig:introcameras}
	\vspace{-0.2cm}
\end{figure}

\section{DisC Diversity}
\label{sec:DisC Diversity}
In this section, we first provide a formal definition of DisC diversity. We then show that locating a minimum DisC diverse set of objects is an NP-hard problem and present heuristics for locating approximate solutions.

\subsection{Definition of DisC Diversity}
Let $\mathcal{P}$ be a set of objects returned as the result of a user query. We want to select a representative subset $S$ of these objects such that each object from $\mathcal{P}$ is represented by a similar object in $S$ and the objects selected to be included in $S$ are dissimilar to each other.

We define similarity between two objects using a distance metric $dist$. For a real number $r$, $r$ $\geq$ 0, we use $N_r(p_i)$ to denote the set of \emph{neighbors} (or \emph{neighborhood}) of an object $p_i \in \mathcal{P}$, i.e., the objects lying at distance at most $r$ from $p_i$:
\vspace{-0.4cm}
\[
N_r(p_i) = \{p_j~|~p_i \neq p_j \wedge dist(p_i, p_j) \leq r\}
\vspace{-0.1cm}
\]
We use $N_r^{+}(p_i)$ to denote the set $N_r(p_i) \cup \{p_i\}$, i.e., the neighborhood of $p_i$ including $p_i$ itself. Objects in the neighborhood of $p_i$ are considered similar to $p_i$, while objects outside its neighborhood are considered dissimilar to $p_i$. We define an $r$-DisC diverse subset as follows:

\vspace{-0.1cm}
\begin{definition}
\label{def1}
{\textnormal{(}$r$-{\sc DisC Diverse Subset}\textnormal{)}}
\mbox{~} Let $\mathcal{P}$ be a set of objects and $r$, $r$ $\geq$ 0, a real number. A subset $S \subseteq \mathcal{P}$ is an r-Dissimilar-and-Covering diverse subset, or r-DisC diverse subset, of $\mathcal{P}$, if the following two conditions hold: (i)~(coverage condition) $\forall p_i \in \mathcal{P}$, $\exists p_j \in N_r^{+}(p_i)$, such that $p_j \in S$ and (ii)~(dissimilarity condition)  $\forall$  $p_i$, $p_j \in S$ with $i$ $\neq$ $j$,  it holds that $dist(p_i, p_j) > r$.
\end{definition}
\vspace{-0.1cm}

The first condition ensures that all objects in $\mathcal{P}$ are represented by at least one similar object in  $S$ and the second condition that the objects in  $S$ are dissimilar to each other. We call every object $p_i \in S$ an \emph{$r$-DisC diverse object} and $r$ the \emph{radius} of $S$. When the value of $r$ is clear from context, we simply refer to $r$-DisC diverse objects as diverse objects. Given $\mathcal{P}$, we would like to select the smallest number of diverse objects.   

\vspace{-0.1cm}
\begin{definition}
{\textnormal{(}\sc The Minimum} $r$-{\sc DisC Diverse Subset \\
Problem\textnormal{)}} Given a set $\mathcal{P}$ of objects and a radius $r$, find an $r$-DisC diverse subset  $S^*$ of $\mathcal{P}$, such that, for every $r$-DisC diverse subset $S$ of $\mathcal{P}$, it holds that $|S^*|$ $\leq$ $|S|$. 
\end{definition}
\vspace{-0.1cm}

In general, there may be more than one minimum $r$-DisC diverse subsets of $\mathcal{P}$ (see Figure~\ref{fig:no-unique}\subref{fig:geographical} for an example).

\begin{figure}
	\vspace{-0.3cm}
	\centering
	\subfloat[]{
		\includegraphics[width=2.8cm]{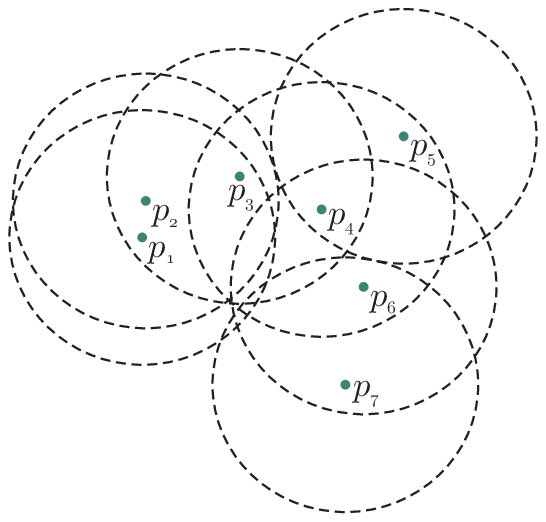}
		\label{fig:geographical}
	}
	\subfloat[]{
		\includegraphics[width=3.5cm]{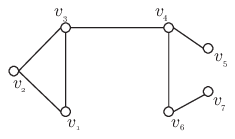}
		\label{fig:graph}
	}
	\vspace{-0.1cm}
	\caption{(a) Minimum $r$-DisC diverse subsets for the depicted objects: $\{p_1, p_4, p_7\}$, $\{p_2, p_4, p_7\}$, $\{p_3, p_5, p_6\}$, $\{p_3, p_5, p_7\}$ and (b) their graph representation.}
	\label{fig:no-unique}
	\vspace{-0.4cm}
\end{figure}

\subsection{Graph Representation and NP-hardness}
Let us consider the following graph representation of a set $\mathcal P$ of objects. Let $G_{\mathcal P, r}$ = ($V$, $E$) be an undirected graph such that there is a vertex $v_i \in V$ for each object $p_i \in \mathcal{P}$ and an edge $(v_i, v_j) \in E$, if and only if, $dist(p_i, p_j)$ $\leq$ $r$ for the corresponding objects $p_i$, $p_j$. An example is shown in Figure \ref{fig:no-unique}\subref{fig:graph}. 

Let us recall a couple of graph-related definitions. A {\em dominating set} $D$ for a graph $G$ is a subset of vertices of $G$ such that every vertex of $G$ not in $D$ is joined to at least one vertex of $D$ by some edge. An {\em independent set} $I$ for a graph $G$ is a set of vertices of $G$ such that for every two vertices in $I$, there is no edge connecting them. It is easy to see that a dominating set of $G_{\mathcal P, r}$ satisfies the covering conditions of Definition \ref{def1}, whereas an independent set of  $G_{\mathcal P, r}$ satisfies the dissimilarity conditions of Definition \ref{def1}. Thus:

\vspace{-0.1cm}
\begin{observation}
Solving the \textsc{Minimum $r$-DisC Diverse Subset Problem} for a set $\mathcal P$ is equivalent to finding a \textsc{Minimum Independent Dominating Set} of the corresponding graph $G_{\mathcal P, r}$.
\end{observation}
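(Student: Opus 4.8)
The plan is to establish the equivalence at the level of individual subsets first, and then lift it to the two minimization problems. Concretely, I would fix the natural correspondence $p_i \leftrightarrow v_i$ between objects of $\mathcal{P}$ and vertices of $G_{\mathcal{P}, r}$, which extends to a cardinality-preserving bijection between subsets $S \subseteq \mathcal{P}$ and vertex subsets $V_S \subseteq V$. The core claim to prove is that $S$ is an $r$-DisC diverse subset of $\mathcal{P}$ if and only if $V_S$ is an independent dominating set of $G_{\mathcal{P}, r}$. Granting this, since the bijection satisfies $|S| = |V_S|$, a subset attaining the minimum size among $r$-DisC diverse subsets corresponds exactly to a vertex set attaining the minimum size among independent dominating sets, and conversely; this yields the stated equivalence of the two problems.

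For the core claim I would verify the two conditions separately, exploiting the fact (noted just before the observation) that an edge $(v_i, v_j)$ exists precisely when $dist(p_i, p_j) \leq r$. For the dissimilarity condition, condition (ii) of Definition~\ref{def1} states that every pair of distinct objects in $S$ lies at distance strictly greater than $r$; by the edge rule this is exactly the statement that no two vertices of $V_S$ are joined by an edge, i.e., that $V_S$ is independent. For the coverage condition, condition (i) states that each $p_i \in \mathcal{P}$ admits some $p_j \in N_r^{+}(p_i) \cap S$. Here I would split on whether $p_i \in S$: if $p_i \in S$ we may take $p_j = p_i$, matching the fact that a vertex of $V_S$ trivially belongs to the dominating set; if $p_i \notin S$, then the witness $p_j$ lies in $N_r(p_i) \cap S$, i.e., $p_i$ has a neighbor in $V_S$, which is precisely the defining requirement of a dominating set for vertices outside the set. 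Reversing this case analysis turns any independent dominating set back into an $r$-DisC diverse subset, giving the biconditional.

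The argument is elementary, so I do not anticipate a genuine obstacle; the one point that deserves care is the treatment of self-coverage. The coverage condition is phrased using the \emph{closed} neighborhood $N_r^{+}(p_i) = N_r(p_i) \cup \{p_i\}$, whereas the standard definition of a dominating set only constrains vertices \emph{not} in $D$. These two formulations coincide exactly because including $p_i$ in its own closed neighborhood absorbs the ``trivially dominated'' case of vertices already in the set; overlooking this would produce a spurious mismatch between the object-level and graph-level statements. Once this alignment is made explicit, the two conditions translate term by term, and combining them with the cardinality-preserving bijection delivers the equivalence of the minimization problems.
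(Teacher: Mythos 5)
Your proposal is correct and follows essentially the same route as the paper: the paper simply observes that, under the edge rule $dist(p_i,p_j)\leq r$, the dissimilarity condition of Definition~\ref{def1} is vertex independence and the coverage condition is domination in $G_{\mathcal P, r}$, which is exactly your term-by-term translation. Your explicit handling of the closed-neighborhood/self-coverage point and the cardinality-preserving bijection just fills in details the paper leaves as ``easy to see.''
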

\vspace{-0.1cm}

The \textsc{Minimum Independent Dominating Set Problem} has been proven to be NP-hard \cite{DBLP:books/fm/GareyJ79}. 
The problem remains NP-hard even for special kinds of graphs, such as for unit disk graphs \cite{DBLP:journals/dm/ClarkCJ90}.
{\em  Unit disk graphs} are graphs whose vertices can be put in one to one correspondence with equisized circles in a plane such that two vertices are joined by an edge, if and only if, the corresponding circles intersect. 
$G_{\mathcal P, r}$ is a unit disk graph for Euclidean distances.

In the following, we use the terms dominance and coverage, as well as, independence and dissimilarity interchangeably. In particular, two objects  $p_i$ and $p_j$ are independent, if $dist(p_i, p_j) > r$. We also say that an object {\em covers} all objects in its neighborhood. We next present some useful properties that relate the coverage (i.e., dominance) and dissimilarity (i.e., independence) conditions. A {\em maximal independent set} is an independent set such that adding any other vertex to the set forces the set to contain an edge, that is, it is an independent set that is not a subset of any other independent set. It is known that:

\vspace{-0.1cm}
\begin{lemma}
\label{lem:maxind}
An independent set of a graph is maximal, if and only if, it is dominating.
\end{lemma}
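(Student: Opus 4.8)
The plan is to prove both directions of the biconditional by contraposition, exploiting the single pivotal fact that an independent set $I$ fails to be maximal exactly when some vertex can be adjoined to it while preserving independence. First I would fix an independent set $I$ of $G = (V, E)$ and record the hinge observation: since $I$ is already independent, a vertex $v \notin I$ yields a strictly larger independent set $I \cup \{v\}$ if and only if $v$ has no neighbor in $I$. Everything else follows by applying this equivalence in each direction.

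For the forward direction, I assume $I$ is maximal and suppose toward a contradiction that $I$ is not dominating. Then, by the definition of a dominating set, some vertex $v \notin I$ has no neighbor in $I$. The hinge observation then gives that $I \cup \{v\}$ is independent, contradicting the maximality of $I$. Hence every vertex outside $I$ has a neighbor in $I$, which is precisely the statement that $I$ is dominating.

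For the converse, I assume $I$ is independent and dominating, and suppose toward a contradiction that $I$ is not maximal. Then there is a vertex $v \notin I$ with $I \cup \{v\}$ independent, so by the hinge observation $v$ has no neighbor in $I$. But the domination of $I$ forces every vertex outside $I$, and in particular $v$, to be adjacent to some vertex of $I$ --- a contradiction. Therefore $I$ is maximal.

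I do not expect a serious obstacle here: the statement is a standard equivalence, and both directions collapse onto the same one-line observation about when a vertex may be added. The only point requiring minor care is the bookkeeping at the boundary --- applying the definitions of \emph{dominating} (``every vertex not in $D$ is joined to a vertex of $D$'') and \emph{independent} to vertices strictly outside $I$ --- but no case analysis or calculation is needed.
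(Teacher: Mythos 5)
Your proof is correct. Note that the paper itself offers no proof of this lemma --- it is stated as a known fact (it is a classical result, essentially due to Berge) --- so there is no argument to compare against; your two-direction proof via the ``hinge'' observation (for independent $I$ and $v \notin I$, the set $I \cup \{v\}$ is independent exactly when $v$ has no neighbor in $I$) is the standard textbook argument, and it correctly uses the paper's definitions of maximal independent set and dominating set.
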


\vspace{-0.2cm}
From Lemma~\ref{lem:maxind}, we conclude that:

\vspace{-0.2cm}
\begin{observation}
A minimum maximal independent set is also a minimum independent dominating set. 
\end{observation}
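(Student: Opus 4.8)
The plan is to derive this directly from Lemma~\ref{lem:maxind}, which asserts the biconditional that an independent set is maximal if and only if it is dominating. First I would observe that this biconditional says precisely that, \emph{among independent sets}, being maximal and being dominating are the same property. Consequently, the family of all maximal independent sets of $G$ coincides, as a collection of vertex subsets, with the family of all independent dominating sets of $G$. The whole statement is then a consequence of the fact that two sets coincide if and only if their minimum-cardinality elements coincide.

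Given that the two families are identical, minimizing cardinality over one is literally the same optimization problem as minimizing over the other. To spell this out, I would let $S$ be a minimum maximal independent set and verify both directions explicitly. Since $S$ is independent and maximal, Lemma~\ref{lem:maxind} gives that $S$ is dominating, hence $S$ is an independent dominating set. Conversely, for any independent dominating set $U$, the same lemma (now using the other direction of the biconditional) shows that $U$ is maximal, so $U$ is itself a maximal independent set; by minimality of $S$ within that family, $|S| \leq |U|$. As $U$ ranges over all independent dominating sets, this proves that $S$ has minimum cardinality among them, i.e., $S$ is a minimum independent dominating set.

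There is essentially no obstacle here: the entire content is a repackaging of the set equality supplied by Lemma~\ref{lem:maxind}. The only point requiring a moment of care is to make sure both directions of the lemma's ``if and only if'' are invoked---one direction to show that the chosen $S$ is dominating, and the other to certify that every competing independent dominating set is itself maximal and therefore cannot be smaller than $S$. Since the argument is symmetric, it simultaneously yields the converse inclusion, confirming that a minimum independent dominating set is likewise a minimum maximal independent set.
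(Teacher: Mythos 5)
Your proposal is correct and follows exactly the paper's route: the paper derives this observation directly from Lemma~\ref{lem:maxind} (stating only ``From Lemma~\ref{lem:maxind}, we conclude that\ldots''), and your argument simply spells out the details of that derivation, correctly using both directions of the biconditional to identify the family of maximal independent sets with the family of independent dominating sets. Nothing is missing; your write-up just makes explicit what the paper leaves implicit.
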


\vspace{-0.2cm}
However,
\begin{observation}
\label{co}
A minimum dominating set is not necessarily independent.
\end{observation}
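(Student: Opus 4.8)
The plan is to establish this ``not necessarily'' claim by a single explicit counterexample: a set of objects (equivalently, a graph $G_{\mathcal{P},r}$) that admits a minimum dominating set which fails the dissimilarity/independence condition. Since the statement is an existence claim, one carefully chosen instance suffices and no general argument is needed. To make the point as sharp as possible, I would aim not merely for \emph{some} non-independent minimum dominating set, but for an instance whose \emph{unique} minimum dominating set is non-independent, so that independence cannot be recovered without increasing the size.

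The construction I would use is a ``double star'': two objects $x$ and $y$ with $dist(x,y)\le r$, where $x$ additionally covers two objects $p_1,p_2$ and $y$ additionally covers two objects $q_1,q_2$, and every remaining pair of objects is at distance strictly greater than $r$ (so $p_1,p_2$ are mutually dissimilar, as are $q_1,q_2$, and the two ``sides'' are dissimilar from each other). This is realizable with six points in the Euclidean plane, hence as a genuine $G_{\mathcal{P},r}$, by placing $p_1,p_2$ on opposite sides of $x$ just inside radius $r$ and symmetrically for $q_1,q_2$.

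I would then verify three facts about $D=\{x,y\}$. First, $D$ is covering/dominating: $x\in N_r^{+}(p_1)\cap N_r^{+}(p_2)$ and $y\in N_r^{+}(q_1)\cap N_r^{+}(q_2)$, and $x,y$ cover themselves and each other, so every object is covered. Second, $D$ has minimum size: no single object covers all six, since the largest neighborhood $N_r^{+}(x)$ (or $N_r^{+}(y)$) contains only four objects, so any dominating set has size at least two, which $|D|=2$ attains. Third, $D$ is not independent, since $dist(x,y)\le r$. These three checks already prove the observation.

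Finally, to sharpen the contrast with the independent dominating sets of the previous observations, I would argue that $\{x,y\}$ is in fact the only dominating set of size two: covering both $p_1$ and $p_2$ with a single chosen object forces the choice of $x$, and symmetrically $y$ is forced; any choice avoiding $x$ must spend two objects on $\{p_1,p_2\}$ and cannot then cover the $q$-side. Consequently every minimum dominating set of this instance is non-independent, and the smallest independent dominating set has size three (e.g.\ $\{x,q_1,q_2\}$), exhibiting a strict gap between the two notions. I expect no real obstacle here; the only step requiring a little care is this companion uniqueness claim, which reduces to a short finite check over the four ``leaf'' coverage constraints rather than any substantive argument.
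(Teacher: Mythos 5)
Your proposal is correct and takes essentially the same route as the paper: the paper proves the observation by exhibiting exactly this double-star counterexample (Figure~\ref{fig:mds-mids}), where $v_2, v_5$ play the roles of your $x, y$, the minimum dominating set $\{v_2, v_5\}$ of size $2$ is non-independent, and the minimum independent dominating set $\{v_2, v_4, v_6\}$ has size $3$, matching your $\{x, q_1, q_2\}$. Your added uniqueness argument is a small strengthening beyond what the paper states, but it does not change the approach.
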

\vspace{-0.1cm}

For example, in Figure~\ref{fig:mds-mids}, the minimum dominating set of the depicted objects is of size 2, while the minimum independent dominating set is of size 3.

\begin{figure}
	\vspace{-0.3cm}
	\centering
	\subfloat[]{
		\includegraphics[width=2.1cm]{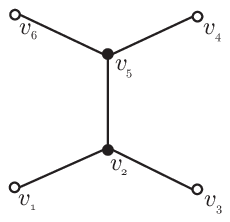}
		\label{fig:mds}
	}\hspace{1.0cm}
	\subfloat[]{
		\includegraphics[width=2.1cm]{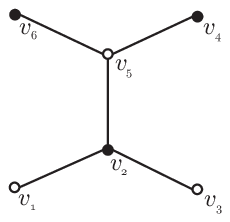}
		\label{fig:mids}
	}
	\vspace{-0.1cm}
	\caption{(a) Minimum dominating set ($\{v_2, v_5\}$) and (b) a minimum independent dominating set ($\{v_2, v_4, v_6\}$) for the depicted graph.}
	\label{fig:mds-mids}
	\vspace{-0.4cm}
\end{figure}

\subsection{Computing DisC Diverse Objects}
We consider first a baseline algorithm for computing a DisC diverse subset $S$ of $\mathcal{P}$. For presentation convenience, let us call \emph{black} the objects of $\mathcal{P}$ that are in $S$, \emph{grey} the objects covered by $S$ and \emph{white} the objects that are neither black nor grey. Initially, $S$ is empty and all objects are white. The algorithm proceeds as follows: until there are no more white objects, it selects an arbitrary white object $p_i$, colors $p_i$ black and colors all objects in $N_r(p_i)$ grey. We call this algorithm \texttt{Basic-DisC}. 

The produced set $S$ is clearly an independent set, since once an object enters $S$, all its neighbors become grey and thus are withdrawn from consideration. It is also a maximal independent set, since at the end there are only grey objects left, thus adding any of them to $S$ would violate the independence of $S$. From Lemma~\ref{lem:maxind}, the set $S$ produced by \texttt{Basic-DisC} is an $r$-DisC diverse subset. $S$ is not necessarily a minimum $r$-DisC diverse subset. However, its size is related to the size of any minimum $r$-DisC diverse subset $S^*$ as follows:

\vspace{-0.1cm}
\begin{theorem}
\label{th1}
Let $B$ be the maximum number of independent neighbors of any object in $\mathcal{P}$. Any $r$-DisC diverse subset  $S$ of $\mathcal{P}$ is at most $B$ times larger than any minimum $r$-DisC diverse subset $S^*$.
\end{theorem}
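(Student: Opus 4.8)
The plan is to establish the bound $|S| \leq B \, |S^*|$ by building a many-to-one assignment from the vertices of $S$ to the vertices of $S^*$ and then bounding how many vertices of $S$ can be assigned to a single vertex of $S^*$.

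First I would exploit the fact that $S^*$, being an $r$-DisC diverse subset, is a dominating set of $G_{\mathcal{P},r}$ (the coverage condition of Definition~\ref{def1}). Hence every object $p \in \mathcal{P}$ satisfies $p \in N_r^{+}(s^*)$ for some $s^* \in S^*$. In particular this holds for every $s \in S$, so I can define a map $f \colon S \to S^*$ that sends each $s$ to some fixed $s^* \in S^*$ with $s \in N_r^{+}(s^*)$, breaking ties arbitrarily if several such $s^*$ exist. The preimages $f^{-1}(s^*)$, over $s^* \in S^*$, then partition $S$, giving $|S| = \sum_{s^* \in S^*} |f^{-1}(s^*)|$.

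The crux is to show $|f^{-1}(s^*)| \leq B$ for every $s^* \in S^*$. By construction each vertex of $f^{-1}(s^*)$ lies in $N_r^{+}(s^*)$, and since $f^{-1}(s^*) \subseteq S$ these vertices are pairwise independent by the dissimilarity condition of $S$. I would split into two cases. If $s^* \in S$, then no other vertex of $S$ can lie in $N_r(s^*)$ without violating the independence of $S$; hence $f^{-1}(s^*) = \{s^*\}$, of size $1 \leq B$. If $s^* \notin S$, then every vertex of $f^{-1}(s^*)$ lies in $N_r(s^*)$ and these form a pairwise-independent subset of the neighborhood of $s^*$; by the definition of $B$ as the maximum number of independent neighbors of any object, $|f^{-1}(s^*)| \leq B$. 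Combining this with the partition identity yields $|S| = \sum_{s^* \in S^*} |f^{-1}(s^*)| \leq B \, |S^*|$.

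I expect the only delicate point to be the bookkeeping in the case analysis: ensuring that $f$ is well-defined (which relies precisely on $S^*$ being dominating) and handling the boundary case $s^* \in S$, where $s^*$ covers itself and the inclusive neighborhood $N_r^{+}$ must be used rather than $N_r$. Once these are settled the counting step is routine, and the argument applies to any $r$-DisC diverse subset $S$, including the output of \texttt{Basic-DisC}.
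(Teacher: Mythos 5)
Your proof is correct and is essentially the paper's own argument, made explicit: the paper's one-line proof (``since $S$ is independent, any object in $S^*$ can cover at most $B$ objects in $S$'') is exactly your charging map $f$ from $S$ to covering objects in $S^*$, with the preimage bound $|f^{-1}(s^*)| \leq B$. Your careful handling of well-definedness and of the boundary case $s^* \in S$ is sound bookkeeping that the paper leaves implicit, but it does not change the approach.
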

\begin{proof}
Since $S$ is an independent set, any object in $S^*$ can cover at most $B$ objects in $S$ and thus $|S|$ $\leq$ $B|S^*|$.\end{proof}
\vspace{-0.1cm}

The value of $B$ depends on the distance metric used and also on the dimensionality $d$ of the data space. For many distance metrics $B$ is a constant. Next, we show how $B$ is bounded for specific combinations of the distance metric and data dimensionality $d$.

\vspace{-0.1cm}
\begin{lemma}
\label{disk5}
If $dist$ is the Euclidean distance and $d = 2$, each object $p_i$ in $\mathcal{P}$ has at most 5 neighbors that are independent from each other.
\end{lemma}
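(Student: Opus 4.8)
The plan is to use an elementary angular packing argument around the point $p_i$. Fix $p_i$ and suppose $q_1, \dots, q_m$ are neighbors of $p_i$ that are pairwise independent. Since each $q_j$ is a neighbor, they all lie in the closed disk of radius $r$ centered at $p_i$, i.e. $dist(p_i, q_j) \le r$ for all $j$, while pairwise independence means $dist(q_j, q_k) > r$ for all $j \neq k$. The goal is to show $m \le 5$.

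First I would establish an angular separation claim: for any two of these neighbors $q_j, q_k$, the angle $\theta = \angle q_j p_i q_k$ subtended at $p_i$ must satisfy $\theta > 60^\circ$. To prove this, set $a = dist(p_i, q_j)$ and $b = dist(p_i, q_k)$, assume without loss of generality $a \ge b > 0$, and apply the law of cosines: $dist(q_j, q_k)^2 = a^2 + b^2 - 2ab\cos\theta$. If instead $\theta \le 60^\circ$, then $\cos\theta \ge 1/2$, so $dist(q_j, q_k)^2 \le a^2 + b^2 - ab = a^2 - b(a-b) \le a^2 \le r^2$, giving $dist(q_j, q_k) \le r$. This contradicts independence, so every pair of the $q_j$ must subtend an angle strictly greater than $60^\circ$ at $p_i$.

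I would then finish by a pigeonhole argument on the polar angles. Order the neighbors by their angular coordinate around $p_i$ and let $g_1, \dots, g_m$ be the consecutive angular gaps; these are nonnegative and sum to $360^\circ$. If $m \ge 6$, the average gap is at most $60^\circ$, so some consecutive gap $g_t \le 60^\circ$. Because $g_t \le 60^\circ < 180^\circ$, the two neighbors bounding this gap subtend an angle equal to $g_t \le 60^\circ$ at $p_i$, contradicting the separation claim. Hence $m \le 5$, which is exactly the bound asserted.

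The parts requiring the most care are geometric rather than computational. I must be careful that the angle between two neighbor directions genuinely equals the consecutive gap, which relies on that gap being at most $180^\circ$ (automatically true for the offending small gap). I also need to handle the boundary case $\theta = 60^\circ$ cleanly: there the law-of-cosines bound only yields $dist(q_j, q_k) \le r$, but this is still enough to break independence, since independence is the strict condition $dist(q_j, q_k) > r$. The law-of-cosines estimate itself is routine once the reduction $a^2 + b^2 - ab \le a^2$ (using $a \ge b$) is in place.
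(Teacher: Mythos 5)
Your proof is correct and follows essentially the same route as the paper's: establish that two independent neighbors must subtend an angle greater than $\frac{\pi}{3}$ at $p_i$ (the paper argues $dist(p_1,p_2) \leq \max\{dist(p_i,p_1), dist(p_i,p_2)\} \leq r$ otherwise, which is exactly your law-of-cosines estimate $a^2+b^2-ab \leq a^2$), and then conclude that at most $(2\pi/\frac{\pi}{3})-1 = 5$ such neighbors fit around $p_i$. Your write-up merely makes explicit the two steps the paper leaves terse, namely the cosine computation and the pigeonhole argument on angular gaps.
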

\begin{proof}
Let $p_1$, $p_2$ be two independent neighbors of $p_i$. Then, it must hold that $\angle p_1p_ip_2$ is larger than $\frac{\pi}{3}$. Otherwise, $dist(p_1, p_2)$ $\leq$ $\max\{dist(p_i, p_1), dist(p_i, p_2)\}$ $\leq r$ which contradicts the independence of $p_1$ and $p_2$. Therefore, $p_i$ can have at most $(2\pi / \frac{\pi}{3}) - 1$ $= 5$ independent neighbors.
\end{proof}
\vspace{-0.1cm}

\vspace{-0.1cm}
\begin{lemma}
\label{lem:manhattan2}
If dist is the Manhattan distance and $d = 2$, each object $p_i$ in $\mathcal{P}$ has at most 7 neighbors that are independent from each other.
\end{lemma}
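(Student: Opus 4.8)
The plan is to mirror the proof of Lemma~\ref{disk5}: reduce the count to an angular packing argument, only with a smaller angular threshold suited to the Manhattan ball. Placing $p_i$ at the origin, I would first establish the geometric claim that \emph{any two Manhattan-independent neighbors $p_1,p_2$ of $p_i$ subtend a Euclidean angle $\angle p_1 p_i p_2 > \pi/4$ at $p_i$}. Granting this, I order all pairwise-independent neighbors by their angular coordinate around $p_i$; the consecutive angular gaps are then each $> \pi/4$ and sum to $2\pi$, so their number is strictly less than $2\pi/(\pi/4) = 8$, giving the claimed bound of at most $7$ (of the same form $(2\pi/(\pi/4))-1 = 7$ used in Lemma~\ref{disk5}).

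The crux is the angular claim, which I would prove in contrapositive form: if $\angle p_1 p_i p_2 \le \pi/4$ while $dist(p_i,p_1) \le r$ and $dist(p_i,p_2) \le r$ in the Manhattan metric, then $dist(p_1,p_2) \le r$, so $p_1,p_2$ are not independent. Unlike the Euclidean case, the Manhattan unit ball is a diamond rather than a disk, so there is no rotation-invariant ``largest side opposite largest angle'' shortcut and the worst case must be located explicitly. I would exploit the $8$-fold dihedral symmetry of the diamond to restrict $p_1$ to a single sector, and argue (this reduction is the delicate point) that a minimum-angle independent pair can be taken with both points on the boundary of the diamond, since moving a neighbor radially outward to the boundary fixes its angle while keeping it a neighbor. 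This reduces the claim to points lying on the edges. On a single edge, parametrized as $(r-t,t)$ with $t \in [0,r]$, two points $(r-t_1,t_1),(r-t_2,t_2)$ have Manhattan distance $2|t_2-t_1|$, so independence forces $|t_2-t_1| > r/2$; a short computation then shows the subtended angle exceeds $\pi/4$, with the infimum $\pi/4$ only approached in a boundary limit. The genuinely fiddly part is the bookkeeping over the remaining configurations (the two points on different edges, or straddling a vertex) together with the rigorous justification of the reduction to the boundary; this is where the rotational non-invariance bites and is the main obstacle.

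A cleaner alternative that I would actually prefer avoids the angle estimates entirely. The invertible linear map $(x,y) \mapsto (x+y,\,x-y)$ has the property that the Manhattan distance between two points equals the Chebyshev ($\ell_\infty$) distance between their images, and it sends the radius-$r$ diamond to the axis-aligned square $[-r,r]^2$. Under this map a set of pairwise-independent neighbors of $p_i$ becomes a set of points inside $[-r,r]^2$ whose pairwise $\ell_\infty$-distance exceeds $r$. Covering $[-r,r]^2$ by the four closed sub-squares of side $r$ obtained by splitting each coordinate at $0$, any two points falling in the same sub-square are within $\ell_\infty$-distance $r$ and hence not independent; so each sub-square contains at most one of the points, bounding their number by $4$. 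This proves the lemma \emph{a fortiori} (indeed $4 \le 7$) and in fact more strongly, so I would use the angular version only to parallel the statement style of Lemma~\ref{disk5}.
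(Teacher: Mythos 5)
Your preferred (second) argument is correct and complete, and it takes a genuinely different route from the paper. The paper's proof is exactly your first sketch: it argues by contradiction that two independent Manhattan neighbors of $p_i$ subtend a Euclidean angle greater than $\pi/4$, invoking the cosine law after assuming ``without loss of generality'' that $p_1$ lies on the vertical axis, and then concludes with the packing bound $(2\pi/\frac{\pi}{4})-1=7$ in the style of Lemma~\ref{disk5}. Note that this WLOG rotation is precisely the delicate point you flagged: the Manhattan ball is not rotation invariant, so the paper's write-up leans on the very step your sketch identifies as the main obstacle. Your $\ell_1\!\to\!\ell_\infty$ change of coordinates $(x,y)\mapsto(x+y,\,x-y)$ followed by the four-sub-square pigeonhole avoids all angular reasoning, is fully rigorous, and in fact yields the stronger, tight bound of $4$ independent neighbors (attained by the four vertices $(\pm r,0)$, $(0,\pm r)$ of the diamond); in particular, it would sharpen the approximation factor $B$ of Theorem~\ref{th1} for the Manhattan case from $7$ to $4$. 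What the paper's angular template buys in return is uniformity: the same argument shape serves the Euclidean case (Lemma~\ref{disk5}) and reappears in the annulus-partitioning proofs of Lemma~\ref{lem:independent-neighbors}, whereas the linear equivalence of the $\ell_1$ and $\ell_\infty$ balls that you exploit is a two-dimensional accident and does not generalize to higher dimensions.
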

\begin{proof}
The proof can be found in the Appendix.
\end{proof}
\vspace{-0.1cm}

For $d = 3$ and the Euclidean distance, it can be shown that each object $p_i$ in $\mathcal{P}$ has at most 24 neighbors that are independent from each other. This can be shown using packing techniques and properties of solid angles \cite{tr}.

We now consider the following intuitive greedy variation of \texttt{Basic-DisC}, that we call \texttt{Greedy-DisC}. Instead of selecting white objects arbitrarily at each step, we select the white object with the largest number of white neighbors, that is, the white object that covers the largest number of uncovered objects. \texttt{Greedy-DisC} is shown in Algorithm~\ref{alg:mids}, where $N_r^W(p_i)$ is the set of the white neighbors of object $p_i$. 

\begin{algorithm}[t]
	\caption{Greedy-DisC}
	\label{alg:mids}	
	\small
	\begin{algorithmic}[1]
		\REQUIRE{A set of objects $\mathcal{P}$ and a radius $r$.}
		\ENSURE{An $r$-DisC diverse subset $S$ of $\mathcal{P}$.}
		\vspace{0.1cm}
		\hrule
		\vspace{0.1cm}
		
		\STATE $S \leftarrow \emptyset$
		
		\FORALL {$p_i \in \mathcal{P}$}
			\STATE Color $p_i$ white			
		\ENDFOR
		
		\WHILE{there exist white objects}	
			\STATE Select the white object $p_i$ with the largest $\left|N_r^W(p_i)\right|$
			\STATE $S = S \cup \{p_i\}$
			\STATE Color $p_i$ black
			\FORALL {$p_j \in N_r^W(p_i)$}
				\STATE Color $p_j$ grey			
			\ENDFOR
		\ENDWHILE
				
 		\RETURN $S$
	\end{algorithmic}
\end{algorithm}

While the size of the $r$-DisC diverse subset $S$ produced by \texttt{Greedy-DisC} is expected to be smaller than that of the subset produced by \texttt{Basic-DisC}, the fact that we consider for inclusion in $S$ only white, i.e., independent, objects may still not reduce the size of $S$ as much as expected. From Observation \ref{co}, it is possible that an independent covering set is larger than a covering set that also includes dependent objects. For example, consider the nodes (or equivalently the corresponding objects) in Figure \ref{fig:mds-mids}. Assume that object $v_2$ is inserted in $S$ first, resulting in objects $v_1$, $v_3$ and $v_5$ becoming grey. Then, we need two more objects, namely, $v_4$ and $v_6$, for covering all objects. However, if we consider for inclusion grey objects as well, then $v_5$ can join $S$, resulting in a smaller covering set.

Motivated by this observation, we also define $r$-C diverse subsets that satisfy only the coverage condition of Definition \ref{def1} and modify \texttt{Greedy-DisC} accordingly to compute $r$-C diverse sets. The only change required is that in line 6 of Algorithm \ref{alg:mids}, we select both white and grey objects. This allows us to select at each step the object that covers the largest possible number of uncovered objects, even if this object is grey. We call this variation \texttt{Greedy-C}. In the case of \texttt{Greedy-C}, we prove the following bound for the size of the produced $r$-C diverse subset $S$:

\vspace{-0.1cm}
\begin{theorem}
\label{thm:amortized}
Let $\Delta$ be the maximum number of neighbors of any object in $\mathcal{P}$. The $r$-C diverse subset produced by \textnormal{\texttt{Greedy-C}} is at most $\ln\Delta$ times larger than the minimum $r$-DisC diverse subset $S^*$.
\end{theorem}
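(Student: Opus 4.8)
The plan is to recognize that \texttt{Greedy-C} is precisely the classical greedy algorithm for the \textsc{Minimum Set Cover} problem, and then to adapt its standard amortized analysis so that the comparison is made directly against $S^*$. Concretely, I would set up the set cover instance whose universe is $\mathcal{P}$ and whose candidate sets are the closed neighborhoods $\{N_r^{+}(p) \mid p \in \mathcal{P}\}$. Covering this universe by such sets is exactly finding a dominating set of $G_{\mathcal{P},r}$, and at each iteration \texttt{Greedy-C} selects the object whose closed neighborhood covers the largest number of still-uncovered (white) objects, which is the greedy selection rule. Each candidate set has size at most $\Delta + 1$, since an object has at most $\Delta$ neighbors plus itself.

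The point that lets me compare against $S^*$ directly, rather than against the minimum cover, is that $S^*$ is itself a feasible cover: being a minimum $r$-DisC diverse subset, $S^*$ is a maximal independent set, hence dominating by Lemma~\ref{lem:maxind}, so $\{N_r^{+}(p) \mid p \in S^*\}$ covers all of $\mathcal{P}$. I would therefore use $S^*$ as the reference solution inside the charging argument, which avoids reasoning about the minimum dominating set as a separate object.

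For the charging argument I would assign to every object $p \in \mathcal{P}$ a price $1/n_p$, where $n_p$ is the number of objects that turn from white at the iteration in which $p$ first becomes covered; then $|S| = \sum_{p \in \mathcal{P}} \mathrm{price}(p)$, since each selection contributes a total of $1$ spread over the objects it newly covers. Next I would assign each object to one reference set $N_r^{+}(p^*)$, $p^* \in S^*$, that contains it, and bound the total price charged to a single reference set. Ordering the objects of $N_r^{+}(p^*)$ by the time greedy covers them, when the $i$-th from last is about to be covered the set $N_r^{+}(p^*)$ still has at least $i$ uncovered objects, so greedy covers at least $i$ new objects and the price of that object is at most $1/i$; summing gives at most $H_{|N_r^{+}(p^*)|} \le H_{\Delta+1}$. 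Summing over the at most $|S^*|$ reference sets yields $|S| \le H_{\Delta+1}\,|S^*| \le (1 + \ln(\Delta+1))\,|S^*|$.

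The main obstacle is purely bookkeeping in the charging step: ensuring each object is charged to exactly one reference set and that the ``at least $i$ objects of $N_r^{+}(p^*)$ remain uncovered'' invariant is evaluated at the correct moment of greedy's execution, so that the per-set telescoping into a harmonic sum is valid. A secondary point worth flagging is the gap between the clean bound $H_{\Delta+1}$ that this argument produces and the stated $\ln\Delta$: one would absorb the additive constant and the ``$+1$'' using the standard estimate $H_n \le 1 + \ln n$, treating $\ln\Delta$ as the usual harmonic-number simplification of the true $H_{\Delta+1}$ factor.
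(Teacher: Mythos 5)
Your proposal is correct and is essentially the paper's own proof: both run the classical greedy set-cover charging argument, spreading the unit cost of each selection over the objects it newly covers and bounding the charge accumulated by each reference neighborhood by a harmonic sum $H_{\Delta+1} \approx \ln\Delta$ (and your closing caveat about absorbing the $+1$ matches the looseness the paper itself accepts). The only difference is bookkeeping at the boundary of the argument: you charge directly against $S^*$, justified by the observation that a DisC diverse subset is dominating, whereas the paper charges against a minimum dominating set $D^*$ and finishes with $|D^*| \leq |S^*|$ --- the same argument, one step shorter on your side.
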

\begin{proof}
The proof can be found in the Appendix.
\end{proof}
\vspace{-0.1cm}

\section{Adaptive Diversification}
\label{sec:Changing Focus}
The radius $r$ determines the desired degree of diversification. A large radius corresponds to fewer and less similar to each other representative objects, whereas a small radius results in more and less dissimilar representative objects. On one extreme, a radius equal to the largest distance between any two objects results in a single object being selected and on the other extreme, a zero radius results in all objects of $\mathcal{P}$ being selected.
We consider an interactive mode of operation where, after being presented with an initial set of results for some $r$, a user can see either more or less results by correspondingly decreasing or increasing $r$. 

Specifically, given a set of objects $\mathcal{P}$ and an $r$-DisC diverse subset $S^r$ of $\mathcal{P}$, we want to compute an $r^{\prime}$-DisC diverse subset $S^{r^{\prime}}$ of $\mathcal{P}$. There are two cases: (i)~$r^{\prime} < r$ and (ii)~$r^{\prime} > r$ which we call \emph{zooming-in} and \emph{zooming-out} respectively. These operations are global in the sense that the radius $r$ is modified similarly for all objects in $\mathcal{P}$. We may also modify the radius for a specific area of the data set. Consider, for example, a user that receives an $r$-DisC diverse subset $S^r$ of the results and finds some specific object $p_i$ $\in$ $S^r$ more or less interesting. Then, the user can zoom-in or zoom-out by specifying a radius $r^{\prime}$, $r^{\prime}$ $<$ $r$ or $r^{\prime}$ $>$ $r$, respectively, centered in $p_i$. We call these operations {\em local zooming-in} and {\em local zooming-out} respectively.

To study the size relationship between $S^r$ and $S^{r^{\prime}}$, we define the set $N_{r_1, r_2}^I(p_i)$, $r_2 \geq r_1$, as the set of objects at distance at most $r_2$ from $p_i$ which are at distance at least $r_1$ from each other, i.e., objects in $N_{r_2}(p_i)$ that are independent from each other considering the radius $r_1$. The following lemma bounds the size of $N_{r_1, r_2}^I(p_i)$ for specific distance metrics and dimensionality.

\begin{figure}
	\vspace{-0.3cm}
	\centering
	\subfloat[]{
		\includegraphics[width=3.5cm]{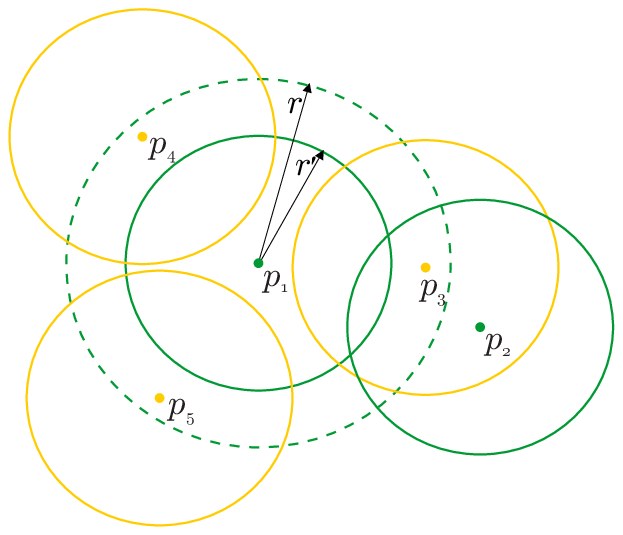}
		\label{fig:smaller-rnew}
	}
	\subfloat[]{
		\includegraphics[width=3.5cm]{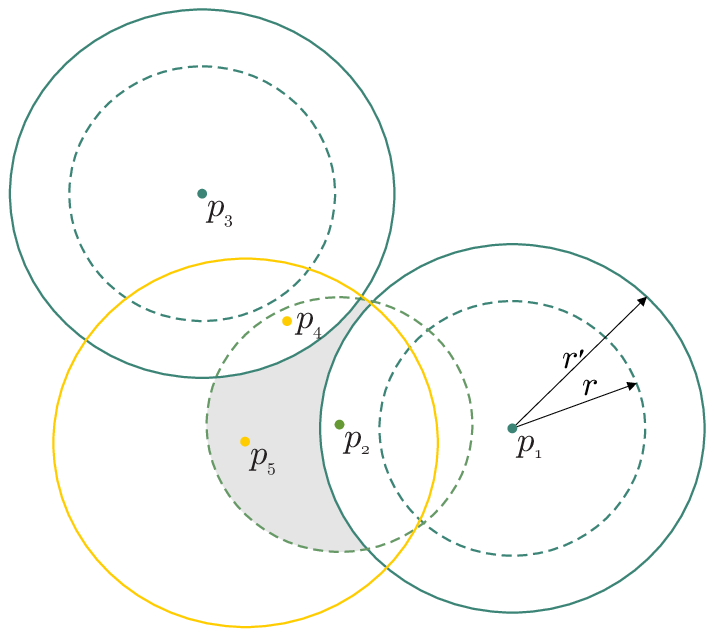}
		\label{fig:larger-rnew}
	}
	\vspace{-0.2cm}
	\caption{Zooming (a) in and (b) out. Solid and dashed circles correspond to radius $r^{\prime}$ and $r$ respectively.}
	\label{fig:zooming}
	\vspace{-0.4cm}
\end{figure}

\vspace{-0.1cm}
\begin{lemma}
\label{lem:independent-neighbors}
Let $r_1$, $r_2$ be two radii with $r_2 \geq r_1$. Then, for $d = 2$:
\begin{enumerate}
\item [(i)] if $dist$ is the Euclidean distance:
\vspace{-0.1cm}
\[
\left|N_{r_1, r_2}^I(p_i)\right| \leq 9 \left\lceil \log_\beta (r_2/r_1) \right\rceil \textnormal{, where } \beta = \frac{1 + \sqrt5}{2}
\vspace{-0.1cm}
\]
\item [(ii)] if $dist$ is the Manhattan distance:
\vspace{-0.1cm}
\[
\left|N_{r_1, r_2}^I(p_i)\right| \leq 4 \sum_{i=1}^{\gamma} (2i+1) \textnormal{, where } \gamma = \left\lceil \frac{r_2-r_1}{r_1}\right\rceil
\vspace{-0.1cm}
\]
\end{enumerate}
\end{lemma}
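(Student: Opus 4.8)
The plan is to treat the two metrics separately, in each case decomposing the disk of radius $r_2$ around $p_i$ into regions on which the independence constraint (pairwise distance at least $r_1$) can be satisfied by only a bounded number of points, and then summing these local bounds. For the Euclidean case the natural decomposition is into concentric rings centered at $p_i$ whose radii grow geometrically by the golden ratio $\beta$, so that exactly $\lceil \log_\beta(r_2/r_1) \rceil$ rings reach from $r_1$ out to $r_2$; the constant $9$ is the per-ring cap, and multiplying the two gives the stated bound. For the Manhattan case the decomposition is instead into square ``shells'' and the bound comes from a disjoint-diamond packing count whose telescoping reproduces the stated sum.

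For the Euclidean bound I would define the ring $R_k = \{q : \beta^{k-1} r_1 \leq dist(p_i, q) < \beta^k r_1\}$ for $k = 1, \dots, m$, with $m = \lceil \log_\beta(r_2/r_1)\rceil$, together with the central disk $dist(p_i,q) \leq r_1$, which by Lemma~\ref{disk5} already contains at most $5$ pairwise independent points and can be folded into the first ring. The heart of the argument is to show that each ring carries at most $9$ pairwise independent points. For two points $A, B$ in the same ring, at distances $a, b$ from $p_i$ and with angle $\theta = \angle A p_i B$, the law of cosines gives $dist(A,B)^2 = a^2 + b^2 - 2ab\cos\theta$. The goal is to exhibit a fixed angular threshold $\theta_0$ such that $\theta \leq \theta_0$ forces $dist(A,B) \leq r_1$; then pairwise independent points in a ring must be angularly separated by more than $\theta_0$, so at most $\lfloor 2\pi/\theta_0 \rfloor = 9$ of them fit. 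The golden ratio enters precisely here: the identity $\beta^2 = \beta + 1$ is what makes the two extreme configurations inside a ring — two points near the outer radius separated by a small angle, and two radially aligned points at the inner and outer radii — collapse below $r_1$ at the \emph{same} threshold, so a single constant governs the count. Summing the per-ring cap over the $m$ rings yields $9\lceil \log_\beta(r_2/r_1)\rceil$.

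For the Manhattan bound I would place around each independent point a Manhattan ball of radius $r_1/2$; independence makes these pairwise disjoint, and they all lie inside the Manhattan ball of radius $r_2 + r_1/2$ about $p_i$. I would then partition this region into square shells indexed by $i = 1, \dots, \gamma$, with $\gamma = \lceil (r_2 - r_1)/r_1 \rceil$, shell $i$ being scaled by $r_1$, and count the grid-aligned cells of shell $i$ that can host one disjoint diamond; this count is $4(2i+1)$, the factor $4$ coming from the four diagonal orientations and Lemma~\ref{lem:manhattan2} controlling the innermost contribution. Summing, $\sum_{i=1}^{\gamma} 4(2i+1) = 4[(\gamma+1)^2 - 1]$ telescopes to exactly $4\sum_{i=1}^{\gamma}(2i+1)$, giving the claimed bound.

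The main obstacle is the Euclidean per-ring cap. One must verify the law-of-cosines inequality uniformly over the entire radial extent $[\beta^{k-1}r_1, \beta^k r_1]$ of a ring, controlling the angular and the radial separations simultaneously, and confirm that the golden-ratio growth factor is exactly what ties the worst radial-gap configuration to the worst angular-gap configuration so that the single constant $9$ suffices; identifying $\theta_0$ and checking that the binding cases are the corners of the ring is the delicate part of the computation. The remaining bookkeeping — absorbing the central disk into the first ring and accounting for the rounding in the ceiling — is routine. By comparison the Manhattan side is mechanical once the diamonds-are-disjoint observation is in place, the only care being the shell-by-shell cell count.
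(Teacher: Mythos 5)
Your plan is, in outline, the same as the paper's: for the Euclidean case, rings around $p_i$ growing geometrically by $\beta$ with a per-ring cap of $9$ obtained from an angular-separation argument (the paper reaches the same place by optimizing over a free angle: rings of ratio $2\cos\alpha$, per-ring cap $2\pi/\alpha - 1$, minimized near $\alpha = \pi/5$, where $2\cos(\pi/5) = \beta$), and for the Manhattan case shells of width $r_1$ with per-shell count $4(2i+1)$. The difficulty is that the step you defer as ``the delicate part'' is not merely delicate; it fails for every ring after the first. What the identity $\beta^2 = \beta + 1$ (equivalently $\cos(\pi/5) = \beta/2$) actually gives is: two points of $R_k$ with angular separation at most $\pi/5$ are at distance at most the \emph{inner radius} $\beta^{k-1}r_1$ of their ring. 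That equals the required bound $r_1$ only for $k=1$. For $k \geq 2$ the conclusion is too weak, and no substitute threshold $\theta_0$ exists: on the circle of radius $2.6\,r_1$ (inside $R_2$) fifteen points with angular gaps of $24^{\circ} < \pi/5$ are pairwise at distance about $1.08\,r_1 > r_1$, so $R_2$ alone can carry $15 > 9$ independent points, and $R_k$ can carry on the order of $\beta^{k}$ of them, since the chord threshold shrinks like $\beta^{-k}$. The per-ring count therefore necessarily grows with $k$ and cannot be capped by a constant. Worse, the target bound itself is unreachable for independence measured against the fixed radius $r_1$: a grid of spacing $r_1$ inside the disk of radius $r_2$ gives roughly $\pi(r_2/r_1)^2$ points that are pairwise at distance at least $r_1$, which exceeds $9\lceil \log_\beta (r_2/r_1) \rceil$ once $r_2/r_1$ is moderately large, so no choice of $\theta_0$, however clever, can rescue a logarithmic count.

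You should know that this is precisely the step the paper's own proof leans on: its sector $abcd$ is shown to have diameter equal to the annulus's inner radius, and that quantity is then treated as if it were $r_1$. The construction is imported from the cited wireless-network literature, where it is sound because there a node's independence radius grows with its distance from the center; with a fixed $r_1$ the transfer breaks, so you have faithfully reproduced the paper's route, weak point included. Your Manhattan argument, in contrast, genuinely differs from the paper's (the paper exhibits a maximal configuration of points covering each shell and counts it, whereas you pack disjoint $r_1/2$-diamonds), and the packing mechanism is the sounder one, giving the correct quadratic order. Two cautions there: diamonds centered in shell $i$ protrude by $r_1/2$ into neighboring shells, so dividing the area of shell $i$ by the diamond area does not by itself yield a per-shell bound of $4(2i+1)$; and done globally the packing gives $(2r_2/r_1+1)^2$, slightly weaker than the stated $4\sum_{i=1}^{\gamma}(2i+1)$. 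Finally, folding the central disk into $R_1$ does not preserve the cap of $9$: the disk of radius $\beta r_1$ can hold ten pairwise independent points (nine near the boundary plus one at the center).
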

\begin{proof}
The proof can be found in the Appendix.
\end{proof}
\vspace{-0.1cm}

Since we want to support an incremental mode of operation, the set $S^{r^{\prime}}$ should be as close as possible to the already seen result $S^r$. Ideally, $S^{r^{\prime}}$ $\supseteq$ $S^r$, for $r^{\prime} < r$ and  $S^{r^{\prime}}$ $\subseteq$ $S^r$, for $r^{\prime} > r$. We would also like the size of $S^{r^{\prime}}$ to be as close as possible to the size of the minimum $r'$-DisC diverse subset.

If we consider only the coverage condition, that is, only $r$-C diverse subsets, then an $r$-C diverse subset of $\mathcal{P}$ is also an $r^{\prime}$-C diverse subset of $\mathcal{P}$, for any $r^{\prime}$ $\geq$  $r$. This holds because $N_r(p_i)$ $\subseteq$ $N_{r^{\prime}}(p_i)$ for any $r^{\prime}$ $\geq$ $r$. However, a similar property does not hold for the dissimilarity condition. In particular, a maximal independent diverse subset $S^r$ of $\mathcal{P}$ for $r$ is not necessarily a maximal independent  diverse subset of $\mathcal{P}$, neither for  $r^{\prime}$ $>$ $r$ nor for  $r^{\prime}$ $<$  $r$. To see this, note that for $r^{\prime}$ $>$  $r$, $S^r$ may not be independent, whereas for $r^{\prime}$ $<$  $r$,
$S^r$ may no longer be maximal. Thus, from Lemma~\ref{lem:maxind}, we reach the following conclusion:

\vspace{-0.1cm}
\begin{observation}
In general, there is no monotonic property among the $r$-DisC diverse and the  $r^{\prime}$-DisC diverse subsets of a set of objects $\mathcal{P}$, for $r$ $\neq$  $r^{\prime}$.
\end{observation}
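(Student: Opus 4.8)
The plan is to exploit the characterization, implied by Lemma~\ref{lem:maxind}, that $S$ is an $r$-DisC diverse subset of $\mathcal{P}$ precisely when $S$ is a maximal independent set of $G_{\mathcal{P},r}$, i.e.\ a set that is simultaneously independent and dominating. The key structural fact I would isolate is that the edge set of $G_{\mathcal{P},r}$ grows monotonically with $r$: if $r \leq r^{\prime}$, then every edge of $G_{\mathcal{P},r}$ is also an edge of $G_{\mathcal{P},r^{\prime}}$, since $dist(p_i,p_j) \leq r$ implies $dist(p_i,p_j) \leq r^{\prime}$. The two defining conditions respond to this monotonicity in \emph{opposite} directions, and I would organize the whole argument around that tension.

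First I would treat the zooming-out direction, $r^{\prime} > r$. Enlarging the radius only adds edges, so new edges may appear among the members of $S^r$ and independence can be destroyed. It suffices to exhibit one instance: take $\mathcal{P} = \{p_1, p_2\}$ with $r < dist(p_1,p_2) \leq r^{\prime}$. At radius $r$ the two objects are mutually independent and isolated, so the unique $r$-DisC diverse subset is $S^r = \{p_1,p_2\}$; at radius $r^{\prime}$ they become neighbors, so $S^r$ violates the dissimilarity condition and is therefore not $r^{\prime}$-DisC diverse.

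Next I would treat the zooming-in direction, $r^{\prime} < r$. Here edges are removed, so a set that was dominating at $r$ may cease to dominate, i.e.\ maximality can be lost. Again a two-point instance works: take $\mathcal{P} = \{p_1, p_2\}$ with $r^{\prime} < dist(p_1,p_2) \leq r$. At radius $r$ the set $S^r = \{p_1\}$ is $r$-DisC diverse, since it is trivially independent and it covers $p_2$; at radius $r^{\prime}$ the object $p_2$ leaves the neighborhood of $p_1$ and is no longer covered, so the coverage condition fails and $S^r$ is not $r^{\prime}$-DisC diverse.

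Taken together, these two instances show that neither passing to a larger radius nor to a smaller one preserves the $r$-DisC diverse property, which is exactly the absence of any monotonic relationship asserted by the observation. I do not expect any genuine obstacle here, since the claim is a negative ``in general'' statement that a single pair of small witnesses settles. The only thing to be careful about is bookkeeping on the direction of the implications: making sure the witness for increasing $r$ breaks \emph{independence} while the witness for decreasing $r$ breaks \emph{maximality/coverage}, precisely mirroring the structural explanation via Lemma~\ref{lem:maxind}.
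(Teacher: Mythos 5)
Your proposal is correct and follows essentially the same route as the paper: the paper also argues via Lemma~\ref{lem:maxind} that for $r^{\prime} > r$ the set $S^r$ may lose independence, while for $r^{\prime} < r$ it may lose maximality (i.e., coverage). The only difference is that you instantiate each failure with an explicit two-point witness, which the paper leaves implicit (it instead points to the configuration in Figure~\ref{fig:zooming} for the zooming-out case); this is a harmless strengthening, not a different argument.
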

\vspace{-0.1cm}

For zooming-in, i.e., for $r^{\prime}$ $<$  $r$, we can construct $r^{\prime}$-DisC diverse sets that are supersets of $S^r$ by adding objects to  $S^r$ to make it maximal. For zooming-out, i.e., for $r^{\prime}$ $>$  $r$, in general, there may be no subset of $S^r$ that is  $r^{\prime}$-DisC diverse. Take for example the objects of Figure~\ref{fig:zooming}\subref{fig:larger-rnew} with  $S^r$ $=$ $\{p_1, p_2, p_3$\}. No subset of $S^r$ is an  $r^{\prime}$-DisC diverse subset for this set of objects. Next, we detail the zooming-in and zooming-out operations.

\subsection{Incremental Zooming-in}
Let us first consider the case of zooming with a smaller radius, i.e., $r^{\prime} < r$. Here, we aim at producing a small independent covering solution $S^{r^{\prime}}$, such that, $S^{r^{\prime}}$ $\supseteq$ $S^r$. For this reason, we keep the objects of $S^r$ in the new $r^{\prime}$-DisC diverse subset $S^{r^{\prime}}$ and proceed as follows.

Consider an object of $S^r$, for example $p_1$ in Figure~\ref{fig:zooming}\subref{fig:smaller-rnew}. Objects at distance at most $r^{\prime}$ from $p_1$ are still covered by $p_1$ and cannot enter $S^{r^{\prime}}$. Objects at distance greater than $r^{\prime}$ and at most $r$ may be uncovered and join $S^{r^{\prime}}$. Each of these objects can enter $S^{r^{\prime}}$ as long as it is not covered by some other object of $S^r$ that lays outside the former neighborhood of $p_1$. For example, in Figure~\ref{fig:zooming}\subref{fig:smaller-rnew}, $p_4$ and $p_5$ may enter $S^{r^{\prime}}$ while $p_3$ can not, since, even with the smaller radius $r^{\prime}$, $p_3$ is covered by $p_2$.

To produce an $r^{\prime}$-DisC diverse subset based on an $r$-DisC diverse subset, we consider such objects in turn. This turn can be either arbitrary (\texttt{Zoom-In} algorithm) or proceed in a greedy way, where at each turn the object that covers the largest number of uncovered objects is selected (\texttt{Greedy-Zoom-} \texttt{In}, Algorithm~\ref{alg:zoom-in}).

\begin{algorithm}[t]
	\small
	\caption{Greedy-Zoom-In}
	\label{alg:zoom-in}	
	\begin{algorithmic}[1]
		\REQUIRE{A set of objects $\mathcal{P}$, an initial radius $r$, a solution $S^r$ and a new radius $r^{\prime} < r$.}
		\ENSURE{An ${r^{\prime}}$-DisC diverse subset of $\mathcal{P}$.}
		\vspace{0.1cm}
		\hrule
		\vspace{0.1cm}
		
		\STATE $S^{r^{\prime}} \leftarrow S^r$
		
		\FORALL {$p_i \in S^r$}
			\STATE Color objects in $\{N_r(p_i) \backslash N_{r^\prime}(p_i)\}$ white
		\ENDFOR
			
		\WHILE {there exist white objects}		
			\STATE Select the white object $p_i$ with the largest $\left|N_r^W(p_i)\right|$
			\STATE Color $p_i$ black
			\STATE $S^{r^{\prime}} = S^{r^{\prime}} \cup \{p_i\}$			
			\FORALL {$p_j \in N_r^W(p_i)$}
				\STATE Color $p_j$ grey			
			\ENDFOR
		\ENDWHILE
				
 		\RETURN $S^{r^{\prime}}$
	\end{algorithmic}
\end{algorithm}

\vspace{-0.1cm}
\begin{lemma}
For the set $S^{r^{\prime}}$ generated by the \texttt{Zoom-In} and \texttt{Greedy-Zoom-In} algorithms, it holds that:
\begin{enumerate}[(i)]\vspace{-0.2cm}
	\item $S^r \subseteq S^{r^{\prime}}$ and\vspace{-0.2cm}
	\item $|S^{r^{\prime}}| \leq N_{r^{\prime}, r}^I(p_i)|S^r|$\vspace{-0.1cm}
\end{enumerate}
\end{lemma}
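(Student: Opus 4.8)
The plan is to treat the two parts separately, using essentially no computation for (i) and a charging argument for (ii).

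Part (i) I would dispose of in a single sentence: both \texttt{Zoom-In} and \texttt{Greedy-Zoom-In} initialise $S^{r^{\prime}} \leftarrow S^r$ (line 1 of Algorithm~\ref{alg:zoom-in}) and thereafter only insert objects via the union in line~7, never deleting any. Hence no member of $S^r$ is ever discarded, so $S^r \subseteq S^{r^{\prime}}$.

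For part (ii), the plan is an assignment (charging) argument. First I would record the two structural facts I need. (a) $S^r$ is $r$-covering, so every object of $\mathcal{P}$, and in particular every object eventually placed in $S^{r^{\prime}}$, lies within distance $r$ of at least one member of $S^r$. (b) The output $S^{r^{\prime}}$ is $r^{\prime}$-independent, i.e.\ any two of its members are at distance $> r^{\prime}$; this is precisely the dissimilarity guarantee that makes $S^{r^{\prime}}$ a valid $r^{\prime}$-DisC diverse subset. Granting these, I define a map $\phi : S^{r^{\prime}} \to S^r$ sending each $p \in S^{r^{\prime}}$ to some $q \in S^r$ with $dist(p,q) \le r$, which exists by (a). For a fixed $q \in S^r$, every element of the fibre $\phi^{-1}(q)$ lies within distance $r$ of $q$ and, by (b), the elements are pairwise at distance $> r^{\prime}$; thus $\phi^{-1}(q)$ is an $r^{\prime}$-independent set sitting inside the $r$-neighbourhood of $q$, so by the definition of $N_{r^{\prime}, r}^{I}$ its cardinality is at most $|N_{r^{\prime}, r}^{I}(q)|$. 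Since the fibres are disjoint and cover $S^{r^{\prime}}$,
\[
|S^{r^{\prime}}| = \sum_{q \in S^r} |\phi^{-1}(q)| \le \sum_{q \in S^r} |N_{r^{\prime}, r}^{I}(q)| \le |S^r| \cdot \max_{q} |N_{r^{\prime}, r}^{I}(q)|,
\]
which is the claimed bound; it is a constant-factor bound for the Euclidean and Manhattan metrics by Lemma~\ref{lem:independent-neighbors}.

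The hard part will be fact (b), together with a matching bookkeeping issue. I expect most of the write-up to go into making the $r^{\prime}$-independence airtight: I must check that the colouring in lines~2--3 and~8--9 never leaves \emph{white} an object that is already $r^{\prime}$-covered by some member of $S^r$, since an object in the $r$-ring of one selected object could a priori fall within $r^{\prime}$ of a \emph{different} member of $S^r$; pinning down exactly which neighbourhood is coloured at each step is what guarantees the final $S^{r^{\prime}}$ is genuinely $r^{\prime}$-independent and that every inserted object indeed lies strictly outside the $r^{\prime}$-neighbourhood of all previously chosen objects. The secondary subtlety is whether the centre $q$ is counted in its own fibre: if $q \in \phi^{-1}(q)$ the fibre lives in $N_r^{+}(q)$ rather than $N_r(q)$, so I would either read $|N_{r^{\prime}, r}^{I}(q)|$ in the centre-inclusive sense or peel off the $|S^r|$ centres separately before applying the bound.
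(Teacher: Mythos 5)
Your proposal is correct and takes essentially the same approach as the paper: part (i) is read directly off the initialization $S^{r^{\prime}} \leftarrow S^r$, and part (ii) is the same charging argument---the paper's one-line proof likewise bounds, for each object of $S^r$, the number of pairwise $r^{\prime}$-independent objects that its $r$-neighborhood can contribute to $S^{r^{\prime}}$ by $N_{r^{\prime},r}^I(p_i)$. The two subtleties you flag (whether the center counts toward its own fibre, and the $r^{\prime}$-independence of the final set) are silently assumed by the paper, which treats the dissimilarity condition as part of the algorithms' output specification rather than something the lemma must re-verify.
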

\begin{proof}
Condition (i) trivially holds from step 1 of the algorithm. Condition (ii) holds since for each object in $S^r$ there are at most $N_{r^{\prime}, r}^I(p_i)$ independent objects at distance greater than $r^{\prime}$ from each other that can enter  $S^{r^{\prime}}$.
\end{proof}
\vspace{-0.1cm}

In practice, objects selected to enter $S^{r^{\prime}}$, such as $p_4$ and $p_5$ in Figure~\ref{fig:zooming}\subref{fig:smaller-rnew}, are likely to cover other objects left uncovered by the same or similar objects in $S^r$. Therefore, the size difference between $S^r$ and $S^{r^{\prime}}$ is expected to be smaller than this theoretical upper bound.

\subsection{Incremental Zooming-out}
Next, we consider zooming with a larger radius, i.e., $r^{\prime} > r$. In this case, the user is interested in seeing less and more dissimilar objects, ideally  a subset of the already seen results for $r$, that is, $S^{r^{\prime}} \subseteq S^r$. However, in this case, as discussed, in contrast to zooming-in, it may not be possible to construct a diverse subset $S^{r^{\prime}}$ that is a subset of $S^r$.

Thus, we focus on the following sets of objects: (i)~$S^r \backslash S^{r^{\prime}}$ and (ii)~$S^{r^{\prime}} \backslash S^r$. The first set consists of the objects that belong to the previous diverse subset but are removed from the new one, while the second set consists of the new objects added to $S^{r^{\prime}}$. To illustrate, let us consider for example the objects of Figure~\ref{fig:zooming}\subref{fig:larger-rnew} and that $p_1$, $p_2$, $p_3$ $\in S^r$. Since the radius becomes larger, $p_1$ now covers all objects at distance at most $r^{\prime}$ from it. This may include a number of objects that also belonged to $S^r$, such as $p_2$. These objects have to be removed from the solution, since they are no longer dissimilar to $p_1$. However, removing such an object, say $p_2$ in our example, can potentially leave uncovered a number of objects that were previously covered by $p_2$ (these objects lie in the shaded area of Figure~\ref{fig:zooming}\subref{fig:larger-rnew}). In our example, requiring $p_1$ to remain in $S^{r^{\prime}}$ means than $p_5$ should be now added to $S^{r^{\prime}}$.

To produce an $r^{\prime}$-DisC diverse subset based on an $r$-DisC diverse subset, we proceed in two passes. In the first pass, we examine all objects of $S^r$ in some order and remove their diverse neighbors that are now covered by them. At the second pass, objects from any uncovered areas are added to $S^{r^{\prime}}$. Again, we have an arbitrary and a greedy variation, denoted \texttt{Zoom-Out} and \texttt{Greedy-Zoom-Out} respectively. Algorithm~\ref{alg:zoom-out} shows the greedy variation; the first pass (lines 4-11) considers $S^r \backslash S^{r^{\prime}}$, while the second pass (lines 12-19) considers $S^{r^{\prime}} \backslash S^r$. Initially, we color all previously black objects red. All other objects are colored white. We consider three variations for the first pass of the greedy algorithm: selecting the red objects with (a)~the largest number of red neighbors, (b)~the smallest number of red neighbors and (c)~the largest number of white neighbors. Variations (a) and (c) aim at minimizing the objects to be added in the second pass, that is, $S^{r^{\prime}} \backslash S^r$, while variation (b) aims at maximizing $S^r$ $\cap$ $S^{r^{\prime}}$. Algorithm~\ref{alg:zoom-out} depicts variation (a), where $N_{r^{\prime}}^R(p_i)$ denotes the red neighbors of object $p_i$.

\begin{algorithm}[t]
	\caption{Greedy-Zoom-Out(a)}
	\small
	\label{alg:zoom-out}	
	\begin{algorithmic}[1]
		\REQUIRE{A set of objects $\mathcal{P}$, an initial radius $r$, a solution $S^r$ and a new radius $r^{\prime} > r$.}
		\ENSURE{An ${r^{\prime}}$-DisC diverse subset of $\mathcal{P}$.}		\vspace{0.1cm}
		\hrule
		\vspace{0.1cm}
		
		\STATE $S^{r^{\prime}} \leftarrow \emptyset$
		
		\STATE Color all black objects red
		\STATE Color all grey objects white
		
		\WHILE{there exist red objects}
			\STATE Select the red object $p_i$ with the largest $|N_{r^{\prime}}^R(p_i)|$
			\STATE Color $p_i$ black
			\STATE $S^{r^{\prime}} = S^{r^{\prime}} \cup \{p_i\}$
			
			\FORALL {$p_j \in N_{r^{\prime}}(p_i)$}
				\STATE Color $p_j$ grey			
			\ENDFOR
		\ENDWHILE
		
		\WHILE{there exist white objects}
			\STATE Select the white object $p_i$ with the larger $|N_{r^{\prime}}^W(p_i)|$
			\STATE Color $p_i$ black
			\STATE $S^{r^{\prime}} = S^{r^{\prime}} \cup \{p_i\}$
			
			\FORALL {$p_j \in N_{r^{\prime}}^W(p_i)$}
				\STATE Color $p_j$ grey			
			\ENDFOR
		\ENDWHILE
			
 		\RETURN $S^{r^{\prime}}$
	\end{algorithmic}
\end{algorithm}

\begin{figure*}
	\vspace{-0.2cm}
	\centering
	\subfloat[$r$-DisC.]{		
		\includegraphics[width=3.3cm]{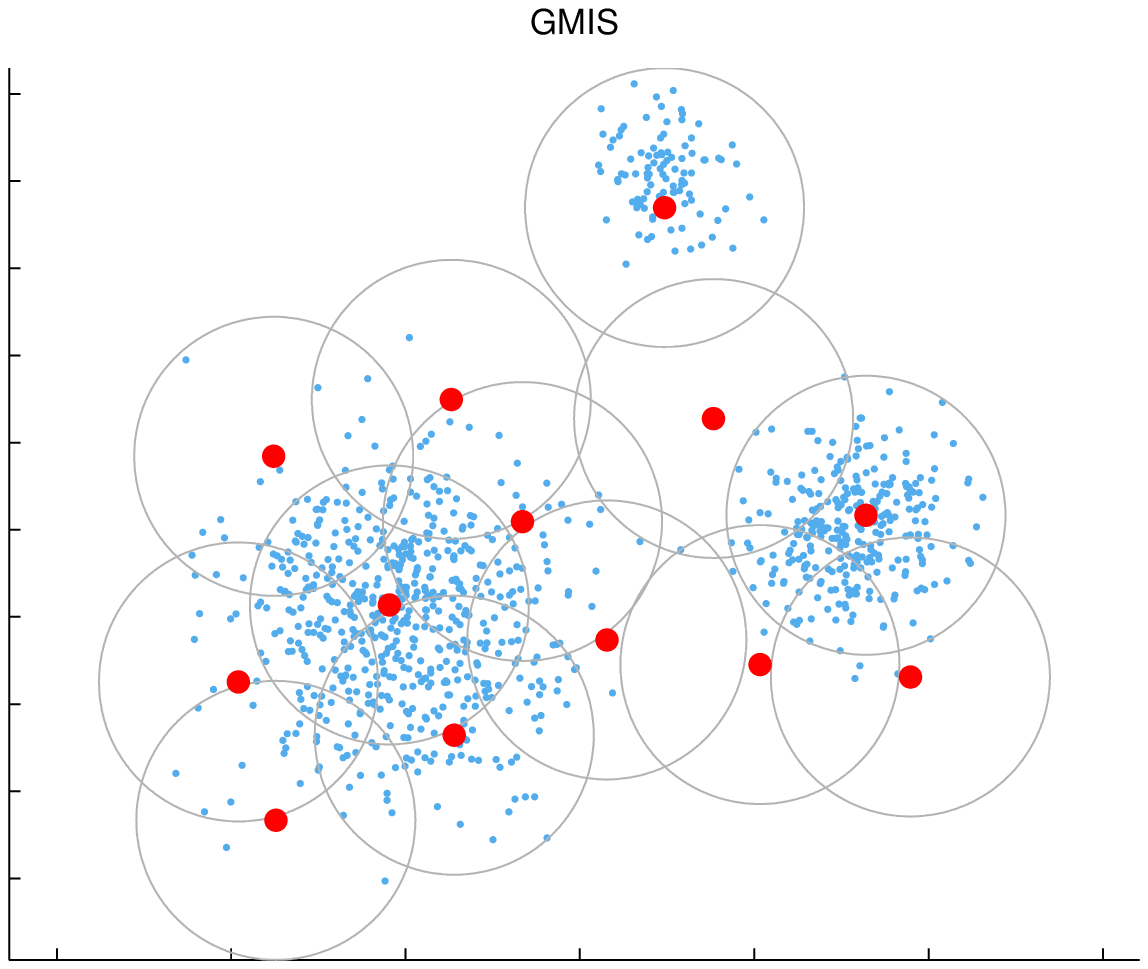}
	}
	\subfloat[MaxSum.]{		
		\includegraphics[width=3.3cm]{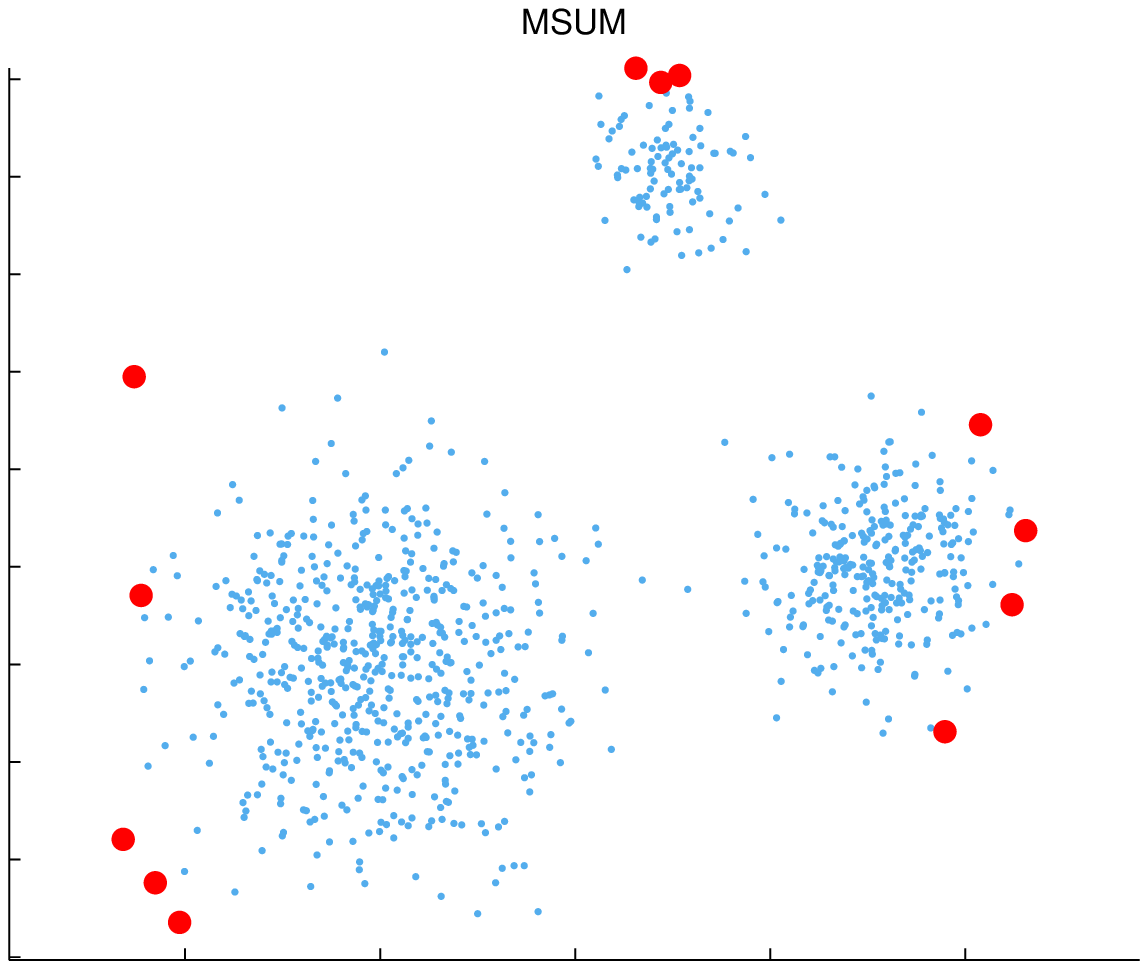}
		\label{subfig:fig:methods-clustered-MSUM}
	}
	\subfloat[MaxMin.]{		
		\includegraphics[width=3.3cm]{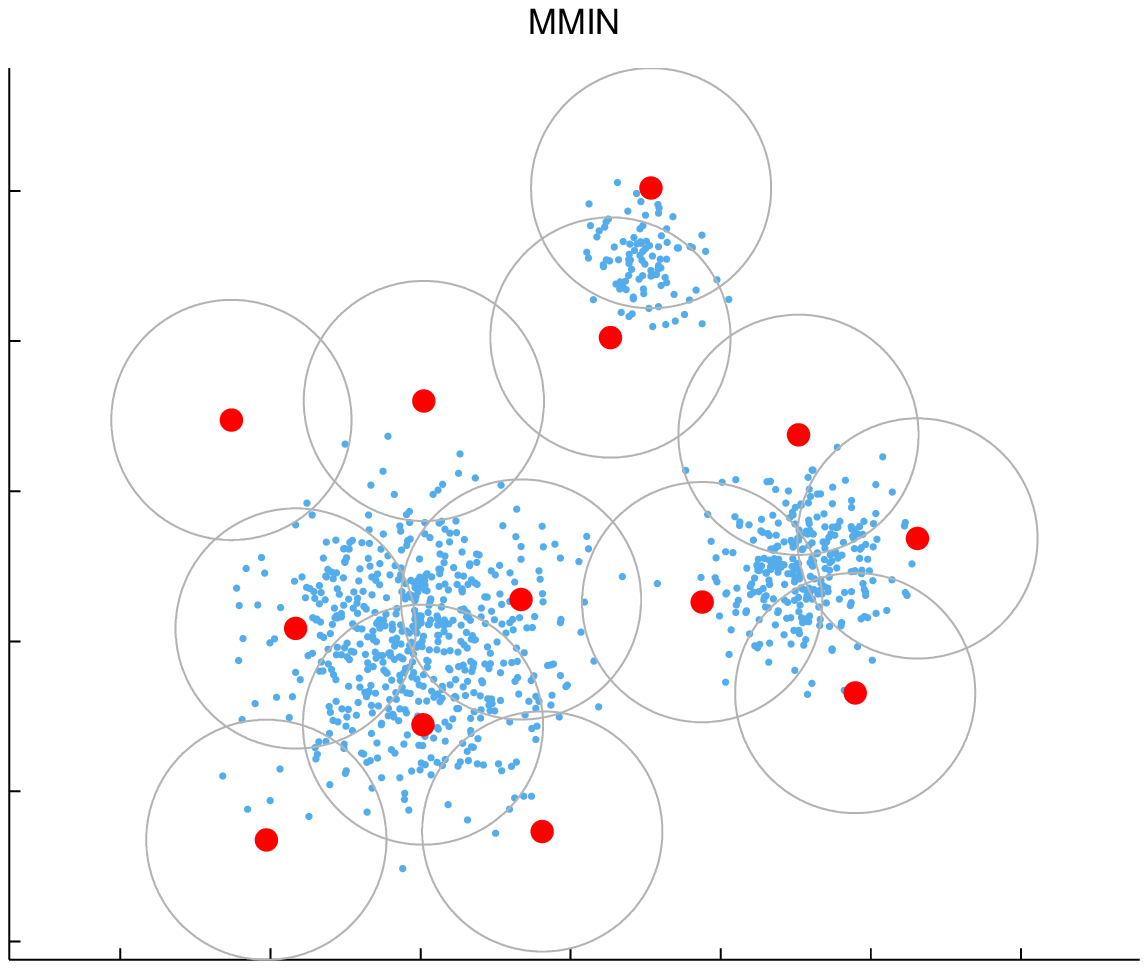}
	}
	\subfloat[$k$-medoids.]{		
		\includegraphics[width=3.3cm]{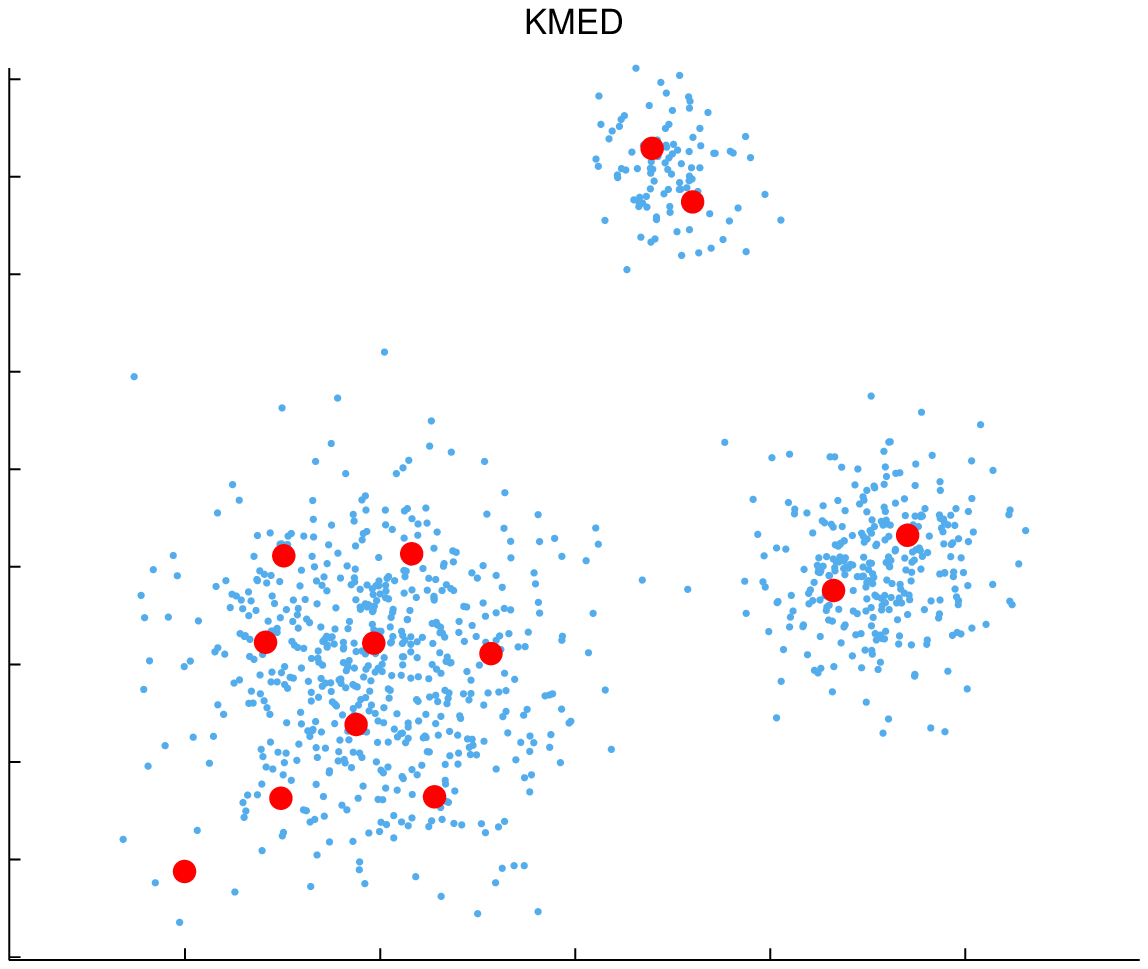}
	}
	\subfloat[$r$-C.]{		
		\includegraphics[width=3.3cm]{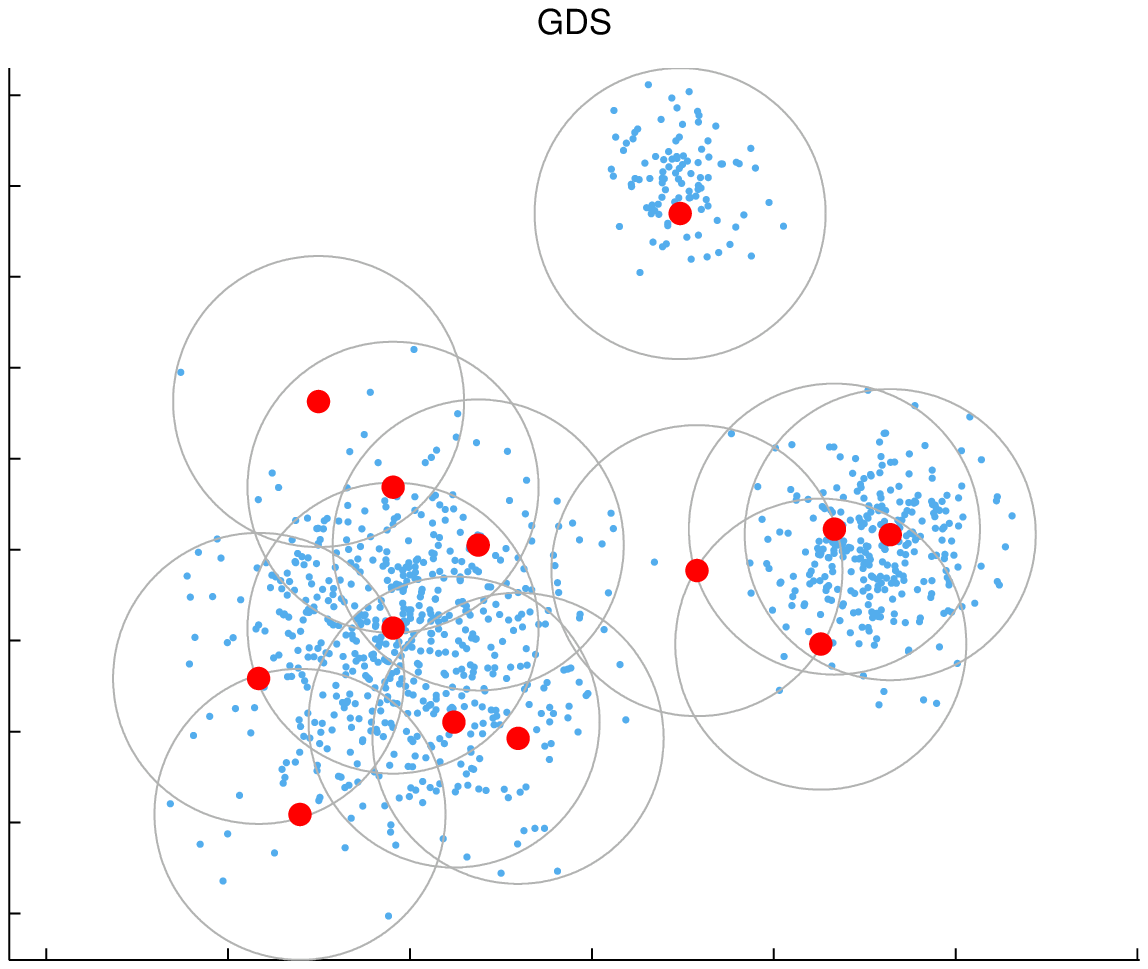}
	}
	\vspace{-0.2cm}
	\caption{Solutions by the various diversification methods for a clustered dataset. Selected objects are shown in bold. Solid circles denote the radius $r$ of the selected objects.}
	\vspace{-0.2cm}
	\label{fig:methods-clustered}
\end{figure*}

\vspace{-0.1cm}
\begin{lemma}
For the solution $S^{r^{\prime}}$ generated by the \texttt{Zoom-Out} and \texttt{Greedy-Zoom-Out} algorithms, it holds that:
\begin{enumerate}\vspace{-0.2cm}
	\item [(i)]There are at most $N_{r, r^{\prime}}^I(p_i)$ objects in $S^r \backslash S^{r^{\prime}}$.\vspace{-0.2cm}
	\item [(ii)]For each object of $S^r$ not included in $S^{r^{\prime}}$, at most $B - 1$ objects are added to $S^{r^{\prime}}$.\vspace{-0.1cm}
\end{enumerate}
\end{lemma}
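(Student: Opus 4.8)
The plan is to treat both bounds as counting arguments anchored to the two passes of \texttt{Greedy-Zoom-Out}. First I would record the structural facts I will reuse. The objects that end up in $S^r \backslash S^{r^{\prime}}$ are exactly those colored red at initialization and then greyed during the first pass (they never turn black), while the objects in $S^{r^{\prime}} \backslash S^r$ are exactly the white objects turned black in the second pass. I would also observe that $S^{r^{\prime}}$ is independent at radius $r^{\prime}$: whenever a black object is chosen (in either pass) every object within distance $r^{\prime}$ is greyed, so any later-chosen black object is at distance $> r^{\prime}$ from all previously chosen ones; hence all members of $S^{r^{\prime}}$ are pairwise at distance $> r^{\prime}$. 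These observations are what let me convert coloring events into packing statements.

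For (i), I would argue locally around each object retained in the first pass. Suppose $p_i \in S^r$ is selected (colored black) during the first pass. The objects it removes from the solution are precisely the red objects it greys, i.e.\ the objects of $S^r$ lying in $N_{r^{\prime}}(p_i)$. Since all of these belong to $S^r$, they are pairwise at distance $> r$ by the dissimilarity condition of Definition~\ref{def1}, and they all lie within distance $r^{\prime}$ of $p_i$. By the definition of $N_{r, r^{\prime}}^I(p_i)$, namely the objects within distance $r^{\prime}$ of $p_i$ that are pairwise at distance at least $r$, this set of greyed objects is contained in $N_{r, r^{\prime}}^I(p_i)$; hence at most $|N_{r, r^{\prime}}^I(p_i)|$ objects of $S^r$ are discarded on account of $p_i$. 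Lemma~\ref{lem:independent-neighbors} then turns this into an explicit constant for the Euclidean and Manhattan metrics. This step is routine; the only care needed is matching the strict versus non-strict inequalities across the two definitions.

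Part (ii) is the substantive one. Fix an object $p_j \in S^r \backslash S^{r^{\prime}}$ and let $p_i$ be the black object of the first pass that greyed it, so that $dist(p_i, p_j) \leq r^{\prime}$. Removing $p_j$ can orphan only objects it used to cover, namely objects in $N_r(p_j)$; every object of $S^{r^{\prime}} \backslash S^r$ that I charge to $p_j$ is one added in the second pass to re-cover such an orphaned object. Because $r^{\prime} > r$, the added objects are pairwise at distance $> r^{\prime}$, hence also independent at radius $r$, so I would bound them by a packing argument centered at $p_j$: the orphaned objects sit in the radius-$r$ ball around $p_j$, and the pairwise $r^{\prime}$-separated objects re-covering that ball play the role of independent neighbors of $p_j$, of which there are at most $B$ by Lemma~\ref{disk5} (using that $B$ is scale-invariant for the Euclidean distance, so the value of Theorem~\ref{th1} applies equally at radius $r^{\prime}$). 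The $-1$ comes from the absorber $p_i$: it lies within $r^{\prime}$ of $p_j$ and already covers the portion of $p_j$'s neighborhood nearest to it, so it occupies one of the at most $B$ independent slots and at most $B-1$ genuinely new objects remain.

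I expect the main obstacle to be making this packing/charging step fully rigorous. The delicate points are: showing that every object charged to $p_j$ really falls within the scope of the $B$-bound around $p_j$ (rather than lying too far out because it only needs to be within $r^{\prime}$ of an orphan, not of $p_j$ itself), and establishing that $p_i$ always accounts for exactly one of those independent slots so the count is $B-1$ and not $B$. Defining the charging so that each added object is charged to a single removed object, and checking that the per-$p_j$ bound of $B-1$ is not violated by objects arising from nearby removed objects, is where the care lies; the geometry of Lemma~\ref{disk5} is what I would lean on to close it.
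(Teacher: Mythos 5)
Your overall plan coincides with the paper's proof: part (i) is read directly off the definition of $N_{r, r^{\prime}}^I(p_i)$ (the red objects greyed by a retained object $p_i$ lie within distance $r^{\prime}$ of $p_i$ and, being members of $S^r$, are pairwise more than $r$ apart), and part (ii) is the packing bound $B$ with one slot deducted for the object that now covers the removed one. However, in part (ii) the step you yourself flag as the ``main obstacle'' is a genuine gap, and it stays open in your write-up. Under your charging rule --- charge an added object to $p_j$ when it was selected to re-cover an orphan of $p_j$ --- an added object is only guaranteed to lie within $r^{\prime}$ of the orphan, hence within $r + r^{\prime}$ of $p_j$, so it need not be an $r^{\prime}$-neighbor of $p_j$ at all, and no packing argument centered at $p_j$ (via Lemma~\ref{disk5} or otherwise) applies to it. Geometry alone cannot close this; the closing fact is algorithmic, and you never invoke it: in the second pass, \texttt{Zoom-Out} and \texttt{Greedy-Zoom-Out} color only \emph{white} objects black, so every object of $S^{r^{\prime}} \setminus S^r$ is itself an object left uncovered after the first pass. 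Such an object $q$ was covered in $S^r$ by some object at distance at most $r$, and every such coverer must have been removed --- had a coverer of $q$ been retained, $q$ would have been greyed when that coverer was selected, since $r \leq r^{\prime}$. Charging $q$ to one of these removed coverers places $q$ within distance $r \leq r^{\prime}$ of the removed object it is charged to.

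With this charging the count is immediate, and it is exactly the paper's terse argument. For a removed object $p_j$, consider the added objects lying within distance $r^{\prime}$ of $p_j$ (by the above, these include every object attributable to $p_j$'s removal) together with the first-pass object $p_i$ that greyed $p_j$. All of these belong to $S^{r^{\prime}}$, hence are pairwise more than $r^{\prime}$ apart, and all lie within distance $r^{\prime}$ of $p_j$; they therefore form a set of pairwise independent $r^{\prime}$-neighbors of $p_j$, of size at most $B$ (as you note, $B$ is scale-free for the metrics considered, so using it at radius $r^{\prime}$ is legitimate). Since $p_i \in S^r \cap S^{r^{\prime}}$ is distinct from every added object, at most $B-1$ of these slots can be occupied by objects of $S^{r^{\prime}} \setminus S^r$, which is the claim. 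Note also that your worry about objects being double-charged across nearby removed objects is harmless: the lemma's bound is stated per removed object, so it suffices that every added object near $p_j$ occupies one of $p_j$'s at most $B$ independent slots, irrespective of whether it is simultaneously counted against some other removed object.
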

\begin{proof}
Condition (i) is a direct consequence of the definition of $N_{r, r^{\prime}}^I(p_i)$. Concerning condition (ii), recall that each removed object $p_i$ has at most $B$ independent neighbors for $r^{\prime}$. Since $p_i$ is covered by some neighbor, there are at most $B-1$ other independent objects that can potentially enter $S^{r^{\prime}}$.
\end{proof}
\vspace{-0.1cm}

As before, objects left uncovered by objects such as $p_2$ in Figure~\ref{fig:zooming}\subref{fig:larger-rnew} may already be covered by other objects in the new solution (consider $p_4$ in our example which is now covered by $p_3$). However, when trying to adapt a DisC diverse subset to a larger radius, i.e., maintain some common objects between the two subsets, there is no theoretical guarantee that the size of the new solution will be reduced.

\section{Comparison with Other Models}
\label{sec:Comparison}
The most widely used diversification models 
are \textsc{MaxMin} and \textsc{MaxSum} that aim at selecting objects that maximize $f_{\textsc{Min}}$ $=$ $\min_{\substack{p_i, p_j \in S\\p_i \neq p_j}}$ $dist(p_i, p_j)$ and $f_{\textsc{Sum}}$ $=$ $\sum_{\substack{p_i, p_j \in S\\p_i \neq p_j}} dist(p_i, p_j)$ respectively (e.g. \cite{DBLP:conf/www/GollapudiS09, DBLP:conf/icde/VieiraRBHSTT11, pods12}). Concerning the quality of an $r$-DisC solution to the optimal value of the \textsc{MaxMin} metric, the following lemma holds.

\begin{lemma}
\label{maxmin}
Let $\cal{P}$ be a set of objects, $S$ be an $r$-DisC diverse subset of $\cal{P}$ and $\lambda$ $\geq$ $r$ be the $f_{\textsc{Min}}$ distance between objects of $S$. Let $S^*$ be an optimal \textsc{MaxMin} subset of $\cal{P}$ for $k$ = $|S|$ and $\lambda^*$ be the $f_{\textsc{Min}}$ distance for $S^*$. Then, $\lambda^*$ $\leq$ 3 $\lambda$.
\end{lemma}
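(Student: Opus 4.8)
The plan is to exploit the coverage condition of the DisC set $S$ to put the $k$ objects of the optimal MaxMin set $S^*$ into correspondence with the $k$ objects of $S$, and then to split into two cases according to whether this correspondence is one-to-one. Note first that $S$ is itself a feasible set of size $k = |S|$, so optimality of $S^*$ gives $\lambda^* \geq \lambda$ for free; the entire content of the lemma is therefore the upper bound $\lambda^* \leq 3\lambda$.

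First I would define a map $\phi : S^* \to S$ sending each $p \in S^*$ to some $\phi(p) \in S$ with $dist(p, \phi(p)) \leq r$. Such an object exists because $S^* \subseteq \mathcal{P}$ and $S$ satisfies the coverage condition of Definition~\ref{def1}. I would then record the elementary but crucial fact that $\lambda^*$, being the minimum pairwise distance within $S^*$, satisfies $\lambda^* \leq dist(p, q)$ for every pair of distinct $p, q \in S^*$. Consequently it suffices to exhibit two distinct objects of $S^*$ lying at distance at most $3\lambda$.

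The case analysis then runs as follows. If $\phi$ is not injective, two distinct $p, q \in S^*$ share an image $s = \phi(p) = \phi(q)$, and the triangle inequality gives $dist(p,q) \leq dist(p,s) + dist(s,q) \leq 2r \leq 2\lambda$ using $\lambda \geq r$, so $\lambda^* \leq 2\lambda \leq 3\lambda$. Otherwise $\phi$ is injective, and since $|S^*| = |S|$ it is in fact a bijection. I would then take the pair $s_1, s_2 \in S$ realizing the minimum distance $dist(s_1,s_2) = \lambda$, set $p_1 = \phi^{-1}(s_1)$ and $p_2 = \phi^{-1}(s_2)$ (distinct because $s_1 \neq s_2$), and apply the triangle inequality along $p_1, s_1, s_2, p_2$ to obtain $dist(p_1,p_2) \leq dist(p_1,s_1) + dist(s_1,s_2) + dist(s_2,p_2) \leq r + \lambda + r \leq 3\lambda$, again invoking $\lambda \geq r$. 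Hence $\lambda^* \leq dist(p_1,p_2) \leq 3\lambda$ in this case as well.

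I expect the bijective case to be the main obstacle. The collision case yields the even stronger bound $2\lambda$ almost immediately, but when the covering map is a perfect matching no two points of $S^*$ are forced to be mutually close, so the bound cannot come from a pigeonhole collision and must instead be recovered by pulling the closest pair of $S$ back through $\phi^{-1}$ and paying a cost of $r$ on each side. The two ingredients that make this step go through are that injectivity combined with $|S^*| = |S|$ upgrades $\phi$ to a bijection, and that the constant $3$ arises exactly as $r + \lambda + r$ after bounding $r \leq \lambda$.
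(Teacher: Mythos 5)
Your proof is correct and takes essentially the same approach as the paper's: a case split on whether the covering correspondence $S^* \to S$ is injective, with the collision case giving $dist(p,q) \leq 2r \leq 2\lambda$ and the injective (hence bijective) case giving $r + \lambda + r \leq 3\lambda$ by pulling the closest pair of $S$ back through the map. If anything, your explicit observation that injectivity plus $|S^*| = |S|$ upgrades $\phi$ to a bijection makes rigorous a step the paper leaves implicit, namely that in its case (i) every object of $S$ --- in particular each member of the closest pair --- does cover some object of $S^*$.
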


\begin{proof}
Each object in $S^*$ is covered by (at least) one object in $S$. There are two cases, either (i) all objects $p_1^*$, $p_2^*$ $\in$ $S^*$, $p_1^*$ $\neq$ $p_2^*$, are covered by  different objects is $S$, or (ii) there are at least two objects in $S^*$, $p_1^*$, $p_2^*$, $p_1^*$ $\neq$ $p_2^*$ that are both covered by the same object $p$ in $S$. Case (i): Let $p_1$ and $p_2$ be two objects in $S$ such that $dist(p_1, p_2)$ = $d$ and $p_1^*$ and $p_2^*$ respectively be the object in $S^*$ that each covers. Then, by applying the triangle inequality twice, we get: $dist(p_1^*,  p_2^*)$ $\leq$ $dist(p_1^*,  p_1)$ + $dist(p_1,  p_2^*)$ $\leq$ $dist(p_1^*,  p_1)$ + $dist(p_1,  p_2)$ + $dist(p_2,  p_2^*)$. By coverage, we get: $dist(p_1^*,  p_2^*)$ $\leq$ $r$ + $\lambda$ + $r$ $\leq$ 3 $\lambda$, thus $\lambda^*$ $\leq$ 3 $\lambda$. Case (b): Let $p_1^*$ and $p_2^*$ be two objects in $S^*$ that are covered by the same object $p$ in $S$. Then, by coverage and the triangle inequality, we get $dist(p_1^*,  p_2^*)$ $\leq$ $dist(p_1^*,  p)$ + $dist(p,  p_2^*)$ $\leq$ 2 $r$, thus $\lambda^*$ $\leq$ 2 $\lambda$.
\end{proof}

%

Next, we show qualitative results of applying \textsc{MaxMin} and \textsc{MaxSum} to a 2-dimensional ``Clustered'' dataset. To implement \textsc{MaxMin} and \textsc{MaxSum}, we used greedy heuristics which have been shown to achieve good solutions \cite{DBLP:journals/sigmod/DrosouP10}. In addition to \textsc{MaxMin} and \textsc{MaxSum}, we also show results for $r$-C-diversity (i.e., covering but not necessarily independent subsets for the given $r$) and $k$-medoids, a widespread clustering algorithm that seeks to minimize $\frac{1}{|\mathcal{P}|} \sum_{p_i \in \mathcal{P}} dist(p_i, c(p_i))$, where $c(p_i)$ is the closest object of $p_i$ in the selected subset, since the located medoids can be viewed as a representative subset of the dataset. To allow for a comparison, we first run our algorithms for a given $r$ and then use as $k$ the size of the produced diverse subset. In this example, $k$ = 15 for $r$ = 0.7.

\textsc{MaxSum} diversification and $k$-medoids fail to cover all areas of the dataset; \textsc{MaxSum} tends to  focus on the outskirts of the dataset, whereas $k$-medoids reports only central points, ignoring outliers. \textsc{MaxMin} performs better in this aspect. However, since \textsc{MaxMin} seeks to retrieve objects that are as far apart as possible, it fails to retrieve objects from dense areas; see, for example, the center areas of the clusters in Figure~\ref{fig:methods-clustered}. DisC gives priority to such areas and, thus, such areas are better represented in the solution. Note also that \textsc{MaxSum} and $k$-medoids may select duplicate objects while DisC and \textsc{MaxMin} do not. We also experimented with variations of \textsc{MaxSum} proposed in \cite{DBLP:conf/icde/VieiraRBHSTT11} but the results did not differ substantially from the ones in Figure~\ref{fig:methods-clustered}\subref{subfig:fig:methods-clustered-MSUM}. For $r$-C diversity, the resulting selected set is one object smaller, however, the selected objects are less widely spread than in DisC. Finally, note that, we are not interested in retrieving as representatives subsets that follow the same distribution as the input dataset, as in the case of sampling, since such subsets will tend to ignore outliers. Instead, we want to cover the whole dataset and provide a complete view of all its objects, including the distant ones.

\section{Implementation}
\label{sec:Indexing}
Since a central operation in computing DisC diverse subsets is locating neighbors,
we introduce algorithms that exploit a spatial index structure, 
namely,  the M-tree \cite{similaritysearch2006}. An M-tree is a balanced tree index that can handle large volumes of dynamic data of any dimensionality in general metric spaces. In particular, an M-tree partitions space around some of the indexed objects, called {\em pivots}, by forming a bounding ball region of some {\em covering radius} around them. Let $c$ be the maximum node capacity of the tree. Internal nodes have at most $c$ entries, each containing a pivot object $p_v$, the covering radius $r_v$ around $p_v$, the distance of $p_v$ from its parent pivot and a pointer to the subtree $t_v$. All objects in the subtree $t_v$ rooted at $p_v$ are within distance at most equal to the covering radius $r_v$ from $p_v$. Leaf nodes have entries containing the indexed objects and their distance from their parent pivot. 

The construction of an M-tree is influenced by the \emph{splitting policy} that determines how nodes are split when they exceed their maximum capacity $c$. Splitting policies indicate (i)~which two of the $c+1$ available pivots will be promoted to the parent node to index the two new nodes (\emph{promote policy}) and (ii)~how the rest of the pivots will be assigned to the two new nodes (\emph{partition policy}). These policies affect the overlap among nodes. For computing diverse subsets:
\begin{enumerate}[(i)]\vspace{-0.2cm}
	\item We link together all leaf nodes.
This allows us to visit all objects in a single left-to-right traversal of the leaf nodes and exploit some degree of locality in covering the objects.\vspace{-0.2cm}
	\item To compute the neighbors $N_r(p_i)$ of an object $p_i$ at radius $r$, we perform a range query centered around $p_i$ with distance $r$, denoted $Q(p_i, r)$. Range queries can be performed either in a top-down fashion starting from the root node or in a bottom-up fashion starting from $p_i$. We consider both variations. \vspace{-0.2cm}
	\item We build trees using splitting policies that minimize overlap. In most cases, the policy that resulted in the lowest overlap was (a) promoting as new pivots the pivot $p_i$ of the overflowed node and the object $p_j$ with the maximum distance from $p_i$ and (b) partitioning the objects by assigning each object to the node whose pivot has the closest distance with the object. We call this policy ``MinOverlap''.
\end{enumerate}

\subsection{Computing Diverse Subsets}
The \texttt{Basic-DisC} algorithm selects white objects in random order. The M-tree implementation of \texttt{Basic-DisC} allows us to consider objects in the order they appear in the leaves of the M-tree, thus taking advantage of locality. Upon encountering a white object $p_i$, the algorithm colors it black and executes a range query $Q(p_i, r)$ to retrieve the neighbors of $p_i$ and color them grey. If the overlap among nodes is small, the neighbors of an indexed object are expected to reside in nearby leaf nodes, thus such range queries are in general efficient. We can visualize the progress of \texttt{Basic-DisC} as gradually coloring all objects in the leaf nodes from left-to-right until all objects are either grey or black. 

We make the following observation that allows us to further \emph{prune} subtrees while executing range queries. Objects that are already grey do not need to be colored grey again when some other of their neighbors is colored black. 

\vspace{0.1cm}
\noindent {\sc Pruning Rule:}
A leaf node that contains no white objects is colored grey. When all its children become grey, an internal node is colored grey. While executing range queries, the top-down search of the tree does not need to follow subtrees rooted at grey nodes.
\vspace{0.1cm}

As the algorithm progresses, more and more nodes become grey, and thus, the cost of range queries reduces over time. We call this variation \texttt{Basic-DisC (Pruned)}. We can visualize its progress as gradually coloring all tree nodes grey in a post-order manner.

The \texttt{Greedy-DisC} algorithm selects at each iteration the white object with the largest white neighborhood. To efficiently implement \texttt{Greedy-DisC}, we maintain all white objects in a structure $L^{\prime}$ sorted by the size of their white neighborhood. For initializing $L^{\prime}$, we need to compute the size of the white neighborhoods of all objects. Initially, $N_r^W(p_i) = N_r(p_i)$ for every object $p_i$. We opt to compute the neighborhood size of each object as we build the M-tree. When an object $p_i$ in inserted, a range query $Q(p_i, r)$ is executed. The white neighborhood of $p_i$ is initialized to $|Q(p_i, r)|$ and the white neighborhoods of all objects retrieved by the range query are incremented by one. We found that computing the size of neighborhoods while building the tree reduces node accesses up to 45\%. 

At each iteration of the algorithm, the first element $p_i$ of $L^{\prime}$ is selected and colored black
and a range query $Q(p_i, r)$ is used to retrieve and color grey the neighbors of $p_i$. 
We also need to update the size of the white neighborhoods of all affected objects, i.e., all objects in
the  neighborhood of  each $p_j$ in $N_r(p_i)$.  
We consider two variations.
The first variation, termed \texttt{Grey-Greedy-DisC}, executes an additional range query $Q(p_j, r)$ for each of the
newly colored grey neighbors $p_j$ of $p_i$  
for locating the neighbors of $p_j$ and reducing by one the size of their white neighborhood. 
The second variation, termed \texttt{White-Greedy-DisC}, executes one range query for all remaining white objects with distance less than or equal to $2r$ from $p_i$. These are the only white objects for which the size of their white neighborhood may have changed. As before, we can reduce the cost using the pruning rule. We call these variations \texttt{Grey-Greedy-DisC (Pruned)} and \texttt{White-Greedy-DisC} \texttt{(Pruned)}.

Finally, \texttt{Greedy-C} considers both grey and white objects 
as candidates. A sorted structure $L^{\prime}$ has to be maintained 
as well, which now includes both white and grey objects and is substantially larger. Furthermore, the pruning rule is no longer useful, since grey objects and nodes need to be accessed again for  updating the size of their white neighborhood. We use the pruning rule to introduce a faster heuristic for computing $r$-C diverse subsets, called \texttt{Fast-C}. Whenever an object $p_i$ is colored black, \texttt{Fast-C} executes a range query to retrieve its neighbors by traversing the tree bottom-up. The query stops climbing up the tree when the first grey internal node is met. This may lead to failing to reach and thus color grey some of the neighbors of $p_i$ that reside in distant leaf nodes and thus produce 
larger results. 
However, in a tree with small overlap, we expect the number of such neighbors to be small.

\subsection{Adapting the Radius}
For zooming-in, given an $r$-DisC diverse subset $S^r$ of $\mathcal{P}$, we would like to compute an $r^{\prime}$-DisC diverse subset $S^{r^{\prime}}$ of $\mathcal{P}$, $r^{\prime}$ $<$ $r$, such that, $S^{r^{\prime}}$ $\supseteq$ $S^r$. A naive implementation would require two range queries per object in $S^r$ plus any additional range queries required to cover newly uncovered areas. During the construction of $S^r$, however, objects in the corresponding M-tree are already colored black or grey. We can exploit this information to efficiently implement the \texttt{Zoom-In} and \texttt{Greedy-Zoom-In} algorithms.

\vspace{0.1cm}
\noindent {\sc Zooming Rule:}
Black objects of $S^r$ maintain their color in $S^{r^{\prime}}$. Grey objects maintain their color as long as there exists a black object at distance at most $r^{\prime}$ from them. Therefore, only grey nodes with no black neighbors at distance $r^{\prime}$ may turn black and enter $S^{r^{\prime}}$. 
\vspace{0.1cm}

\begin{table}
	\vspace{-0.3cm}
	\scriptsize
	\renewcommand{\arraystretch}{1.1}
	\caption{Input parameters.}
	\centering
		\begin{tabular}{|l|l|l|}
			\hline
			\textbf{Parameter} & \textbf{Default value} & \textbf{Range}\\ \hline			
			M-tree node capacity & 50 & 25 - 100\\ \hline
			M-tree splitting policy & MinOverlap & various\\ \hline
			Dataset cardinality & 10000 & 579 - 50000\\ \hline
			Dataset dimensionality & 2 & 2 - 10\\ \hline
			Dataset distribution & normal & uniform, normal\\ \hline
			Distance metric & Euclidean & Euclidean, Hamming\\ \hline
	\end{tabular}
	\label{tab:parameters}
	\vspace{-0.3cm}
\end{table}

To take advantage of the {\sc Zooming Rule}, 
we extend the leaf nodes of the M-tree to include the distance of the indexed object $p_i$ to its \emph{closest} black neighbor $p_j$, since $p_i$ will continue to be covered by $p_j$ for all $r^{\prime} \leq dist(p_i, p_j)$.

The \texttt{Zoom-In} algorithm requires one pass of the leaf nodes. Each time a grey object $p_i$ is encountered, its distance from its closest black neighbor is compared against the new radius $r^{\prime}$. In case this distance is larger than $r^{\prime}$, $p_i$ is colored black and a range query $Q(p_i, r^{\prime})$ is executed to locate any objects for which $p_i$ is now the closest black neighbor and color them grey. At the end of the pass, the black objects of the leaves form $S^{r^{\prime}}$. \texttt{Greedy-Zoom-In} also requires maintenance of a sorted structure $L^{\prime}$. First, the leaf nodes are traversed, grey objects that are now uncovered are colored white and inserted into $L^{\prime}$. Then, the white neighborhoods of objects in $L^{\prime}$ are computed and $L^{\prime}$ is sorted accordingly. Finally, the first object $p_i$ of $L^{\prime}$ is retrieved and colored black; a range query $Q(p_i, r^{\prime})$ is executed and retrieved objects are colored grey. Black and grey objects are removed from $L^{\prime}$. This process is repeated until $L^{\prime}$ is empty.

Note that, our pruning technique during the construction of $S^r$ interferes with the correct computation of the distances to the closest back neighbor of the objects. Therefore, an additional post-processing step is required after the construction of $S^r$ to compute these distances.

Zooming-out algorithms are implemented similarly. A sorted structure $L^{\prime}$ is employed at the first pass of the greedy variations to process red objects in the desired order, while the same structure is used at the second step to process white objects. Finally, for local zooming in an object $p_i$, the only difference is that instead of all objects in $\mathcal{P}$, the algorithm receives as input only the objects in $N_{r}(p_i)$.

\vspace{0.2cm}
\section{Experimental Evaluation}
\label{sec:Evaluation}

\begin{table}
	\vspace{-0.3cm}
	\scriptsize
	\renewcommand{\arraystretch}{1.1}
	\caption{Input parameters.}
	\centering
		\begin{tabular}{|l|l|l|}
			\hline
			\textbf{Parameter} & \textbf{Default value} & \textbf{Range}\\ \hline			
			M-tree node capacity & 50 & 25 - 100\\ \hline
			M-tree splitting policy & MinOverlap & various\\ \hline
			Dataset cardinality & 10000 & 579 - 50000\\ \hline
			Dataset dimensionality & 2 & 2 - 10\\ \hline
			Dataset distribution & normal & uniform, normal\\ \hline
			Distance metric & Euclidean & Euclidean, Hamming\\ \hline
	\end{tabular}
	\label{tab:parameters}
	\vspace{-0.3cm}
\end{table}

In this section, to evaluate our approach, we use both synthetic and real datasets. Our synthetic datasets consist of multi dimensional objects, where values at each dimension are in $\left[0, 1\right]$. Objects are either uniformly distributed in space (``Uniform'') or form (hyper) spherical clusters of different sizes (``Clustered''). We also employ two real datasets. The first one (``Cities'') is a collection of 2-dimensional points representing geographic information about 5922 cities and villages in Greece \cite{data:cities}. We normalized the values of this dataset in $\left[0, 1\right]$. The second real dataset (``Cameras'') consists of 7 characteristics for 579 digital cameras from \cite{acme}, such as brand and storage type.

We use the Euclidean distance for the synthetic datasets and ``Cities'', while for ''Cameras'', whose attributes are categorical, we use $dist(p_i, p_j)$ $=$ $\sum_i \delta^i(p_i, p_j)$, where $\delta^i(p_i, p_j)$ is equal to 1, if $p_i$ and $p_j$ differ in the $i^{\textnormal{th}}$ dimension and 0 otherwise, i.e., the Hamming distance. Note that the choice of an appropriate distance metric is an important but orthogonal to our approach issue.

Table~\ref{tab:parameters} summarizes the values of the input parameters used in our experiments.

\begin{table}
\centering
\scriptsize
\caption{Solution size for the \texttt{Basic-DisC}, \texttt{Greedy-C} and the variations of \texttt{Greedy-DisC} algorithms.}
\label{tab:size-comparison}
\renewcommand{\arraystretch}{1.05}
\subfloat[Uniform (2D - 10000 objects).]{
\begin{tabular}{|l|l|l|l|l|l|l|l|}
		\hline
		& \multicolumn{7}{c|}{$\mathbi{r}$}\\
		\cline{2-8}
		~	& \textbf{0.01}	& \textbf{0.02}	& \textbf{0.03}	& \textbf{0.04}	& \textbf{0.05}	& \textbf{0.06}	& \textbf{0.07}\\\hline
		\textbf{B-DisC}									& 3839	& 1360	& 676	& 411	& 269	& 192	& 145	\\
		\textbf{G-DisC}	& 3260	& 1120	& 561	& 352	& 239	& 176	& 130	\\
		\textbf{L-Gr-G-DisC}				& 3384	& 1254	& 630	& 378	& 253	& 184	& 137	\\
		\textbf{L-Wh-G-DisC}			& 3293	& 1152	& 589	& 352	& 240	& 170	& 130	\\
		\textbf{G-C}									& 3427	& 1104	& 541	& 338	& 230	& 170	& 126	\\
		\hline
	\end{tabular}
}\\\vspace{-0.1cm}
\subfloat[Clustered (2D - 10000 objects).]{
	\begin{tabular}{|l|l|l|l|l|l|l|l|}
		\hline
		& \multicolumn{7}{c|}{$\mathbi{r}$}\\
		\cline{2-8}
		~	& \textbf{0.01}	& \textbf{0.02}	& \textbf{0.03}	& \textbf{0.04}	& \textbf{0.05}	& \textbf{0.06}	& \textbf{0.07}\\\hline
		\textbf{B-DisC}									& 1018	& 370	& 193	& 121	& 80	& 61	& 48	\\
		\textbf{G-DisC}	& 892		& 326	& 162	& 102	& 69	& 52	& 43	\\
		\textbf{L-Gr-G-DisC}				& 680		& 394	& 218	& 133	& 87	& 64	& 49	\\
		\textbf{L-Wh-G-DisC}			& 906		& 313	& 168	& 104	& 70	& 52	& 41	\\
		\textbf{G-C}									& 895		& 322	& 166	& 102	& 71	& 50	& 43	\\
		\hline
	\end{tabular}
}\\\vspace{-0.1cm}
\subfloat[Cities.]{
	\begin{tabular}{|l|l|l|l|l|l|l|l|}
		\hline
		& \multicolumn{7}{c|}{$\mathbi{r} \cdot 10^{-2}$}\\
		\cline{2-8}
		~	& \textbf{0.10}	& \textbf{0.25}	& \textbf{0.50}	& \textbf{0.75}	& \textbf{1.00}	& \textbf{1.25}	& \textbf{1.50}\\\hline
		\textbf{B-DisC}									& 2534	& 743	& 269	& 144	& 96	& 68	& 11	\\
		\textbf{G-DisC}	& 2088		& 569	& 209	& 123	& 76	& 54	& 8	\\
		\textbf{L-Gr-G-DisC}				& 2166		& 638	& 239	& 124	& 90	& 62	& 9	\\
		\textbf{L-Wh-G-DisC}			& 2106		& 587	& 221	& 115	& 79	& 56	& 8	\\
		\textbf{G-C}									& 2188		& 571	& 205	& 117	& 79	& 52	& 7	\\
		\hline
	\end{tabular}
}\\\vspace{-0.1cm}
\subfloat[Cameras.]{
	\begin{tabular}{|l|l|l|l|l|l|l|}
		\hline
		& \multicolumn{6}{c|}{$\mathbi{r}$}\\
		\cline{2-7}
		~	& \textbf{1}	& \textbf{2}	& \textbf{3}	& \textbf{4}	& \textbf{5}	& \textbf{6}\\\hline
		\textbf{B-DisC}									& 461	& 237	& 103	& 34	& 9	& 4		\\
		\textbf{G-DisC}			& 461	& 212	& 78	& 28	& 9 & 2		\\
		\textbf{L-Gr-G-DisC}				& 461		& 216	& 80	& 31	& 9	& 2		\\
		\textbf{L-Wh-G-DisC}			& 461		& 212	& 81	& 28	& 9	& 2		\\
		\textbf{G-C}									& 461		& 218	& 74	& 25	& 6	& 2		\\
		\hline
	\end{tabular}
}
\vspace{-0.6cm}
\end{table}

\begin{figure*}
	\vspace{-0.2cm}
	\centering
	\subfloat[Uniform.]{		
		\includegraphics[width=4.0cm]{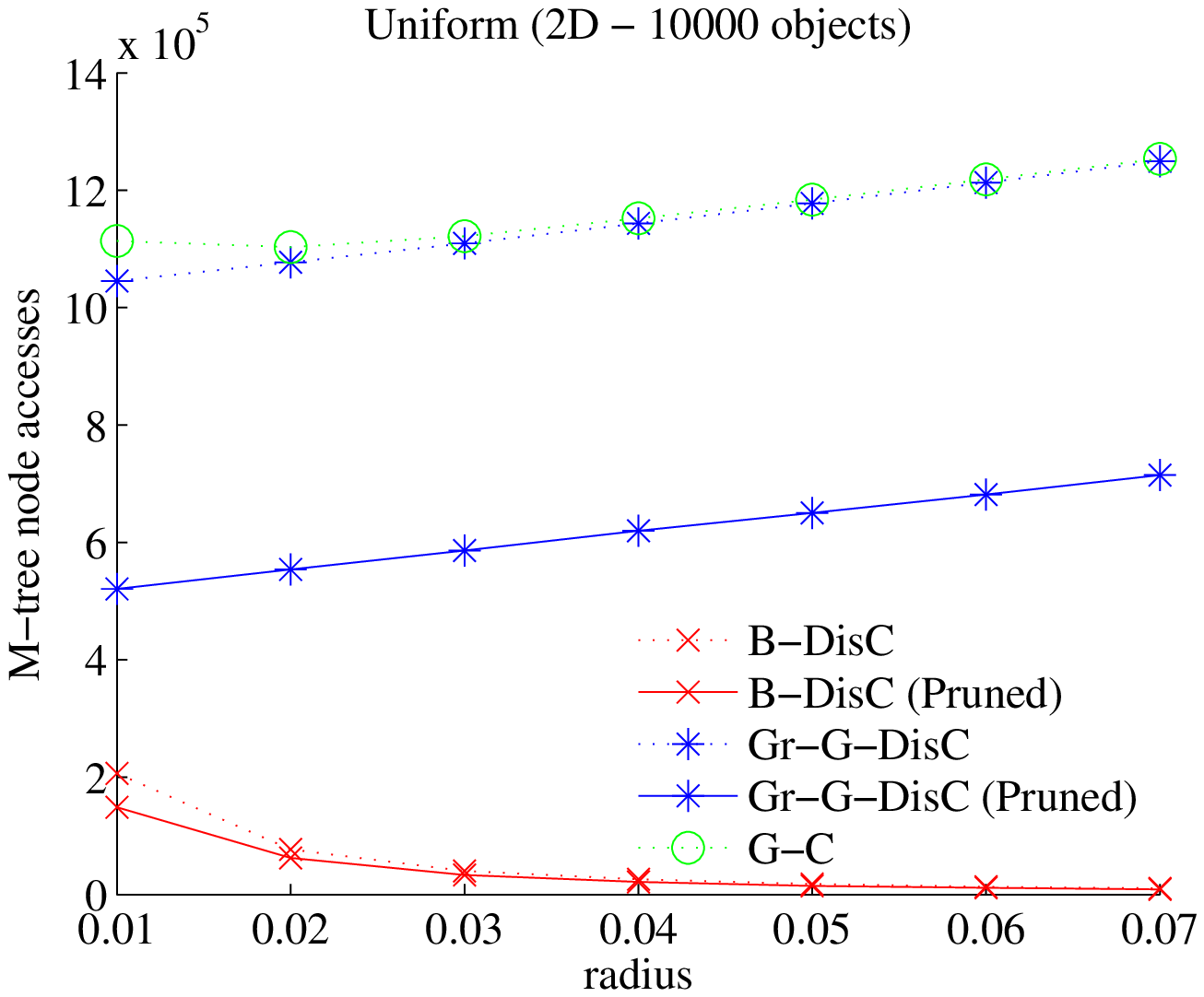}
	\label{fig:active-pruning1-uniform}
	}
	\subfloat[Clustered.]{		
		\includegraphics[width=4.0cm]{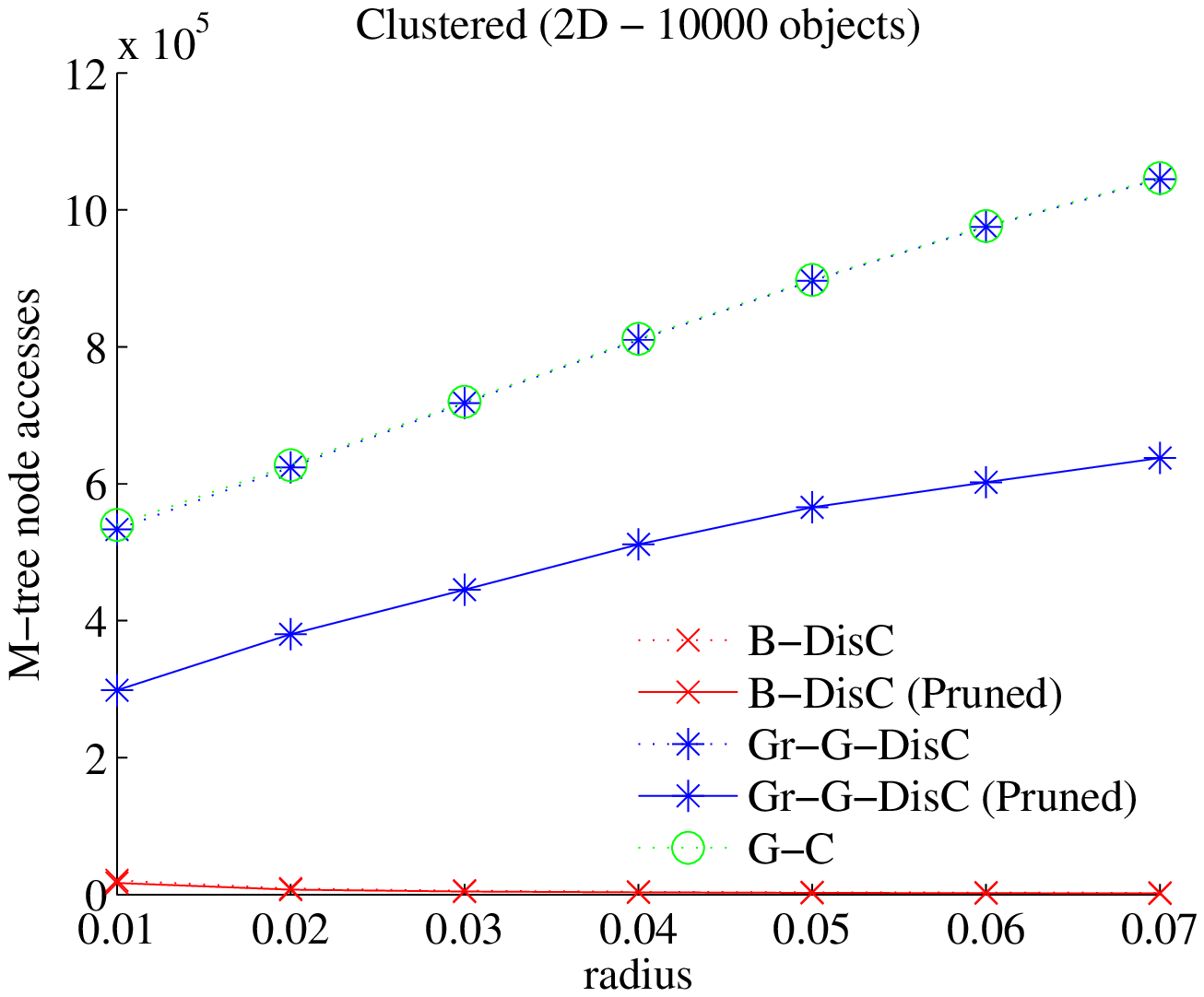}
	\label{fig:active-pruning1-clustered}
	}
	\subfloat[Cities.]{		
		\includegraphics[width=4.0cm]{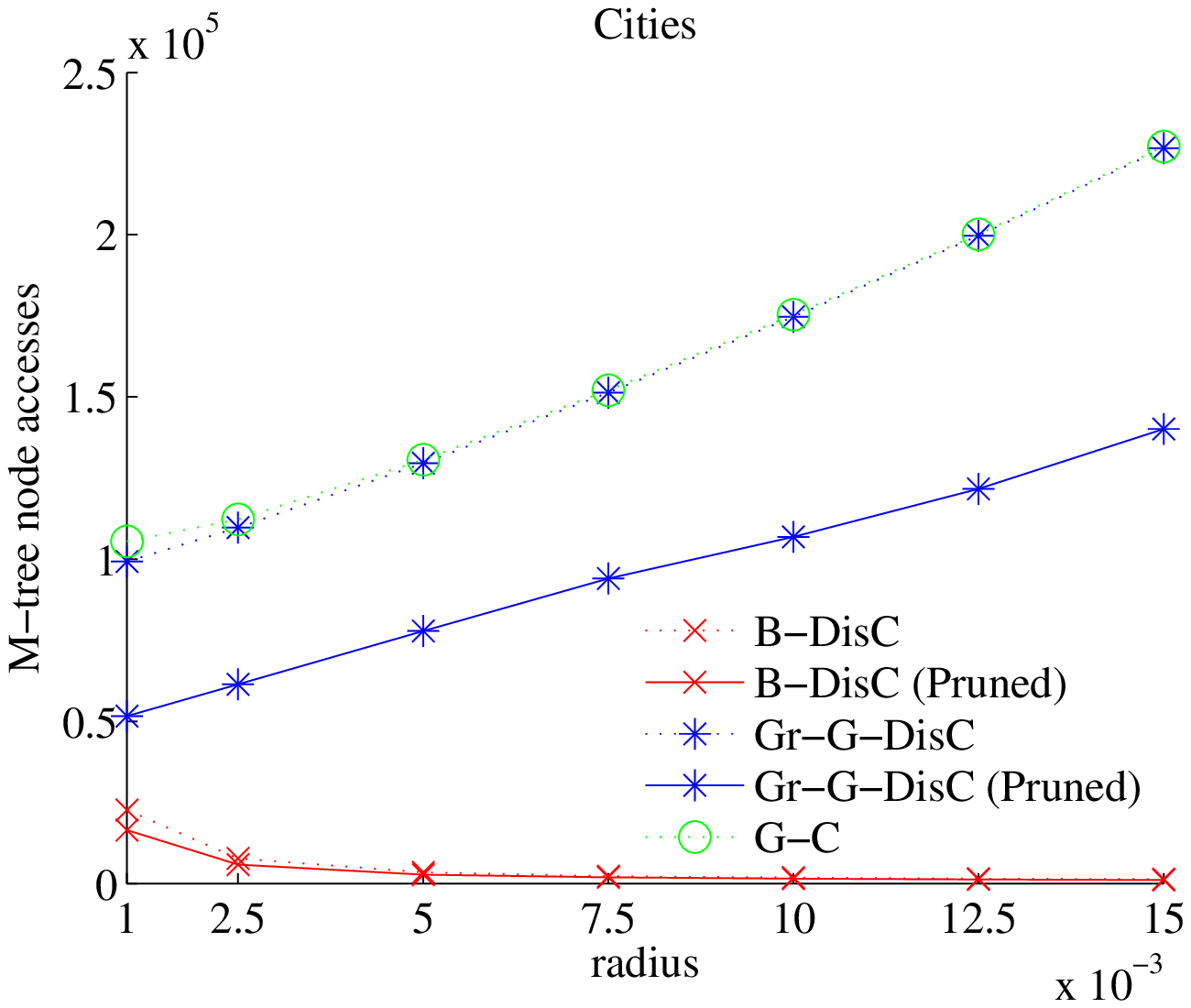}
	\label{fig:active-pruning1-cities}
	}
	\subfloat[Cameras.]{		
		\includegraphics[width=4.0cm]{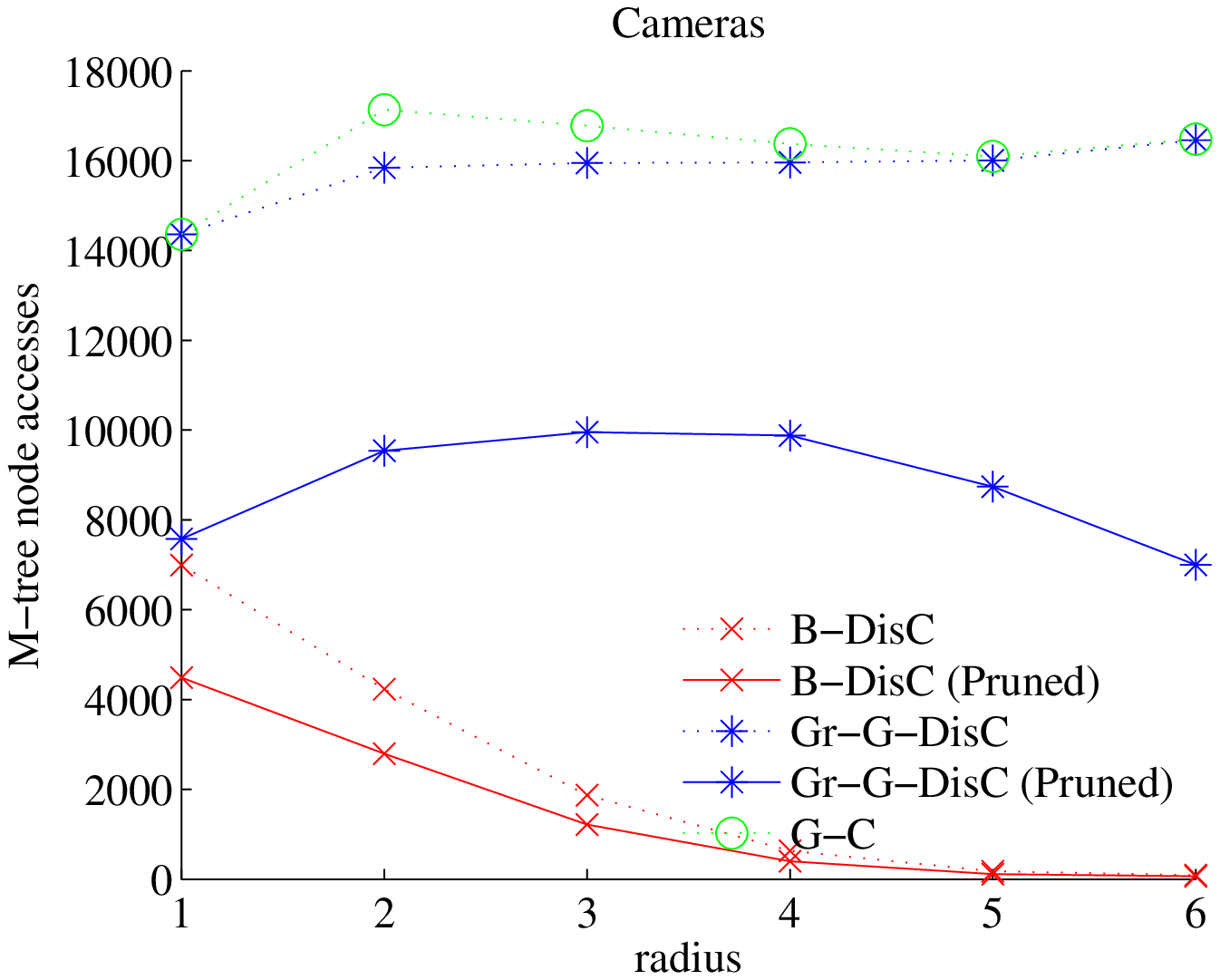}
	\label{fig:active-pruning1-cameras}
	}	
	\vspace{-0.2cm}
	\caption{Node accesses for \texttt{Basic-DisC}, \texttt{Grey-Greedy-DisC} and \texttt{Greedy-DS} with and without pruning.}
	\label{fig:active-pruning1}
	\vspace{-0.2cm}
\end{figure*}

\begin{figure*}
	\vspace{-0.2cm}
	\centering
	\subfloat[Uniform.]{		
		\includegraphics[width=4.0cm]{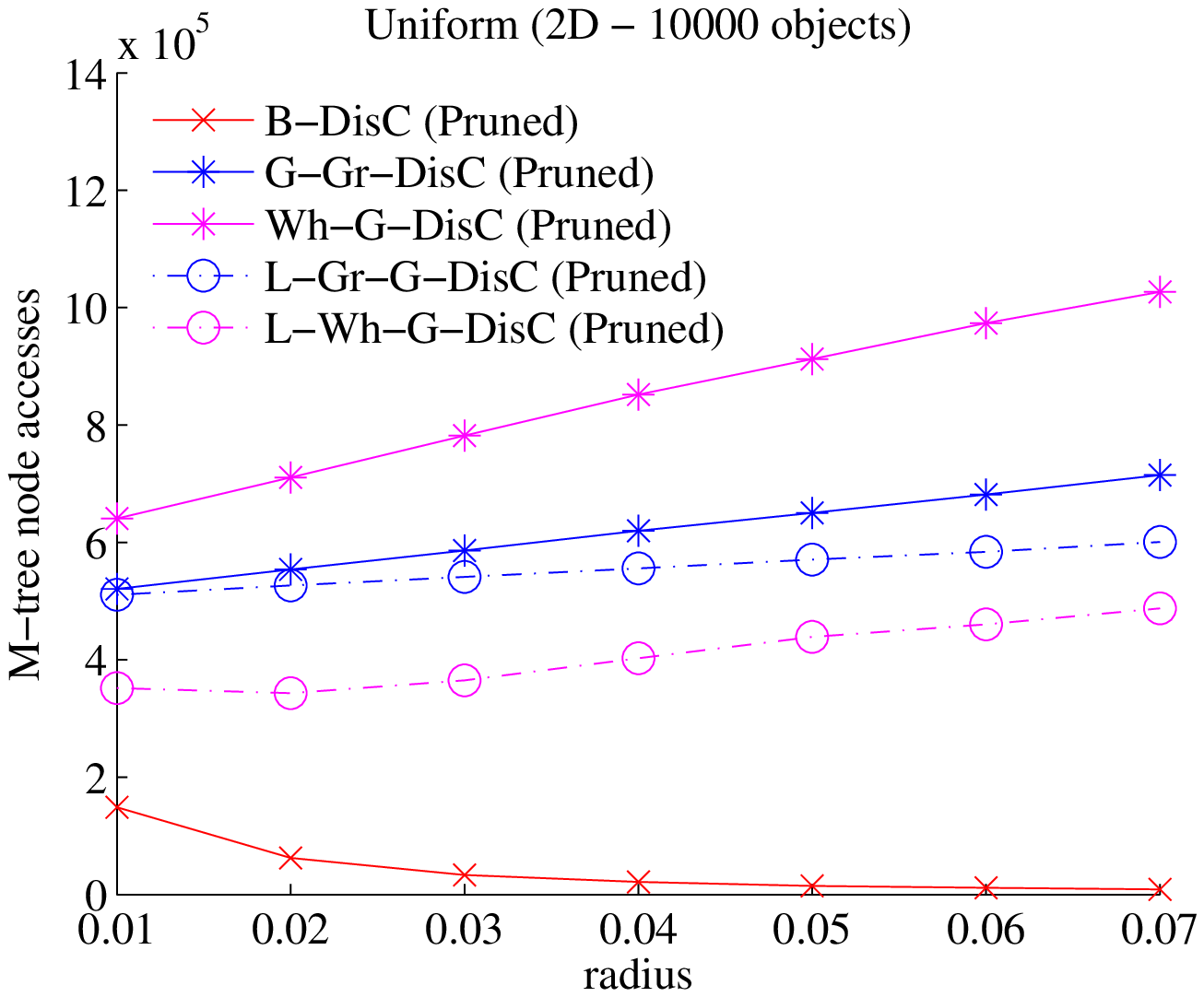}
	\label{fig:active-pruning2-uniform}
	}
	\subfloat[Clustered.]{		
		\includegraphics[width=4.0cm]{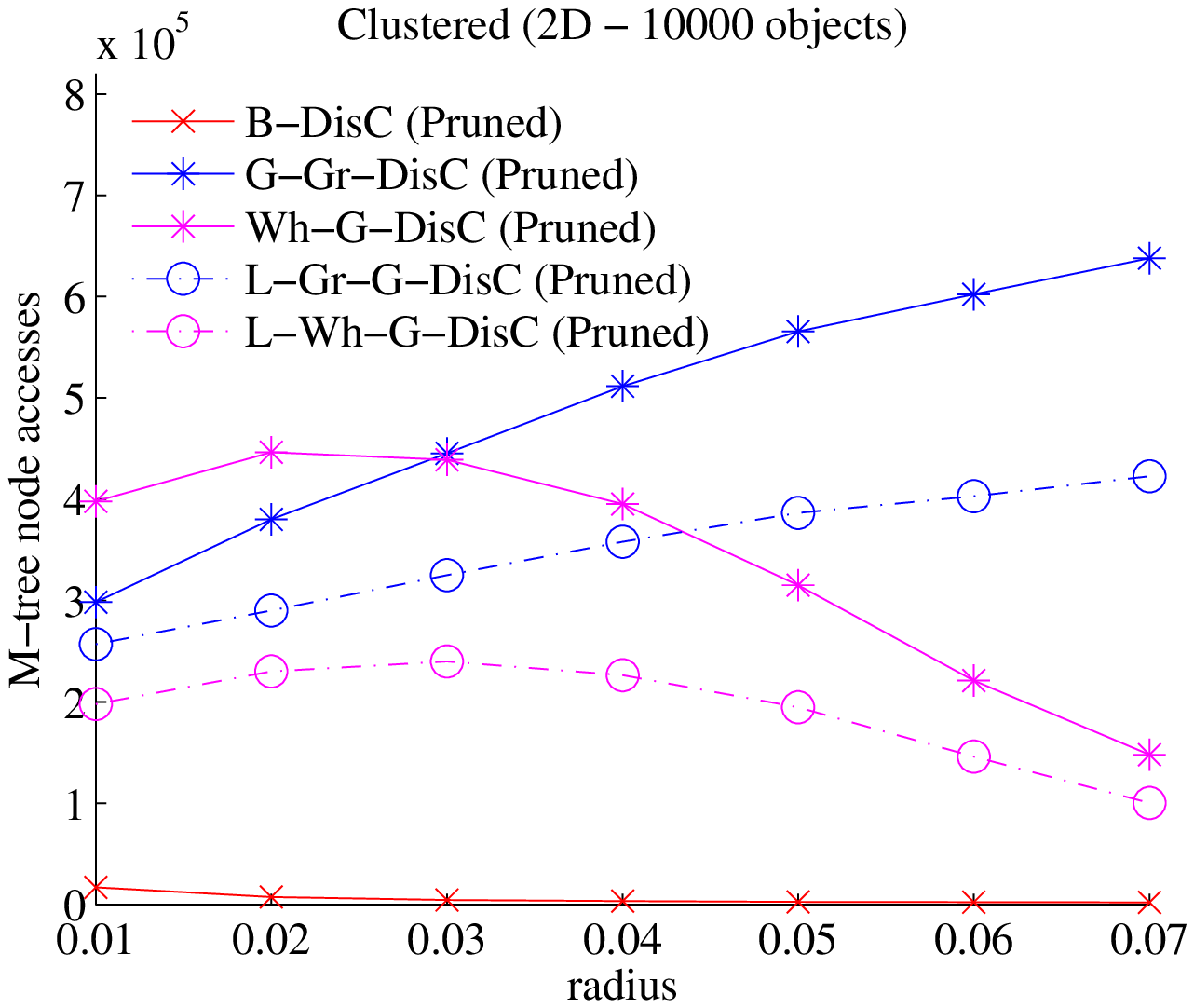}
	\label{fig:active-pruning2-clustered}
	}
	\subfloat[Cities.]{		
		\includegraphics[width=4.0cm]{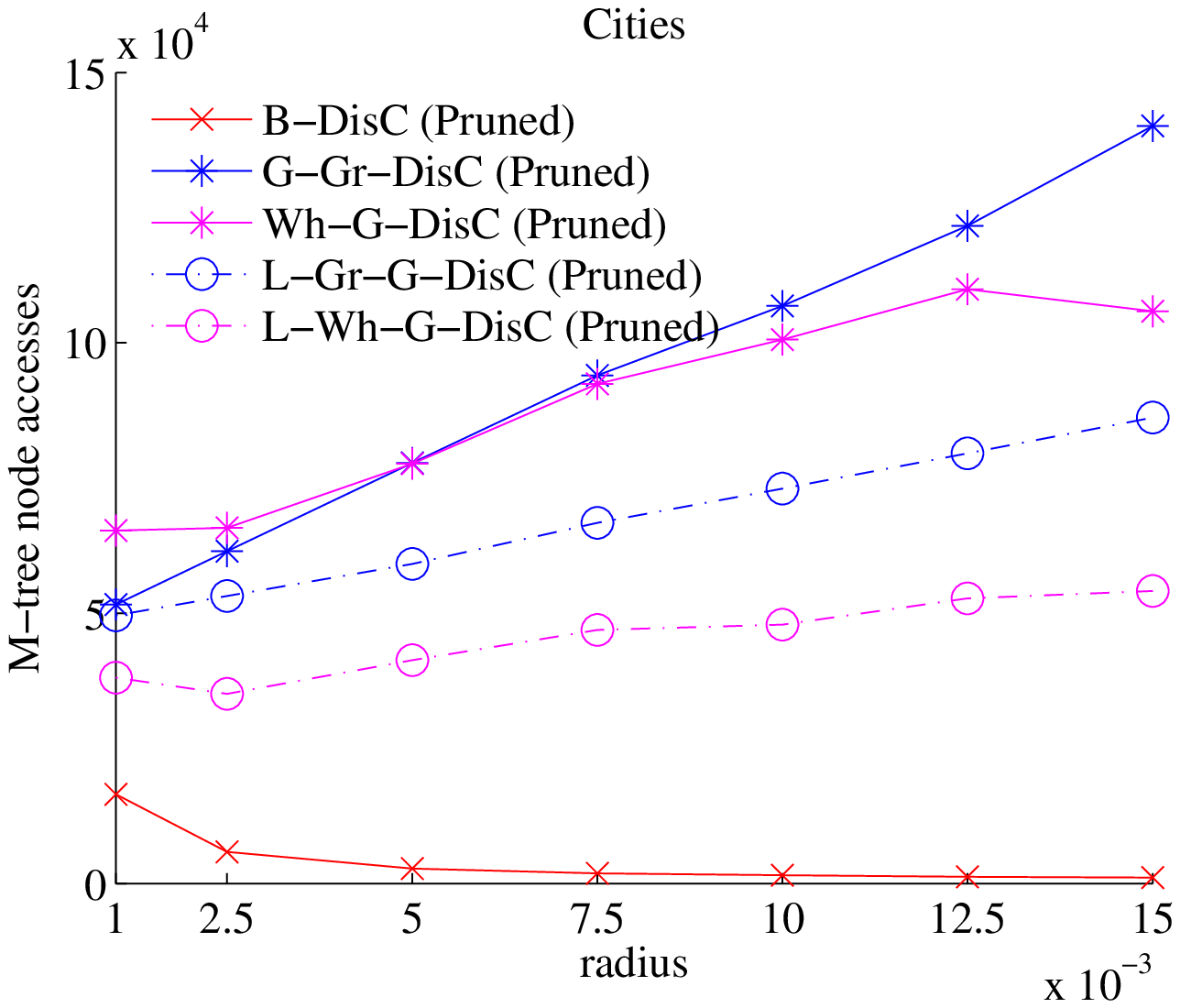}
	\label{fig:active-pruning2-cities}
	}	
	\subfloat[Cameras.]{		
		\includegraphics[width=4.0cm]{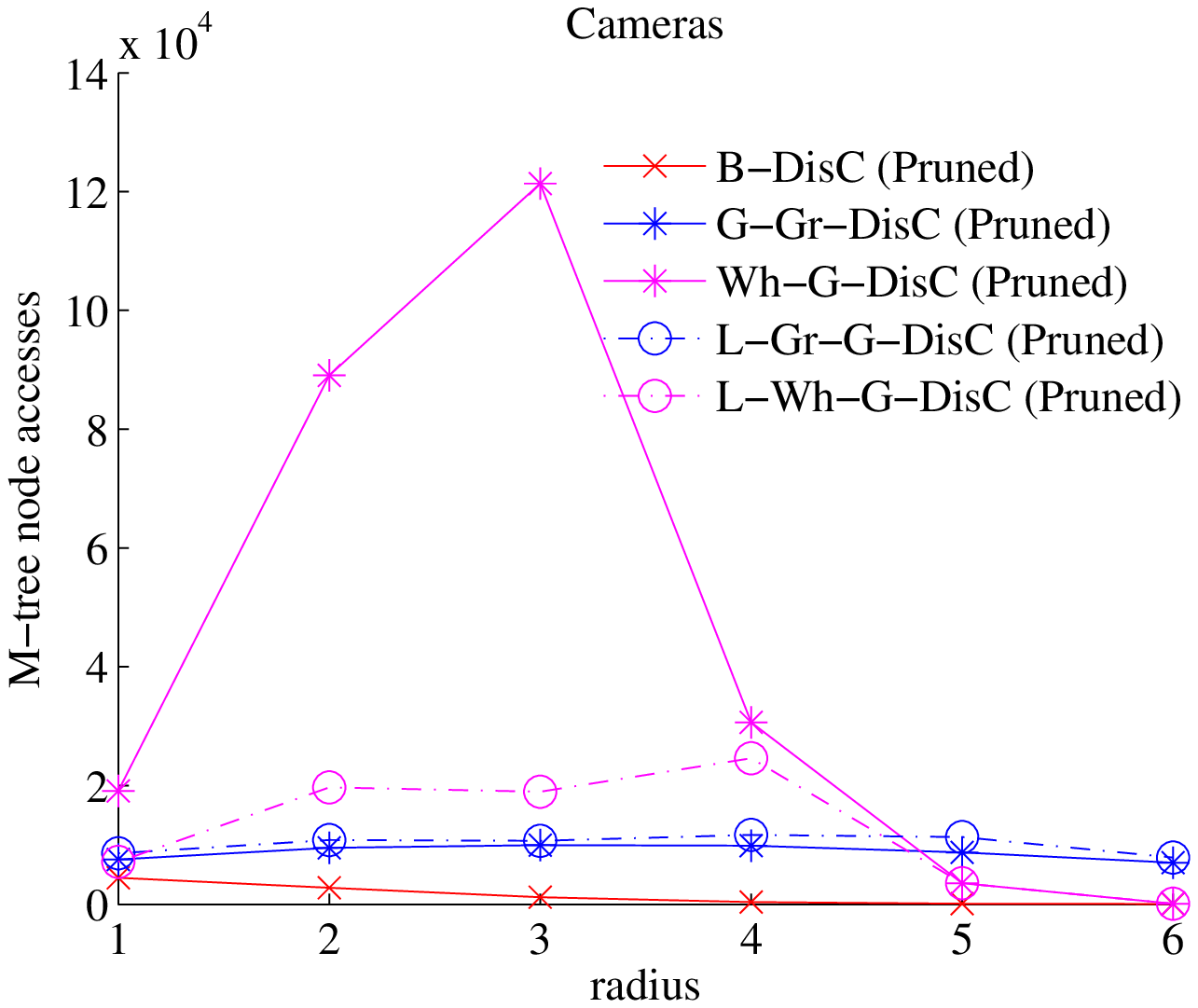}
	\label{fig:active-pruning2-cameras}
	}	
	\vspace{-0.2cm}
	\caption{Node accesses for for \texttt{Basic-DisC} and all variations of \texttt{Greedy-DisC} with pruning.}
	\label{fig:active-pruning2}
	\vspace{-0.2cm}
\end{figure*}

\begin{figure*}
	\vspace{-0.3cm}
	\centering
	\subfloat[Solution size.]{		
		\includegraphics[width=3.9cm]{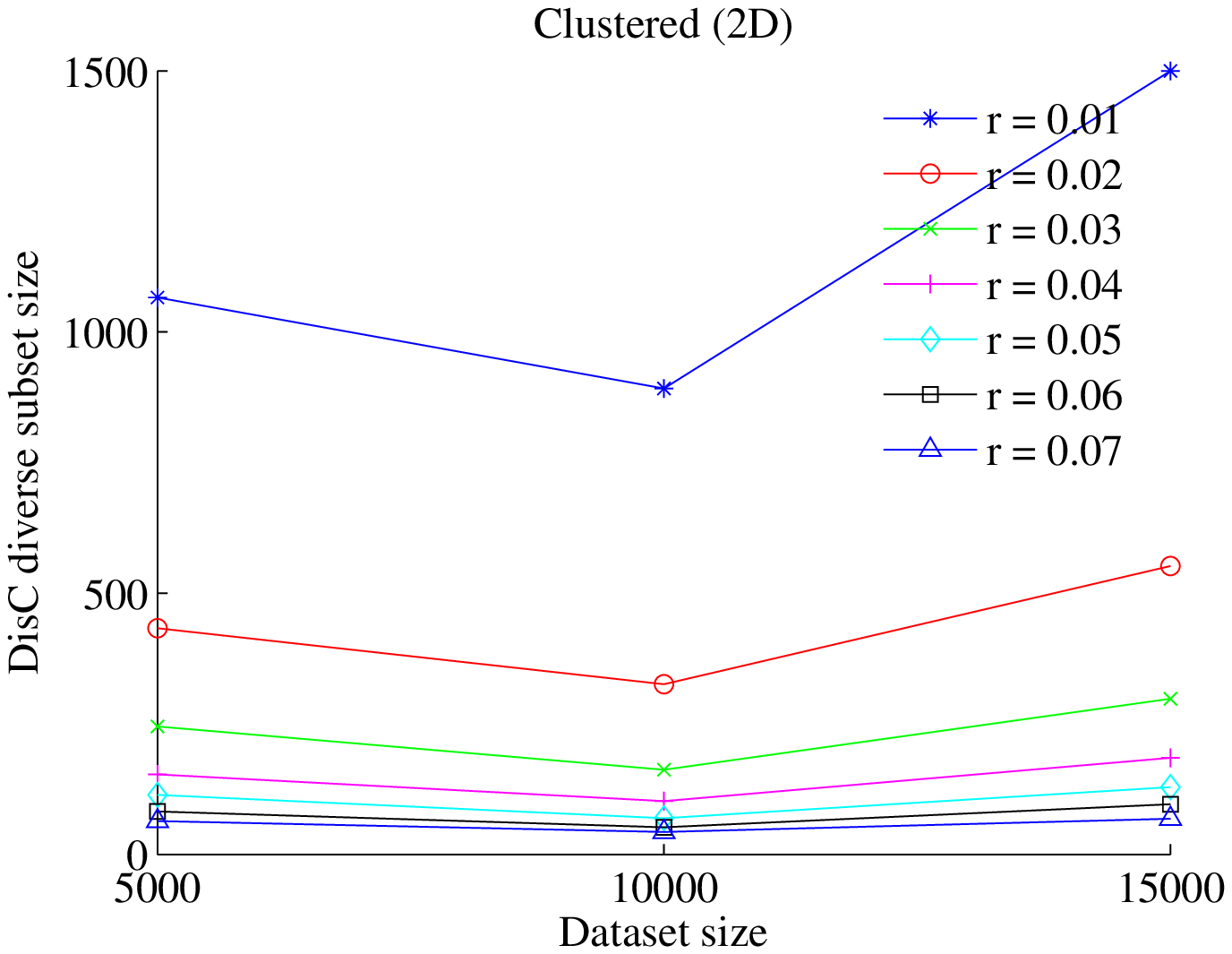}
	\label{fig:datasetsize-size}
	}
	\subfloat[Node accesses.]{		
		\includegraphics[width=3.9cm]{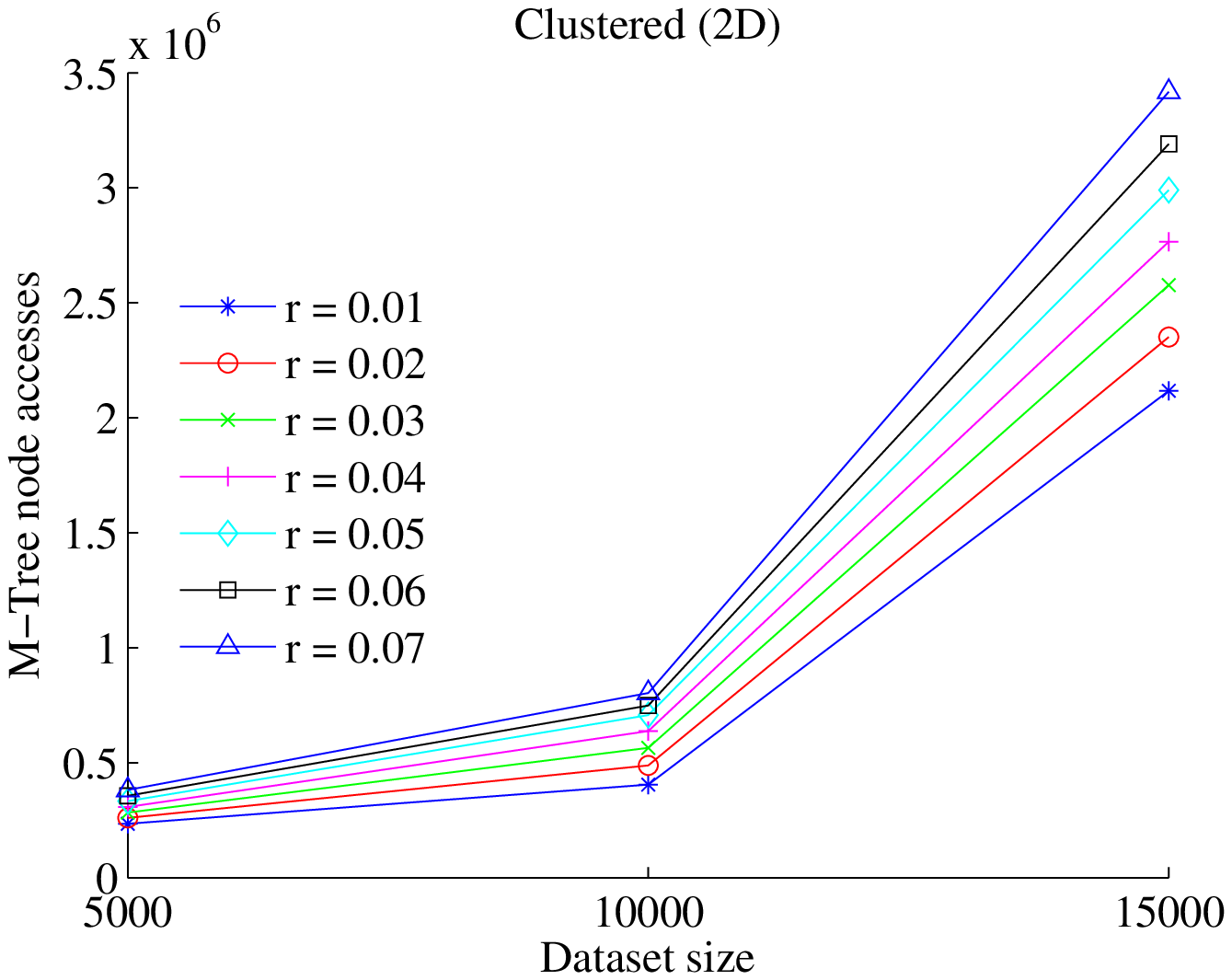}
	\label{fig:datasetsize-accesses}
	}
	\subfloat[Solution size.]{		
		\includegraphics[width=3.9cm]{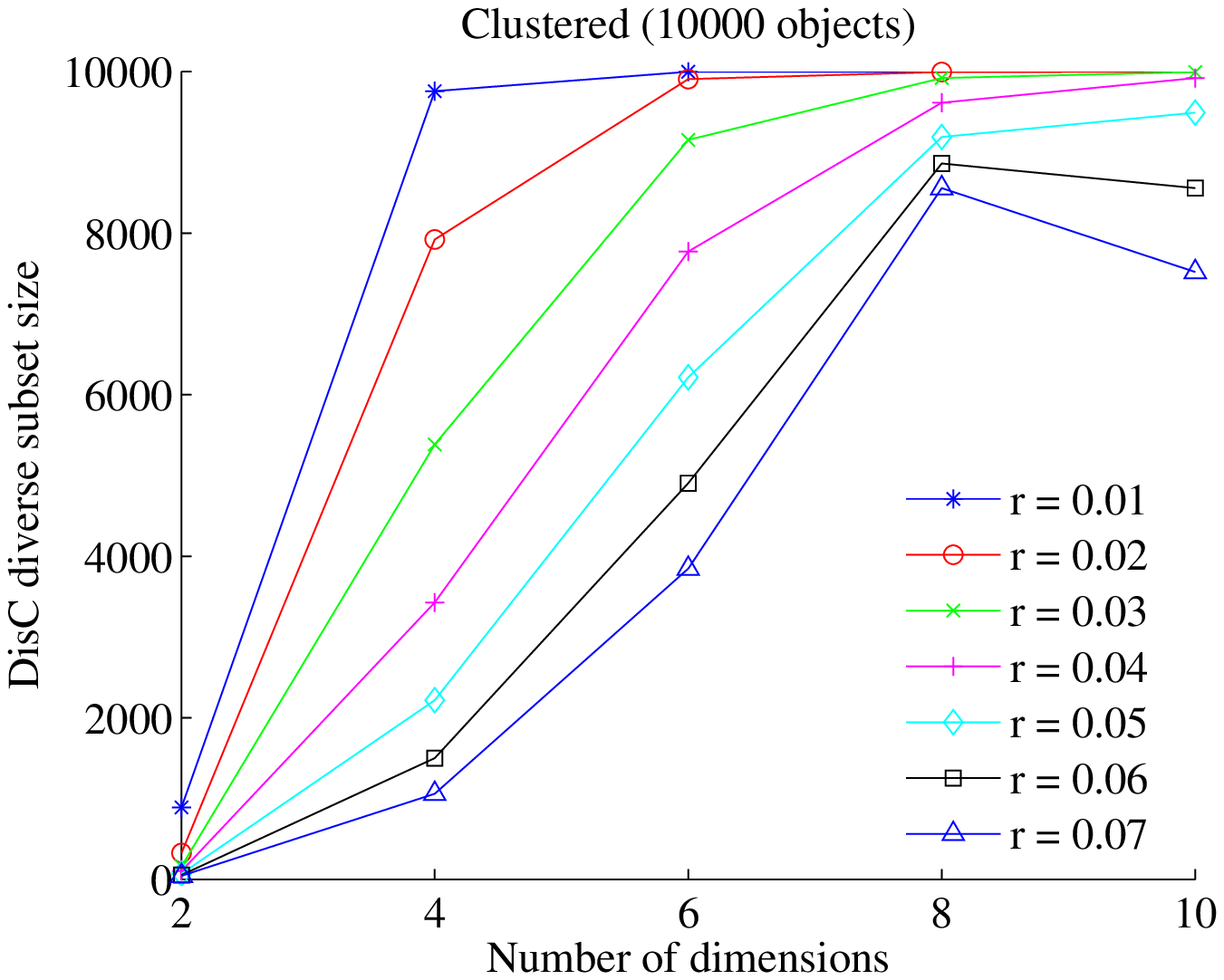}
	\label{fig:dimensionsize-size}
	}
	\subfloat[Node accesses.]{		
		\includegraphics[width=3.9cm]{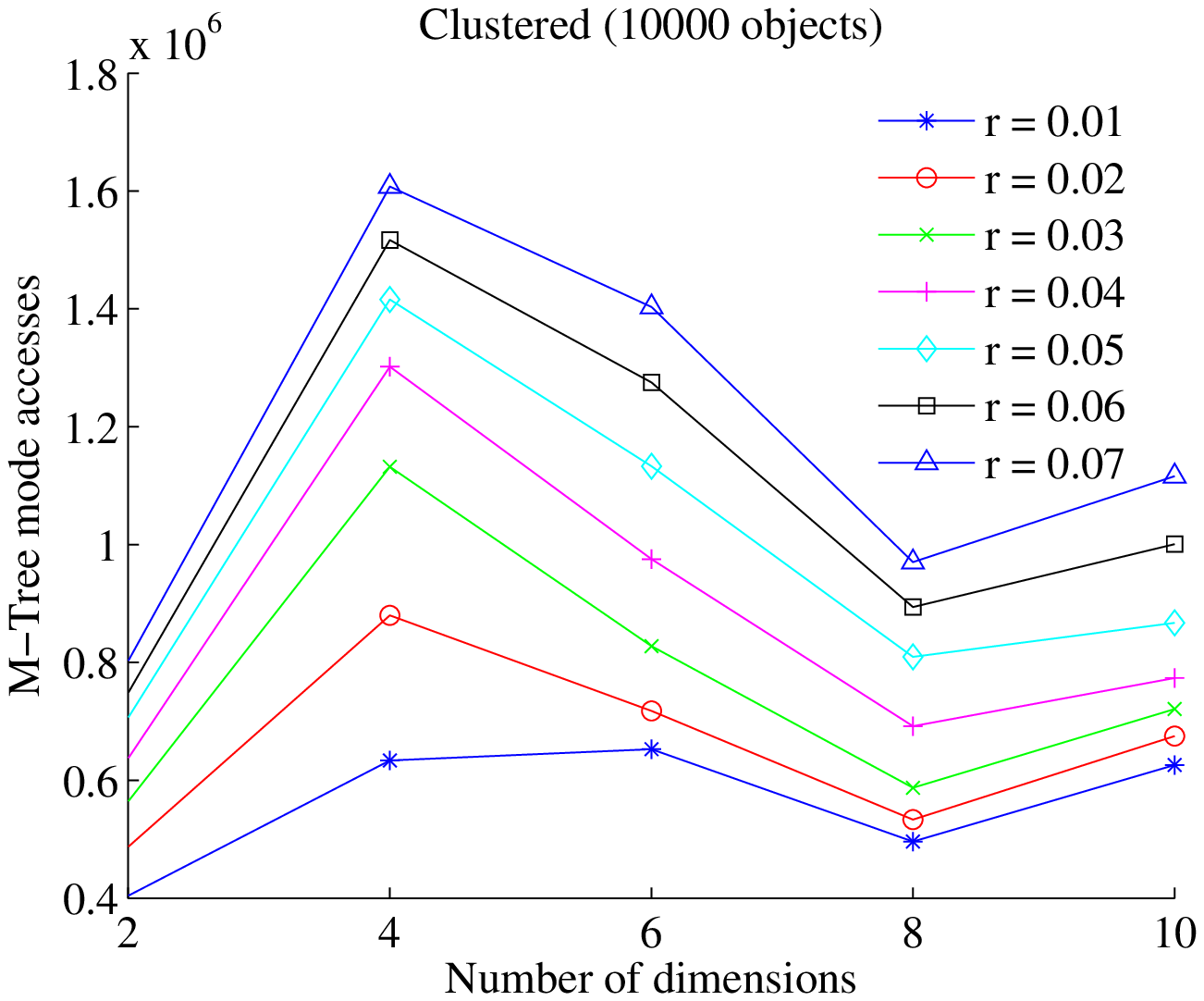}
	\label{fig:dimensionsize-accesses}
	}
	\vspace{-0.3cm}
	\caption{Varying (a)-(b) cardinality and (c)-(d) dimensionality.}
	\label{fig:dimensioncarinality}
	\vspace{-0.3cm}
\end{figure*}

\vspace{0.2cm}
\noindent\textbf{Solution Size and Computational Cost:} 
We first compare our various heuristics in terms of the size of the computed diverse subset and the cost of its computation. The computational cost is measured in terms of the required accesses to the nodes of the M-tree which indexes the objects. Table~\ref{tab:size-comparison} shows the solution size for different radii, while Figure~\ref{fig:active-pruning1} reports the computational cost for \texttt{Basic-DisC}, \texttt{Grey-Greedy-DisC} and \texttt{Greedy-C} (or \texttt{B-DisC}, \texttt{Gr-G-DisC} and \texttt{G-C} respectively) and also shows the savings in accesses when the pruning rule of Section~\ref{sec:Indexing} is employed for \texttt{Basic-DisC} and \texttt{Grey-Greedy-DisC} (as previously detailed, this pruning cannot be applied to \texttt{Greedy-C}).

\texttt{Grey-Greedy-DisC} locates a smaller DisC diverse subset than \texttt{Basic-DisC} in all cases. This, however, has the trade-off of increased computational cost. The additional computational cost becomes more significant as the radius increases. The reason for this is that \texttt{Grey-Greedy-DisC} performs significantly more range queries than \texttt{Basic-DisC}. As the radius increases, objects have more neighbors and, thus, more M-tree nodes need to be accessed in order to retrieve them. On the contrary, the cost of \texttt{Basic-DisC} is reduced when the radius increases. This happens because this heuristic performs a single pass of the leaves of the M-tree. For larger radii, more objects are colored grey by each selected (black) object and, therefore, less range queries are performed. \texttt{Greedy-C} has similar behavior with \texttt{Grey-Greedy-DisC} in terms of solution size. This means that raising the independence assumption does not always lead to smaller diverse subsets as one might expect. Note that, the computed diverse subsets by all heuristics for the ``Clustered'' dataset are smaller than for the ``Uniform'' one, since objects are generally more similar to each other. Both heuristics benefit from pruning (up to 50\% for small radii). We also experimented with employing bottom-up rather than top-down range queries. At most cases, the benefit in node accesses was less than 5\%. The \texttt{Fast-C} heuristic described in Section~\ref{sec:Indexing} required up to 30\% less node accesses than \texttt{Greedy-C}, while computing similar sized solutions. However, the solutions had a larger percentage of independent objects than those of \texttt{Greedy-C}. We omit the relative figures due to space limitations.

Figure~\ref{fig:active-pruning2} compares \texttt{Grey-Greedy-DisC} with \texttt{White-Greedy-} \texttt{DisC} and also two ``Lazy'' variations of them, where we perform range queries with a smaller radius than $r$ and $2r$ for each grey or white object respectively; instead, we use the values $r/2$ and $3r/2$, aiming at reducing the computational cost by not updating the white neighborhoods of objects that are further away from the newly colored objects. We call these two variations \texttt{Lazy-Grey-Greedy-DisC} and \texttt{Lazy-White-Greedy} \texttt{-DisC} (or \texttt{L-Gr-G-DisC} and \texttt{L-Wh-G-DisC} respectively). We see that \texttt{White-Greedy-DisC} performs very well for the clustered dataset as $r$ increases, since in that case objects are generally closer to each other and, therefore, at each iteration many neighbors of a white object may turn grey at the same time. The lazy variations can further reduce the computational cost of the heuristics with the trade-off of slightly larger solution sizes (Table~\ref{tab:size-comparison}).

In the rest of this section, unless otherwise noted, we use the \texttt{(Grey-)Greedy-DisC (Pruned)} heuristic.

\vspace{0.2cm}
\noindent\textbf{Impact of Dataset Cardinality and Dimensionality:} 
For this experiment, we employ the ``Clustered'' dataset and vary its cardinality from 5000 to 15000 objects and its dimensionality from 2 to 10 dimensions. Figure~\ref{fig:dimensioncarinality} shows the corresponding solution size and computational cost as computed by the \texttt{Greedy-DisC} heuristic. We observe that the solution size is more sensitive to changes in cardinality when the radius is small. The reason for this is that for large radii, a selected object covers a large area in space. Therefore, even when the cardinality increases and there are many available objects to choose from, these objects are quickly covered by the selected ones. In Figure~\ref{fig:dimensioncarinality}\subref{fig:datasetsize-accesses}, the increase in the computational cost is due to the increase of range queries required to maintain correct information about the size of the white neighborhoods.

Increasing the dimensionality of the dataset causes more objects to be selected as diverse as shown in Figure~\ref{fig:dimensioncarinality}\subref{fig:dimensionsize-size}. This is due to the ``curse of dimensionality'' effect, since space becomes sparser at higher dimensions. The computational cost may however vary as dimensionality increases, since it is influenced by the cost of computing the neighborhood size of the objects that are colored grey.

\vspace{0.2cm}
\noindent\textbf{Impact of M-tree Characteristics:} 
Next, we evaluate how the characteristics of the employed M-trees affect the computational cost of computed DisC diverse subsets. Note that, different tree characteristics do not have an impact on which objects are selected as diverse.

Different degree of overlap among the nodes of an M-tree may affect its efficiency for executing range queries. To quantify such overlap, we employ the \emph{fat-factor} \cite{DBLP:journals/tkde/TrainaTFS02} of the tree defined as:
\vspace{-0.3cm}
\[
f(T) = \frac{Z - nh}{n} \cdot \frac{1}{m - h}
\vspace{-0.2cm}
\]
where $Z$ denotes the total number of node accesses required to answer point queries for all objects stored in the tree, $n$ the number of these objects, $h$ the height of the tree and $m$ the number of nodes in the tree. Ideally, the tree would require accessing one node per level for each point query which yields a fat-factor of zero. The worst tree would visit all nodes for every point query and its fat-factor would be equal to one.

We created various M-trees using different splitting policies which result in different fat-factors. We present results for four different policies. The lowest fat-factor was acquired by employing the ``MinOverlap'' policy. Selecting as new pivots the two objects with the greatest distance from each other resulted in increased fat-factor. Even higher fat-factors were observed when assigning an equal number of objects to each new node (instead of assigning each object to the node with the closest pivot) and, finally, selecting the new pivots randomly produced trees with the highest fat-factor among all policies.

Figure~\ref{fig:fftc} reports our results for our uniform and clustered 2-dimensional datasets with cardinality equal to 10000. For the uniform dataset, we see that a high fat-factor leads to more node accesses being performed for the \emph{same} solution. This is not the case for the clustered dataset, where objects are gathered in dense areas and thus increasing the fat-factor does not have the same impact as in the uniform case, due to pruning and locality. As the radius of the computed subset becomes very large, the solution size becomes very small, since a single object covers almost the entire dataset, this is why all lines of Figure~\ref{fig:fftc} begin to converge for $r > 0.7$. 

We also experimented with varying the capacity of the nodes of the M-tree. Trees with smaller capacity require more node accesses since more nodes need to be recovered to locate the same objects; when doubling the node capacity, the computational cost was reduced by almost 45\%.

\begin{figure}
	\vspace{-0.3cm}
	\centering
	\subfloat[Uniform.]{		
		\includegraphics[width=3.9cm]{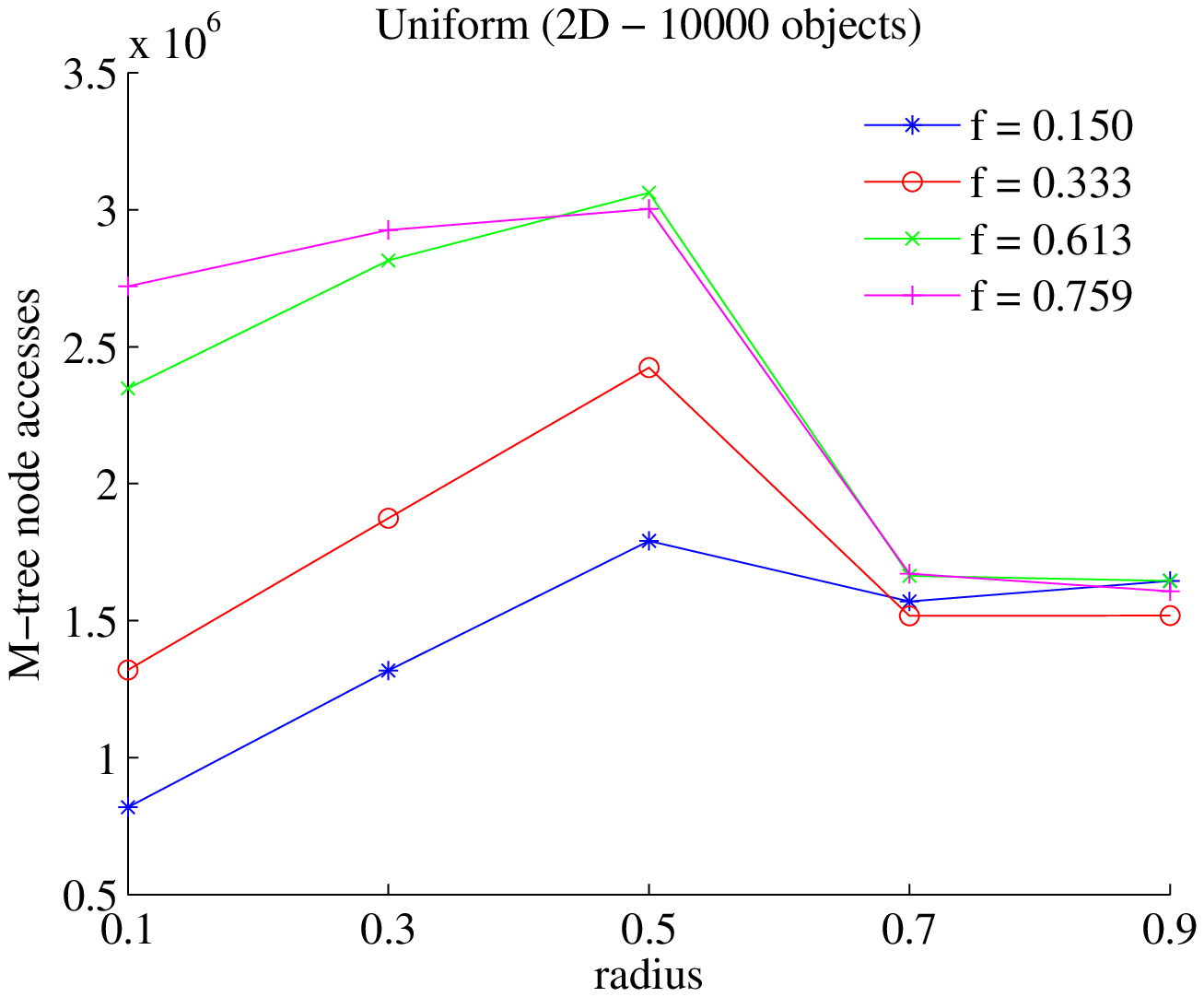}
		\label{fig:fatfactor-uniform}
	}
	\subfloat[Clustered.]{		
		\includegraphics[width=3.9cm]{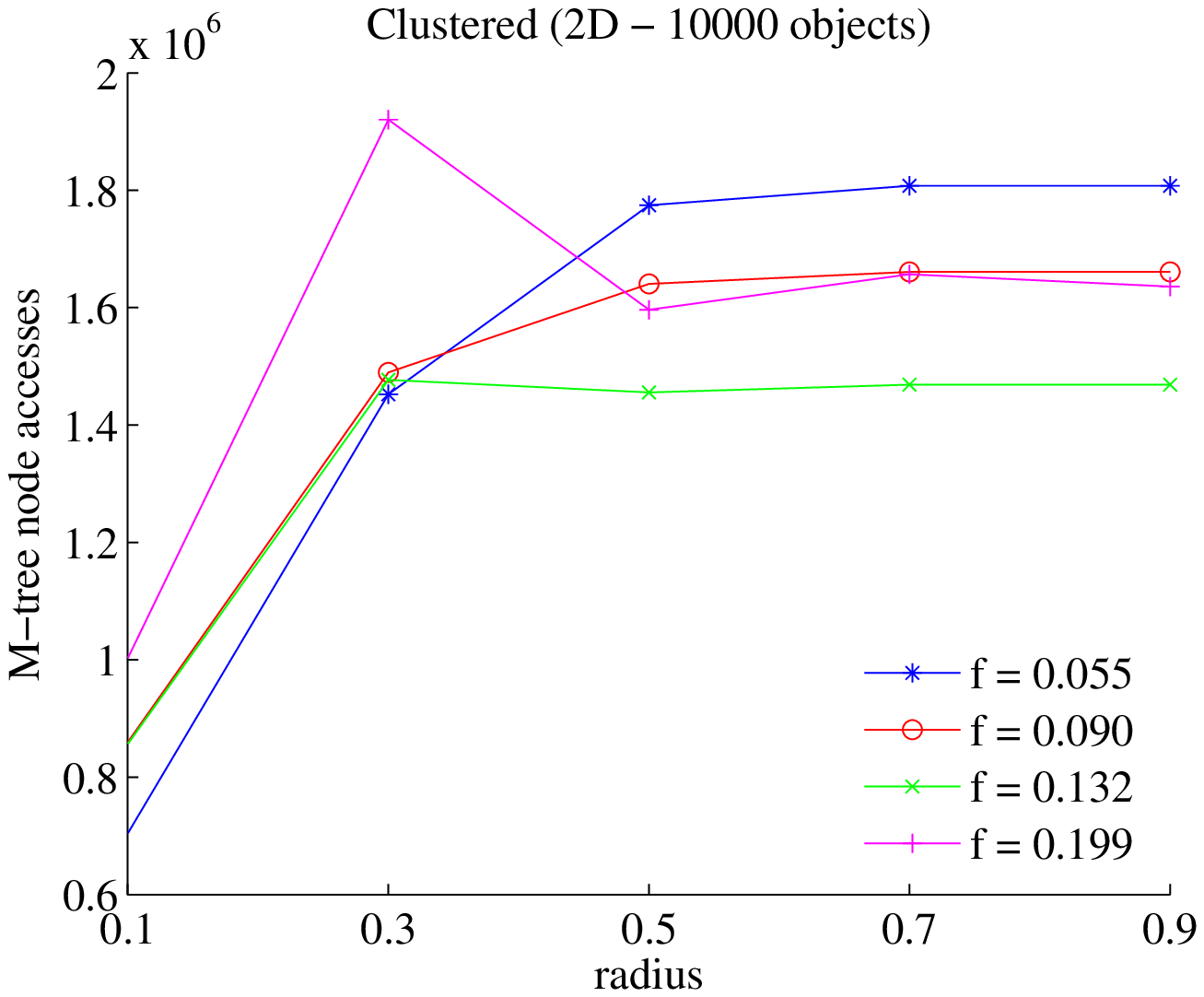}
		\label{fig:fatfactor-clustered}
	}
	\vspace{-0.3cm}
	\caption{Varying M-tree fat-factor.}
	\label{fig:fftc}
	\vspace{-0.3cm}
\end{figure}

\vspace{0.2cm}
\noindent\textbf{Zooming:} 
In the following, we evaluate the performance of our zooming algorithms. We begin with the zooming-in heuristics. To do this, we first generate solutions with \texttt{Greedy-DisC} for a specific radius $r$ and then adapt these solutions for radius $r^{\prime}$. We use \texttt{Greedy-DisC} because it gives the smallest sized solutions. We compare the results to the solutions generated from scratch by \texttt{Greedy-DisC} for the new radius. The comparison is made in terms of solution size, computational cost and also the relation of the three solutions as measured by the Jaccard distance.
Figure~\ref{fig:zoomin-size} and Figure~\ref{fig:zoomin-accesses} report the corresponding results for different radii. Due to space limitations, we report results for the ``Clustered'' and ``Cities'' datasets. Similar results are obtained for the other datasets as well. Each solution reported for the zooming-in algorithms is adapted from the \texttt{Greedy-DisC} solution for the immediately larger radius and, thus, the x-axis is reversed for clarity; e.g., the zooming solutions for $r = 0.02$ in Figure~\ref{fig:zoomin-size}\subref{fig:zoomin-size-clustered} and Figure~\ref{fig:zoomin-accesses}\subref{fig:zoomin-accesses-clustered} are adapted from the \texttt{Greedy-DisC} solution for $r = 0.03$.

We observe that the zooming-in heuristics provide similar solution sizes with \texttt{Greedy-DisC} in most cases, while their computational cost is smaller, even for \texttt{Greedy-Zoom-In}. More importantly, the Jaccard distance of the adapted solutions for $r^{\prime}$ to the \texttt{Greedy-DisC} solution for $r$ is much smaller than the corresponding distance of the \texttt{Greedy-DisC} solution for $r^{\prime}$ (Figure~\ref{fig:zoomin-jaccard}). This means that computing a new solution for $r^{\prime}$ from scratch changes most of the objects returned to the user, while a solution computed by a zooming-in heuristic maintains many common objects in the new solution. Therefore, the new diverse subset is intuitively closer to what the user expects to receive.

Figure~\ref{fig:zoomout-size} and Figure~\ref{fig:zoomout-accesses} show corresponding results for the zooming-out heuristics. The \texttt{Greedy-Zoom-Out(c)} heuristic achieves the smallest adapted DisC diverse subsets. However, its computational cost is very high and generally exceeds the cost of computing a new solution from scratch. \texttt{Greedy-Zoom-Out(a)} also achieves similar solution sizes with \texttt{Greedy-Zoom-Out(c)}, while its computational cost is much lower. The non-greedy heuristic has the lowest computational cost. Again, all the Jaccard distances of the zooming-out heuristics to the previously computed solution are smaller than that of \texttt{Greedy-DisC} (Figure~\ref{fig:zoomout-jaccard}), which indicates that a solution computed from scratch has only a few objects in common from the initial DisC diverse set.


\begin{figure}
	\vspace{-0.2cm}
	\centering
	\subfloat[Clustered.]{		
		\includegraphics[width=3.9cm]{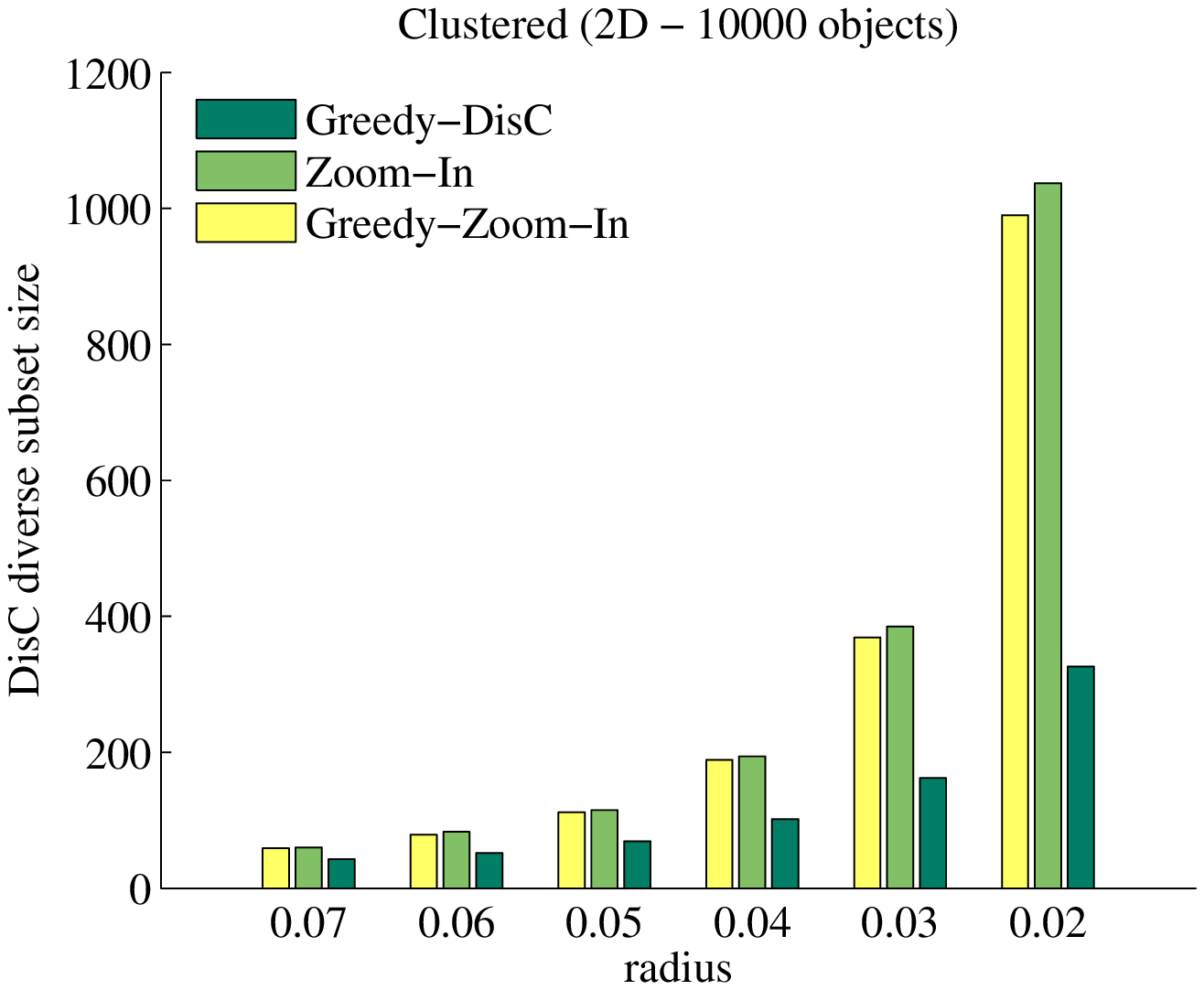}
		\label{fig:zoomin-size-clustered}
	}
	\subfloat[Cities.]{		
		\includegraphics[width=3.9cm]{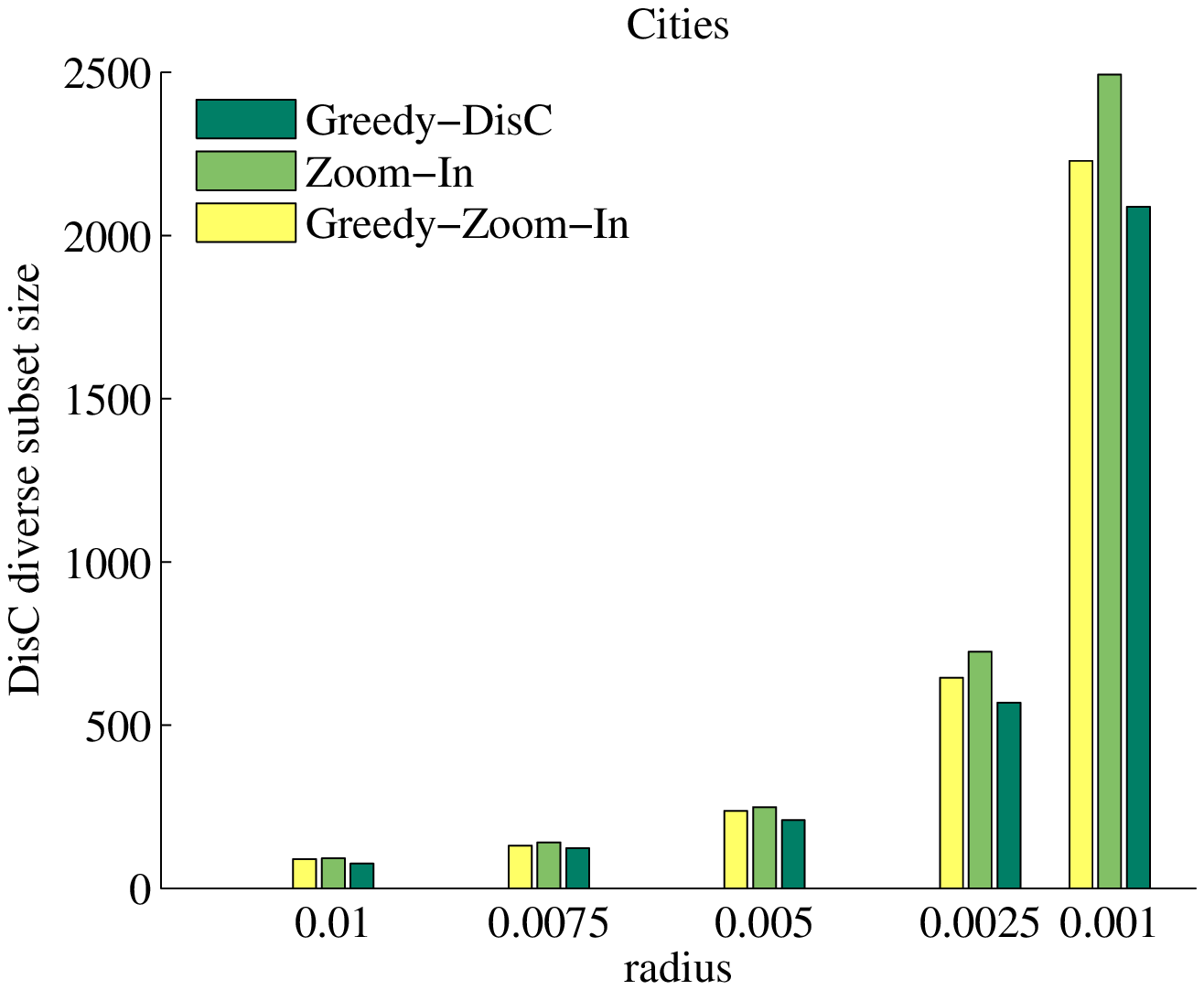}
	}
	\vspace{-0.2cm}
	\caption{Solution size for zooming-in.}
	\label{fig:zoomin-size}
	\vspace{-0.4cm}
\end{figure}

\begin{figure}
	\vspace{-0.1cm}
	\centering
	\subfloat[Clustered.]{		
		\includegraphics[width=3.9cm]{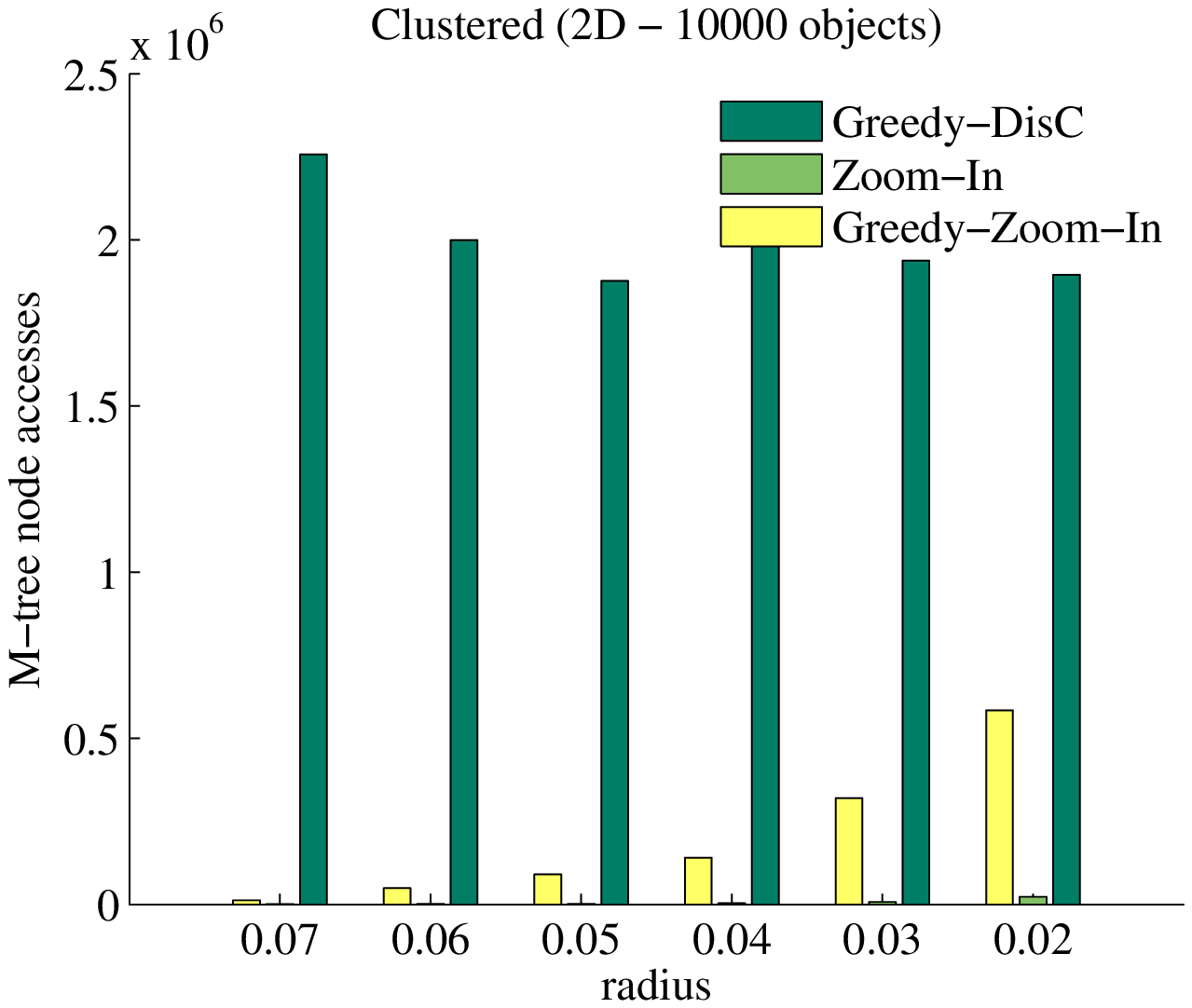}
	\label{fig:zoomin-accesses-clustered}
	}
	\subfloat[Cities.]{		
		\includegraphics[width=3.9cm]{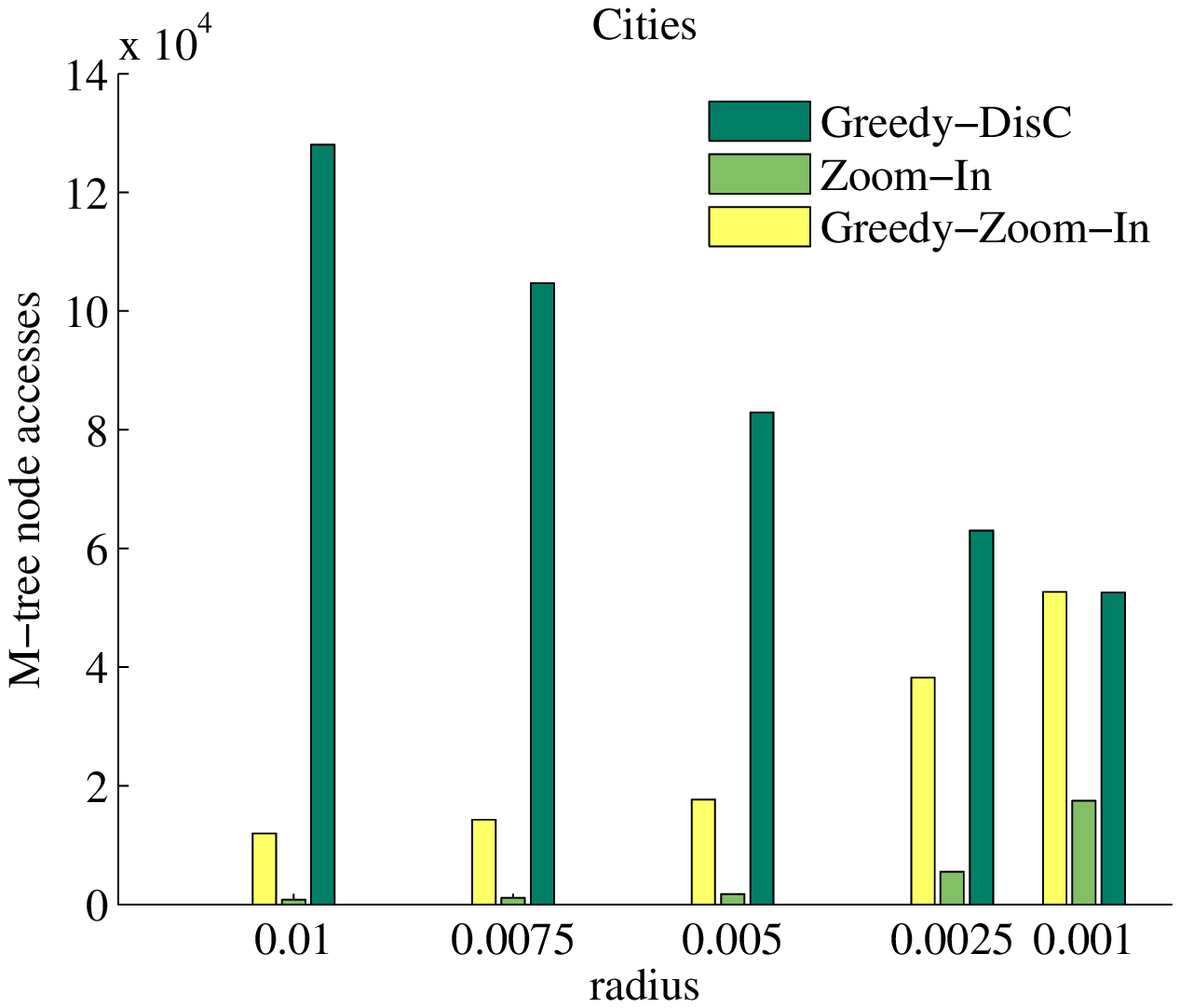}
	}
	\vspace{-0.2cm}
	\caption{Node accesses for zooming-in.}
	\label{fig:zoomin-accesses}
	\vspace{-0.4cm}
\end{figure}

\begin{figure}
	\vspace{-0.1cm}
	\centering
	\subfloat[Clustered.]{		
		\includegraphics[width=3.9cm]{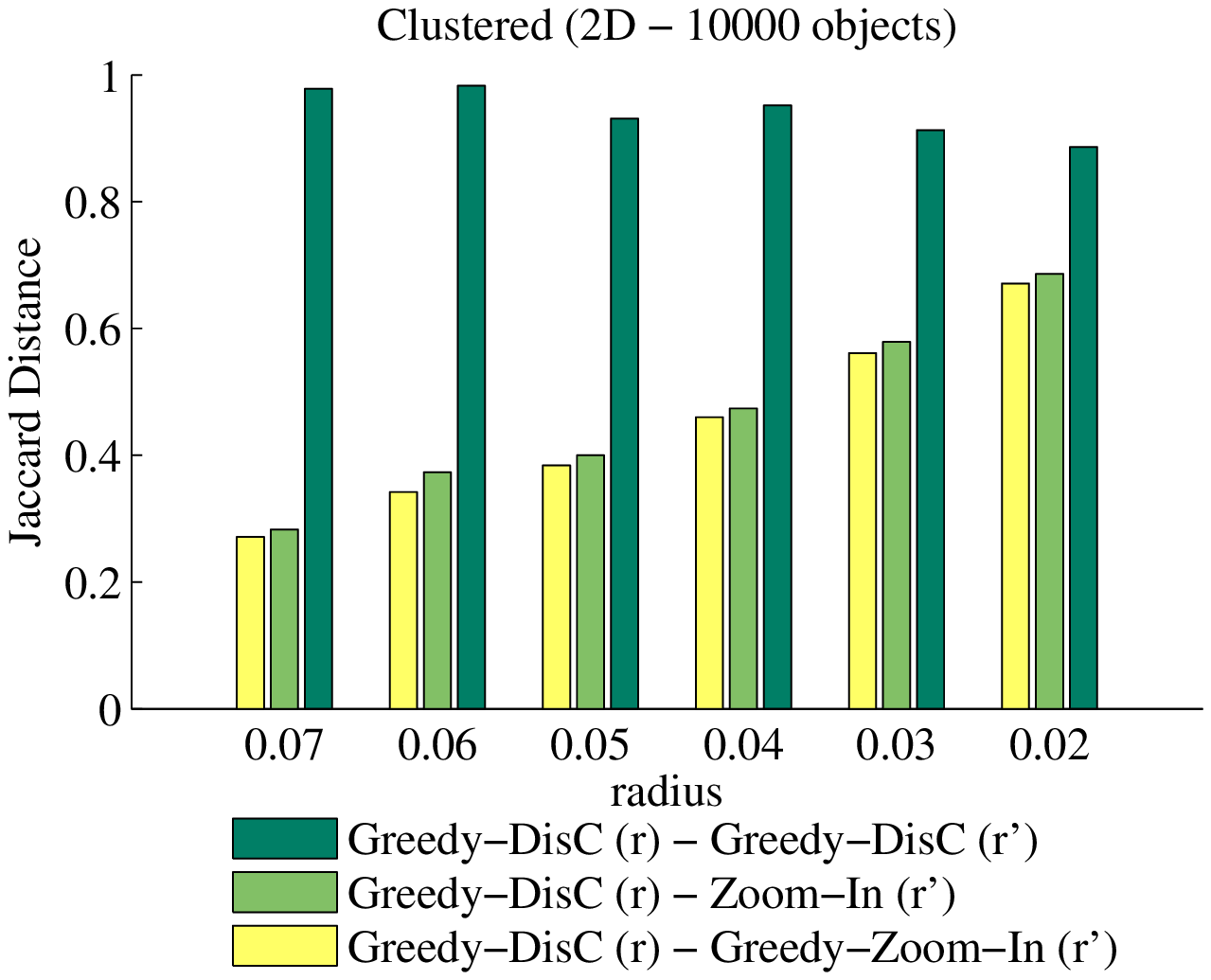}
	}
	\subfloat[Cities.]{		
		\includegraphics[width=3.9cm]{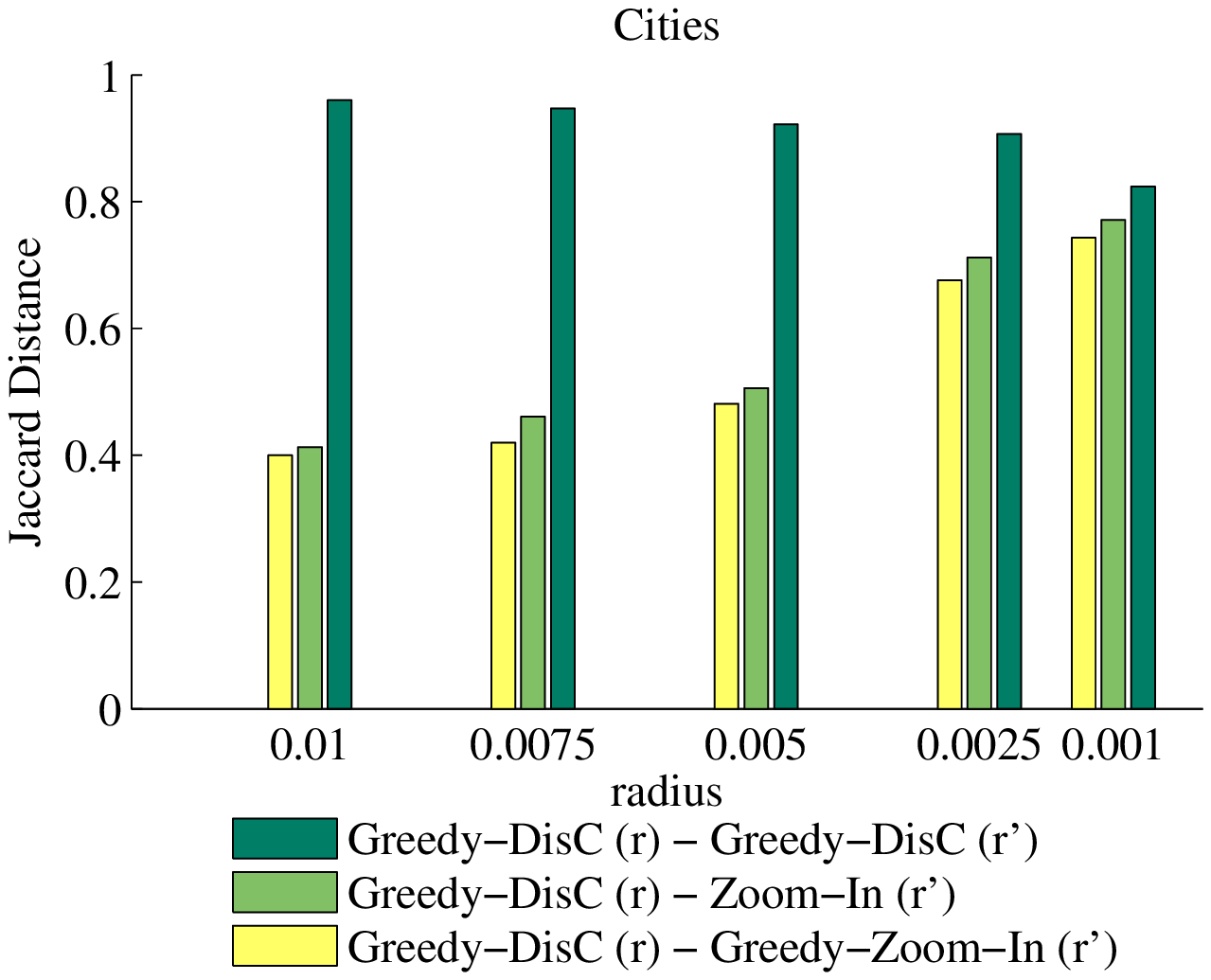}
	}
	\vspace{-0.2cm}
	\caption{Jaccard distance for zooming-in.}
	\label{fig:zoomin-jaccard}
	\vspace{-0.4cm}
\end{figure}

\begin{figure}
	\vspace{-0.1cm}
	\centering
	\subfloat[Clustered.]{		
		\includegraphics[width=3.9cm]{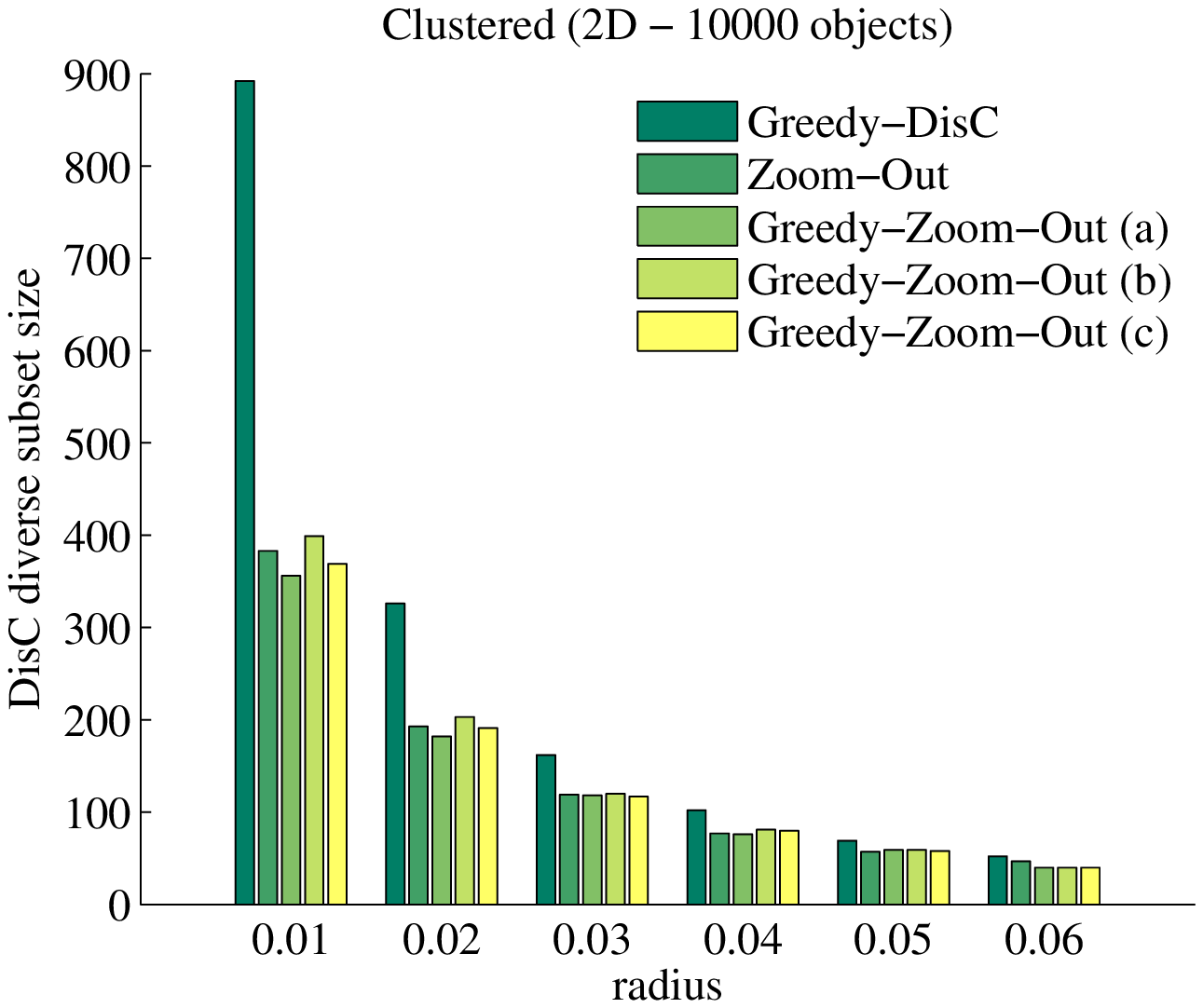}
		\label{fig:zoomout-size-clustered}
	}
	\subfloat[Cities.]{		
		\includegraphics[width=3.9cm]{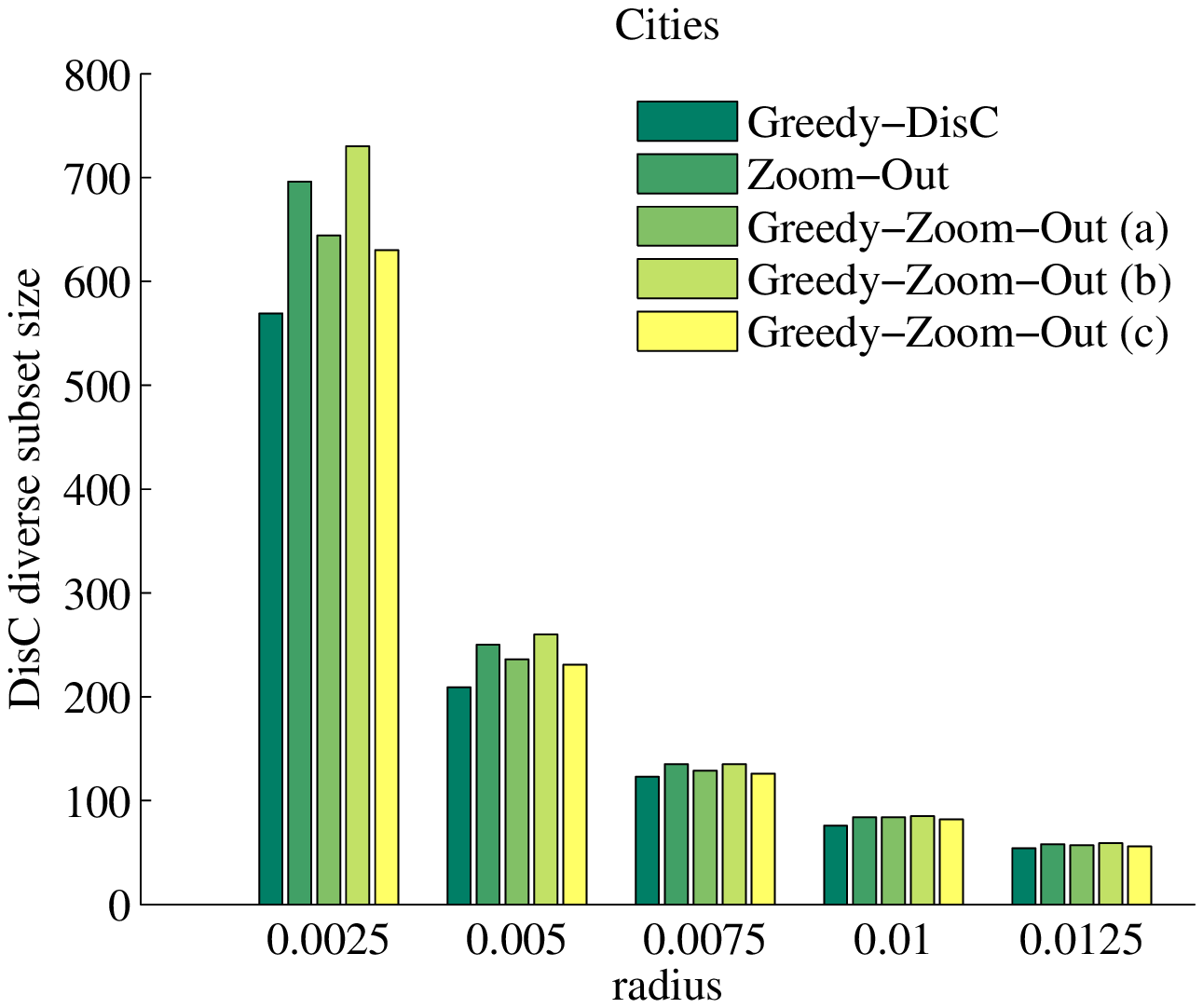}
	}
	\vspace{-0.2cm}
	\caption{Solution size for zooming-out.}
	\label{fig:zoomout-size}
	\vspace{-0.4cm}
\end{figure}

\begin{figure}
	\vspace{-0.1cm}
	\centering
	\subfloat[Clustered.]{		
		\includegraphics[width=3.9cm]{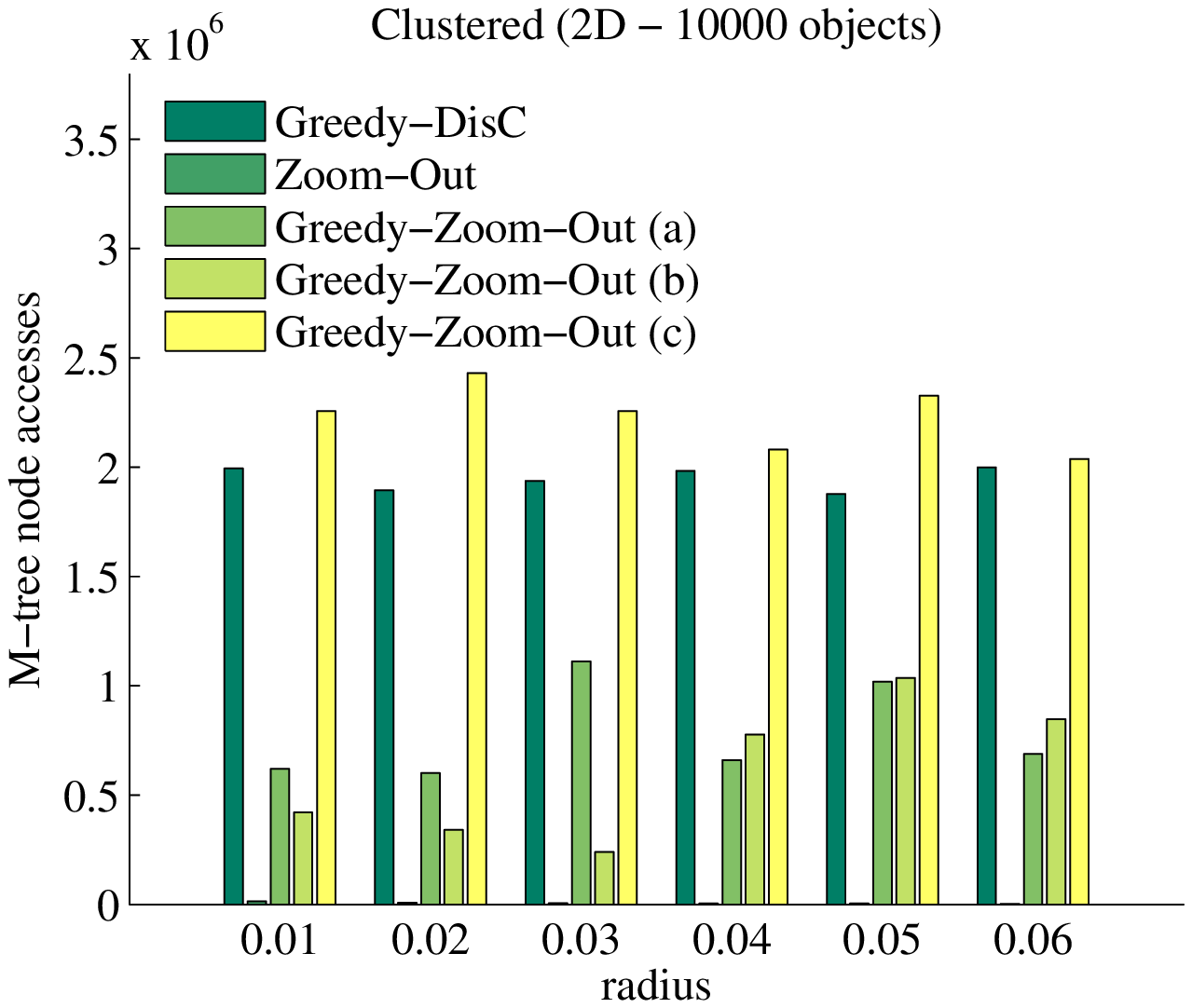}
	\label{fig:zoomout-accesses-clustered}
	}
	\subfloat[Cities.]{		
		\includegraphics[width=3.9cm]{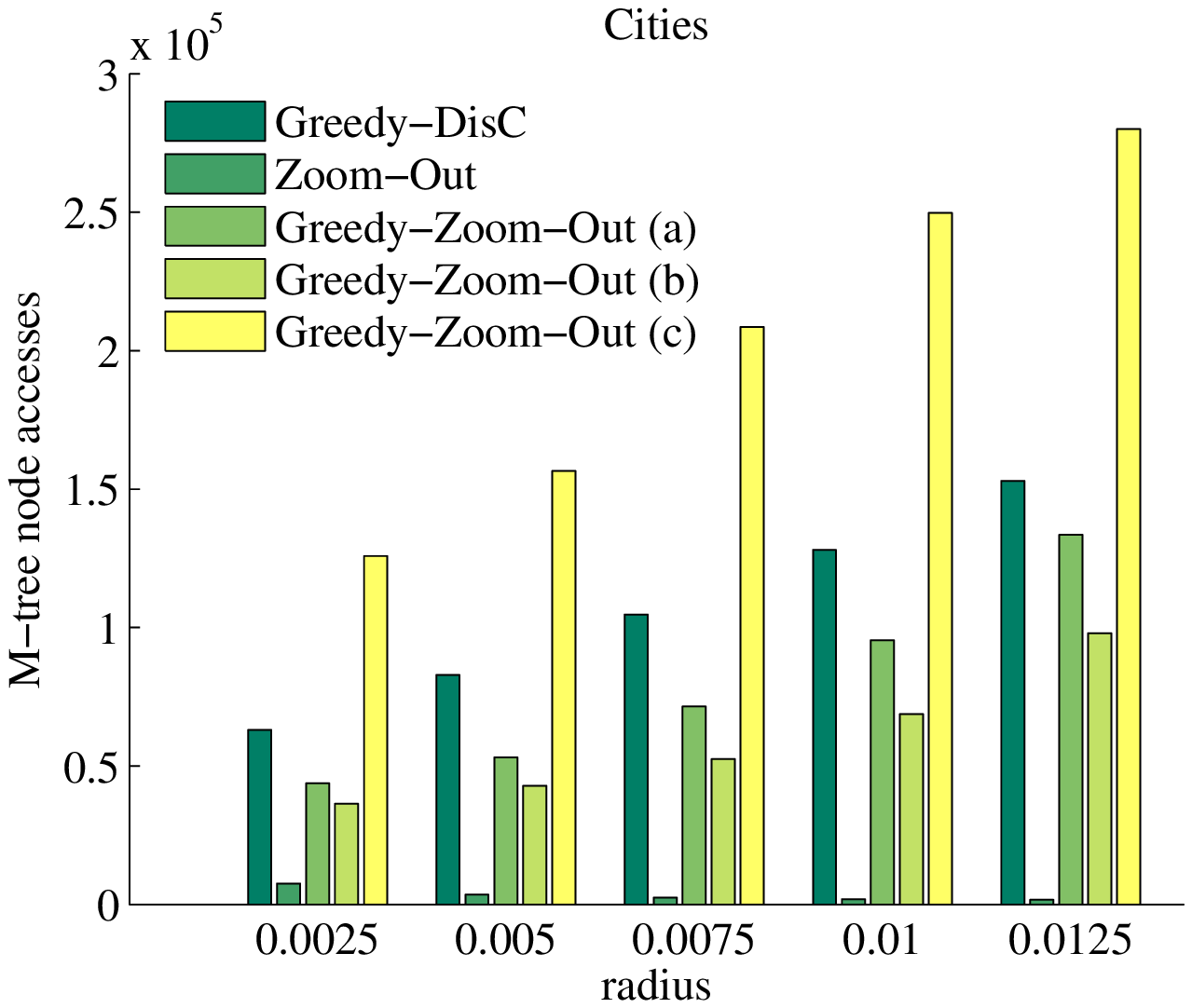}
	}
	\vspace{-0.2cm}
	\caption{Node accesses for zooming-out.}
	\label{fig:zoomout-accesses}
	\vspace{-0.4cm}
\end{figure}

\begin{figure}
	\vspace{-0.1cm}
	\centering
	\subfloat[Clustered.]{		
		\includegraphics[width=3.9cm]{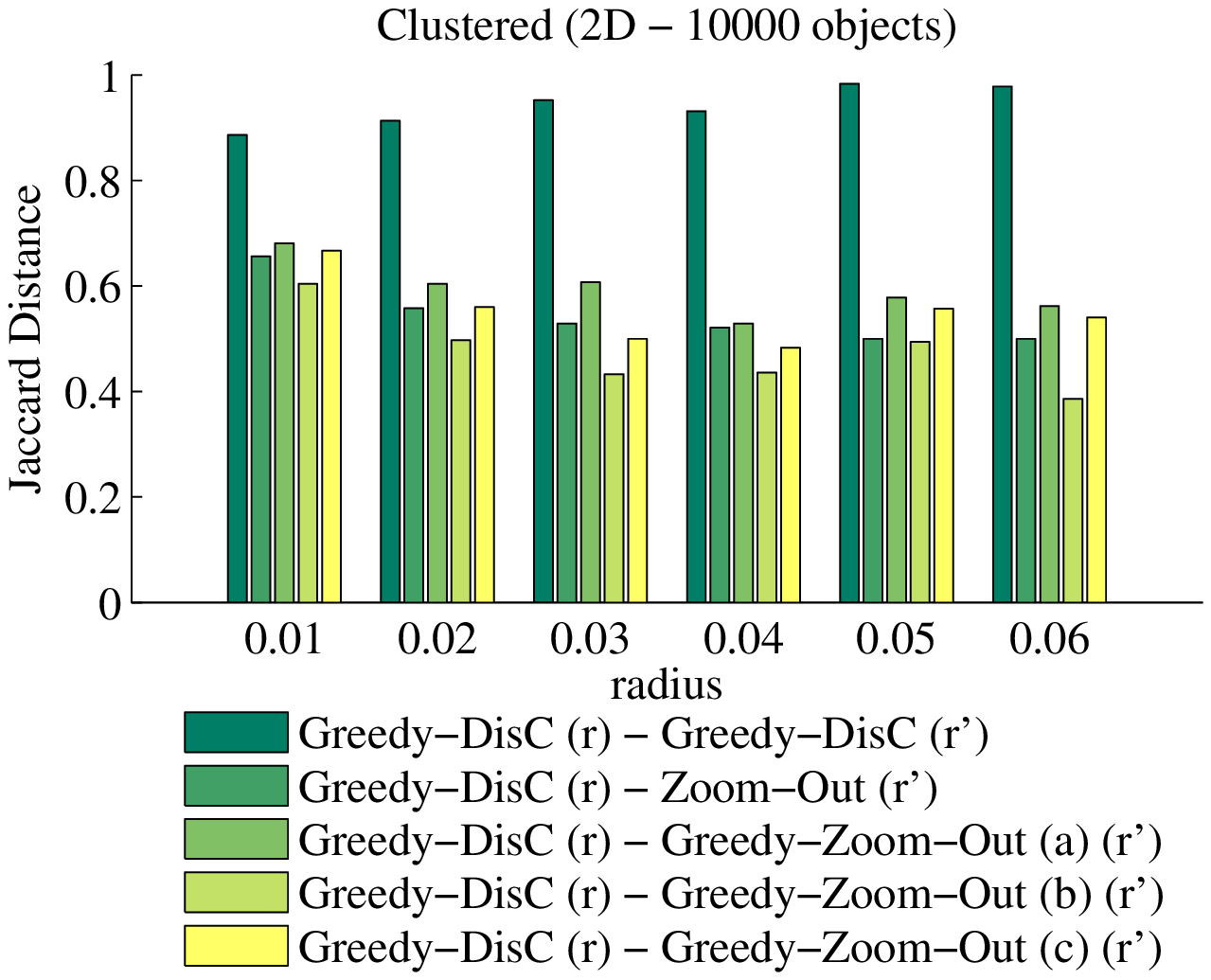}
	}
	\subfloat[Cities.]{		
		\includegraphics[width=3.9cm]{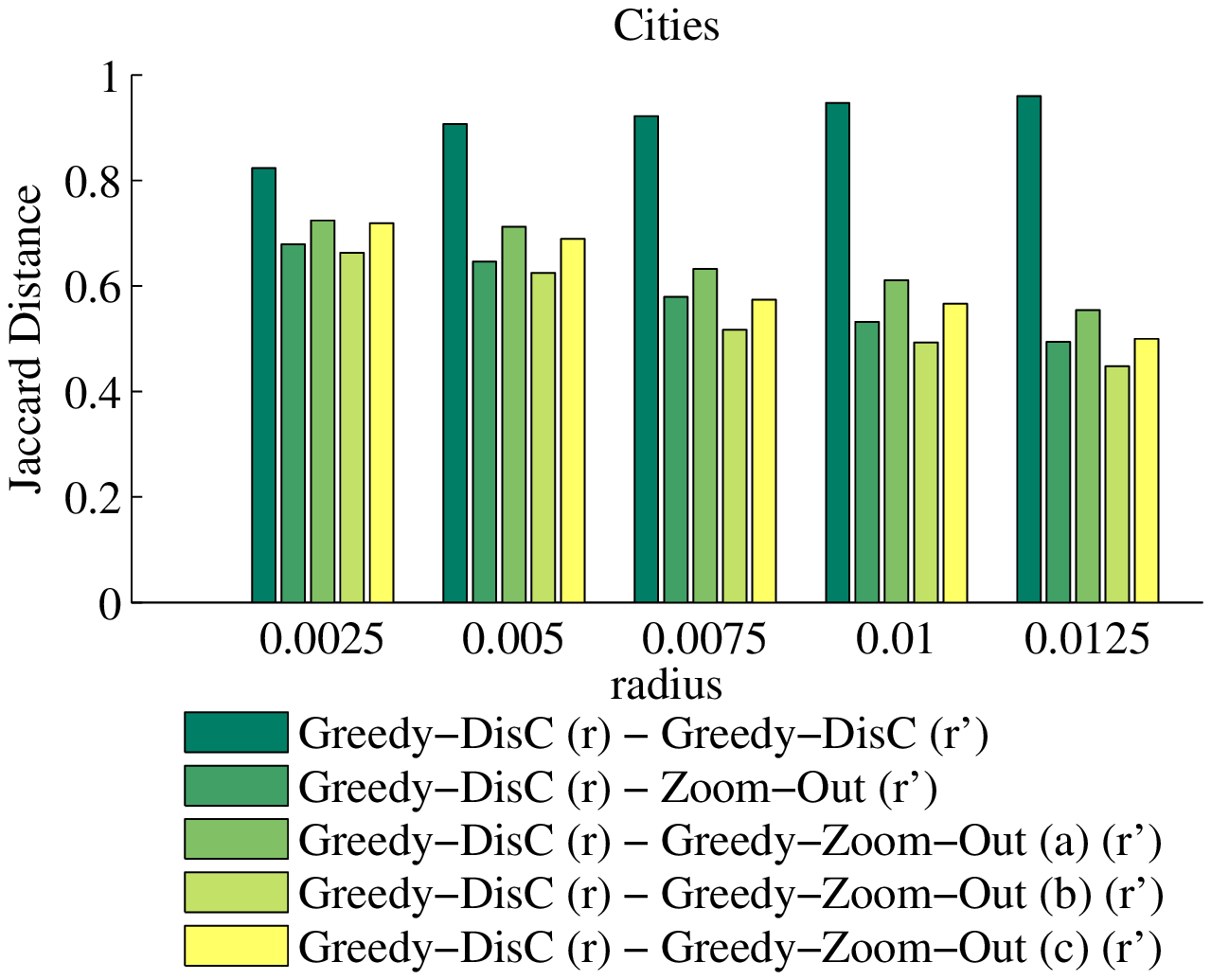}
	}
	\vspace{-0.3cm}
	\caption{Jaccard distance for zooming-out.}
	\label{fig:zoomout-jaccard}
	\vspace{-0.4cm}
\end{figure}

\section{Related Work}
\label{Related Work}

%
%
%
\noindent\textbf{Other Diversity Definitions:} 
Diversity has recently attracted a lot of attention as a means of enhancing user satisfaction \cite{DBLP:conf/icde/VieiraRBHSTT11, koudas, DBLP:conf/www/GollapudiS09, pods12}. Diverse results have been defined in various ways \cite{DBLP:journals/sigmod/DrosouP10}, namely in terms of \textit{content} (or \textit{similarity}), \textit{novelty} and \textit{semantic coverage}. Similarity definitions (e.g., \cite{DBLP:conf/recsys/ZhangH08}) interpret diversity as an instance of the \textit{$p$-dispersion problem} \cite{DBLP:journals/cor/ErkutUY94} whose objective is to choose $p$ out of $n$ given points, so that the minimum distance between any pair of chosen points is maximized. Our approach differs in that the size of the diverse subset is not an input parameter. Most current novelty and semantic coverage approaches to diversification (e.g., \cite{DBLP:conf/sigir/ClarkeKCVABM08, DBLP:conf/wsdm/AgrawalGHI09, DBLP:conf/edbt/YuLA09, sigmod12}) rely on associating a diversity score with each object in the result and then either selecting the top-$k$ highest ranked objects or those objects whose score is above some threshold. Such diversity scores are hard to interpret, since they do not depend solely on the object. Instead, the score of each object is relative to which objects precede it in the rank. Our approach is fundamentally different in that we treat the result as a whole and select DisC diverse subsets of it that fully cover it.

Another related work is that of \cite{DBLP:conf/pakdd/JainSH04} that extends nearest neighbor search to select $k$ neighbors that are not only spatially close to the query object but also differ on a set of predefined attributes above a specific threshold. Our work is different since our goal is not to locate the nearest and most diverse neighbors of a single object but rather to locate an independent and covering subset of the whole dataset.

The problem of diversifying continuous data has been recently considered in \cite{edbt12, wsdm12, DBLP:conf/sigir/MinackSN11} using a number of variations of the \textsc{MaxMin} and \textsc{MaxSum} diversification models.

Finally, another related method for selecting representative results, besides diversity-based ones, is $k$-medoids, since medoids can be viewed as representative objects (e.g., \cite{DBLP:journals/pvldb/LiuJ09}). However, medoids may not cover all the available space. Medoids were extended in \cite{DBLP:conf/cikm/BoimMN11} to include some sense of relevance (priority medoids).


\vspace{0.2cm}
\noindent\textbf{Results from Graph Theory:} 
The properties of independent and dominating (or covering) subsets have been extensively studied. A number of different variations exist. Among these, the \textsc{Minimum Independent Dominating Set Problem} (which is equivalent to the $r$-DisC diverse problem) has been shown to have some of the strongest negative approximation results: in the general case, it cannot be approximated in polynomial time within a factor of $n^{1-\epsilon}$ for any $\epsilon > 0$ unless $P = NP$ \cite{DBLP:journals/ipl/Halldorsson93a}. However, some approximation results have been found for special graph cases, such as bounded degree graphs \cite{DBLP:journals/iandc/ChlebikC08}. In our work, rather than providing polynomial approximation bounds for DisC diversity, we focus on the efficient computation of non-minimum but small DisC diverse subsets. There is a substantial amount of related work in the field of wireless networks research, since a \textsc{Minimum Connected Dominating Set} of wireless nodes can be used as a backbone for the entire network \cite{DBLP:journals/tcs/ThaiZTX07}. Allowing the dominating set to be connected has an impact on the complexity of the problem and allows different algorithms to be designed.

\section{Summary and Future Work}
\label{Summary}

In this paper, we proposed a novel, intuitive definition of diversity as the problem of selecting a minimum representative subset $S$ of a result $\mathcal{P}$, such that each object in $\mathcal{P}$ is represented by a similar object in $S$ and that the objects  included in $S$ are not similar to each other.
Similarity is modeled by a radius $r$ around each object.
We call such subsets $r$-DisC diverse subsets of $\mathcal{P}$.
We introduced adaptive diversification through decreasing $r$, termed zooming-in, and
increasing $r$, called zooming-out.
Since locating minimum $r$-DisC diverse subsets is an NP-hard problem, we
introduced heuristics for computing approximate solutions, including incremental ones for zooming,
and provided corresponding theoretical bounds.
We also presented an efficient implementation based on spatial indexing.

There are many directions for future work. We are currently looking into two different ways of
integrating relevance with DisC diversity. The first approach is by a ``weighted'' variation of the
DisC set, where each object has an associated weight based on its relevance.
Now the goal is to select a DisC subset having the maximum sum of weights.
The other approach is by allowing multiple radii per object, so that relevant objects get a smaller radius
than the radius of less relevant ones. Other potential future directions is implementations using different data structures
and designing algorithms for the online version of the problem.

\normalsize{
\vspace{-0.1cm}
\section*{Appendix}

\noindent\textbf{{Proof of Lemma~\ref{lem:manhattan2}}:}
Let $p_1$, $p_2$ be two independent neighbors of $p$. Then, it must hold that $\angle p_1pp_2$ (in the Euclidean space) is larger than $\frac{\pi}{4}$. We will prove this using contradiction. $p_1$, $p_2$ are neighbors of $p$ so they must reside in the shaded area of Figure~\ref{fig:proofs}\subref{fig:proofs-manhattan}. Without loss of generality, assume that one of them, say $p_1$, is aligned to the vertical axis. Assume that $\angle p_1pp_2$ $\leq$ $\frac{\pi}{4}$. Then $\cos(\angle p_1p_ip_2)$ $\geq$ $\frac{\sqrt{2}}{2}$. It holds that $b \leq r$ and $c \leq r$, thus, using the cosine law we get that $a^2 \leq r^2(2-\sqrt{2})$ (1). The Manhattan distance of $p_1$,$p_2$ is equal to $x + y$ $=$ $\sqrt{a^2 + 2xy}$ (2). Also, the following hold: $x = \sqrt{b^2 - z^2}$, $y = c - z$ and $z$ $=$ $b\cos(\angle p_1p_ip_2)$ $\geq$ $\frac{b\sqrt{2}}{2}$. Substituting $z$ and $c$ in the first two equations, we get $x \leq \frac{b}{\sqrt{2}}$ and $y$ $\leq$ $r - \frac{b\sqrt{2}}{2}$. From (1),(2) we now get that $x+y \leq r$, which contradicts the independence of $p_1$ and $p_2$. Therefore, $p$ can have at most $(2\pi / \frac{\pi}{4}) - 1$ $= 7$ independent neighbors.

\begin{figure}
	\centering
	\hspace{-0.4cm}
	\subfloat[]{		
		\includegraphics[clip=true, trim=0 0 0 0.5cm, height=2.7cm]{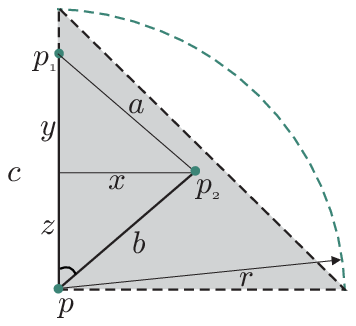}
		\label{fig:proofs-manhattan}
	}\hspace{-0.8cm}
	\subfloat[]{		
		\includegraphics[height=2.7cm]{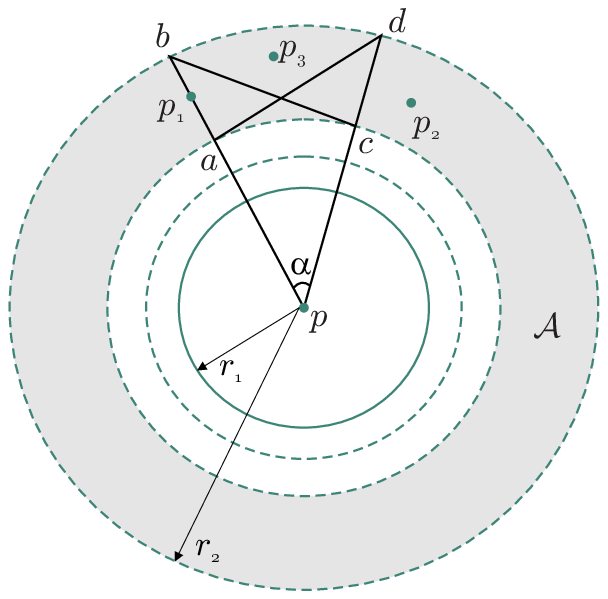}
		\label{fig:proofs-zoom-euclidean}
	}\hspace{-0.1cm}
	\subfloat[]{		
		\includegraphics[height=2.7cm]{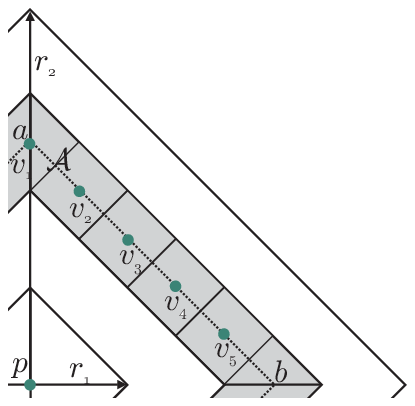}
		\label{fig:proofs-zoom-manhattan}
	}
	\caption{Independent neighbors.}
	\label{fig:proofs}
	\vspace{-0.2cm}
\end{figure}

\vspace{0.3cm}

\noindent\textbf{{Proof of Theorem~\ref{thm:amortized}}:}
We consider that inserting a node $u$ into $S$ has cost 1. We distribute this cost equally among all covered nodes, i.e., after being labeled grey, nodes are not charged anymore. Assume an optimal minimum dominating set $S^*$. The graph $G$ can be decomposed into a number of star-shaped subgraphs, each of which has one node from $S^*$ at its center. The cost of an optimal minimum dominating set is exactly 1 for each star-shaped subgraph. We show that for a non-optimal set $S$, the cost for each star-shaped subgraph is at most $\ln\Delta$, where $\Delta$ is the maximum degree of the graph. Consider a star-shaped subgraph of $S^*$ with $u$ at its center and let $N_r^W(u)$ be the number of white nodes in it. If a node in the star is labeled grey by \texttt{Greedy-DS}, these nodes are charged some cost. By the greedy condition of the algorithm, this cost can be at most $1 / |N_r^W(u)|$ per newly covered node. Otherwise, the algorithm would rather have chosen $u$ for the dominating set because $u$ would cover at least $|N_r^W(u)|$ nodes. In the worst case, no two nodes in the star of $u$ are covered at the same iteration. In this case, the first node that is labeled grey is charged at most $1 / (\delta(u) + 1)$, the second node is charged at most $1 / \delta(u)$ and so on, where $\delta(u)$ is the degree of $u$. Therefore, the total cost of a star is at most:
\vspace{-0.2cm}
\[
\frac{1}{\delta(u) + 1} + \frac{1}{\delta(u)} + \ldots + \frac{1}{2} + 1 = H(\delta(u) + 1) \leq H(\Delta + 1) \approx \ln\Delta
\vspace{-0.2cm}
\]
where $H(i)$ is the $i^{\text{th}}$ harmonic number. Since a minimum dominating set is equal or smaller than a minimum independent dominating set, the theorem holds.

\vspace{0.3cm}

\balance

\noindent\textbf{{Proof of Lemma~\ref{lem:independent-neighbors}}(i):}
For the proof, we use a technique for partitioning the annulus between $r_1$ and $r_2$ similar to the one in \cite{DBLP:journals/tmc/ThaiWLZD07} and \cite{DBLP:conf/icdcs/XingCPR08}. Let $r_1$ be the radius of an object $p$ (Figure~\ref{fig:proofs}\subref{fig:proofs-zoom-euclidean}) and $\alpha$ a real number with $0 < \alpha < \frac{\pi}{3}$. We draw circles around the object $p$ with radii $(2cos\alpha)^{x_p},$ $(2cos\alpha)^{x_p + 1},$ $(2cos\alpha)^{x_p + 2},$ $\ldots,$ $(2cos\alpha)^{y_p - 1},$ $(2cos\alpha)^{y_p}$, such that $(2cos\alpha)^{x_p} \leq r_1$ and $(2cos\alpha)^{x_p + 1} > r_2$ and $(2cos\alpha)^{y_p - 1} < r_2$ and $(2cos\alpha)^{y_p} \geq r_2$. It holds that $x_p = \left\lfloor \frac{\ln r_1}{\ln(2\cos\alpha)} \right\rfloor$ and $y_p = \left\lceil \frac{\ln r_2}{\ln(2\cos\alpha)} \right\rceil$. In this way, the area around $p$ is partitioned into $y_p - x_p$ annuluses plus the $r_1$-disk around $p$.

Consider an annulus $\mathcal{A}$. Let $p_1$ and $p_2$ be two neighbors of $p$ in $\mathcal{A}$ with \ $dist(p_1, p_2) > r_1$. Then, it must hold that $\angle p_1pp_2 > \alpha$. To see this, we draw two segments from $p$ crossing the inner and outer circles of $\mathcal{A}$ at $a$, $b$ and $c$, $d$ such that $p_1$ resides in $pb$ and $\angle bpd = \alpha$, as shown in the figure. Due to the construction of the circles, it holds that $\frac{\left|pb\right|}{\left|pc\right|} = \frac{\left|pd\right|}{\left|pa\right|} = 2\cos\alpha$. From the cosine law for $p\overset{\triangle}{a}d$, we get that $\left|ad\right| = \left|pa\right|$ and, therefore, it holds that $\left|cb\right| = \left|ad\right| = \left|pa\right| = \left|pc\right|$. Therefore, for any object $p_3$ in the area $abcd$ of $\mathcal{A}$, it holds that $\left|pp_3\right| > \left|bp_3\right|$ which means that all objects in that area are neighbors of $p_1$, i.e., at distance less or equal to $r_1$. For this reason, $p_2$ must reside outside this area which means that $\angle p_1pp_2 > \alpha$. Based on this, we see that there exist at most $\frac{2\pi}{\alpha} - 1$ independent (for $r_1$) nodes in $\mathcal{A}$.

The same holds for all annuluses. Therefore, we have at most $(y_p - x_p)\left(\frac{2\pi}{\alpha} - 1\right)$ independent nodes in the annuluses. For $0 < \alpha < \frac{\pi}{3}$, this has a minimum when $\alpha$ is close to $\frac{\pi}{5}$ and that minimum value is $9 \left\lceil \frac{\ln(r_2/r_1)}{\ln(2\cos(\pi/5))} \right\rceil$ $=$ $9 \left\lceil \log_\beta(r_2/r_1) \right\rceil$, where $\beta = \frac{1 + \sqrt5}{2}$.


\vspace{0.3cm}

\noindent\textbf{{Proof of Lemma~\ref{lem:independent-neighbors}}(ii):}
Let $r_1$ be the radius of an object $p$. We draw Manhattan circles around the object $p$ with radii $r_1, 2r_1, \ldots$ until the radius $r_2$ is reached. In this way, the area around $p$ is partitioned into $\gamma$ $=$ $\left\lceil \frac{r_2-r_1}{r_1}\right\rceil$ Manhattan annuluses plus the $r_1$-Manhattan-disk around $p$.

Consider an annulus $\mathcal{A}$. The objects shown in Figure~\ref{fig:proofs}\subref{fig:proofs-zoom-manhattan} cover the whole annulus and their Manhattan pairwise distances are all greater or equal to $r_1$. Assume that the annulus spans among distance $ir_1$ and $(i+1)r_1$ from $p$, where $i$ is an integer with $i > 1$. Then, $|ab|$ $=$ $\sqrt{2\left(ir_1+r_1/2\right)^2}$.  Also, for two objects $p_1$, $p_2$ it holds that $|p_1p_2|$ $=$ $\sqrt{2\left(r_1/2\right)^2}$. Therefore, at one quadrant of the annulus there are $\frac{|ab|}{|p_1p_2|}$ $=$ $2i+1$ independent neighbors which means that there are $4(2i+1)$ independent neighbors in $\mathcal{A}$. Therefore, there are in total $\sum_{i=1}^{\gamma} 4(2i+1)$ independent (for $r_1$) neighbors of $p$.

}

\end{document}